\definecolor{Gray}{gray}{0.85}
\newcommand{\mc}[1]{\mathcal{#1}}
\newcommand{\mb}[1]{\mathbb{#1}}
\definecolor{lxs}{RGB}{138,43,226}
\newcommand{\defn}{\coloneqq}
\newcommand{\ac}{\mathcal{A}}
\newcommand{\ba}{\bm{a}}
\newcommand{\cA}{\mathcal{A}}
\newcommand{\cP}{\mathcal{P}}
\newcommand{\cS}{{\mathcal{S}}}
\newcommand{\mymid}{\,|\,} 
\newcommand\reallywidehat[1]{%
\savestack{\tmpbox}{\stretchto{%
  \scaleto{%
    \scalerel*[\widthof{\ensuremath{#1}}]{\kern-.6pt\bigwedge\kern-.6pt}%
    {\rule[-\textheight/2]{1ex}{\textheight}}%WIDTH-LIMITED BIG WEDGE
  }{\textheight}% 
}{0.5ex}}%
\stackon[1pt]{#1}{\tmpbox}%
}
\newcommand\reallywidecheck[1]{%
\savestack{\tmpbox}{\stretchto{%
  \scaleto{%zh
    \scalerel*[\widthof{\ensuremath{#1}}]{\kern-.6pt\bigwedge\kern-.6pt}%
    {\rule[-\textheight/2]{1ex}{\textheight}}%WIDTH-LIMITED BIG WEDGE
  }{\textheight}% 
}{0.5ex}}%
\stackon[1pt]{#1}{\scalebox{-1}{\tmpbox}}%
}
\title{Tractable Equilibrium Computation in Markov Games through Risk Aversion}
 \author{
	Eric Mazumdar\thanks{Author list is in alphabetical order.}~\thanks{Department of Computing Mathematical Sciences and Department of Economics, California Institute of Technology, CA 91125, USA.}\\
 	Caltech %\\
% 	\texttt{laixis@andrew.cmu.edu}
	\and
	Kishan Panaganti\footnotemark[1]~\thanks{Department of Computing Mathematical Sciences, California Institute of Technology, CA 91125, USA.} \\
	Caltech
 	\and
    Laixi Shi\footnotemark[1]~\footnotemark[3] \\ 	 
	Caltech
 	} 
\date{\today}
\begin{document}
%\sloppy
\theoremstyle{plain} \newtheorem{lemma}{\textbf{Lemma}}
\newtheorem{proposition}{\textbf{Proposition}}
\newtheorem{theorem}{\textbf{Theorem}}
\newtheorem{assumption}{Assumption}
\newtheorem{corollary}{Corollary}[theorem] 
\newtheorem{definition}{Definition}
\newtheorem{example}{Example}

\theoremstyle{remark}\newtheorem{remark}{\textbf{Remark}}

\maketitle
 
  \sloppy

\begin{abstract}
A significant roadblock to the development of principled multi-agent reinforcement learning is the fact that desired solution concepts like Nash equilibria may be intractable to compute. To overcome this obstacle, we take inspiration from behavioral economics and show that---by imbuing agents with important features of human decision-making like risk aversion and bounded rationality---a class of risk-averse quantal response equilibria (RQE) become tractable to compute in all $n$-player matrix and finite-horizon Markov games.  In particular, we show that they emerge as the endpoint of no-regret learning in suitably adjusted versions of the games. Crucially, the class of computationally tractable RQE is independent of the underlying game structure and only depends on agents' degree of risk-aversion and bounded rationality. To validate the richness of this class of solution concepts we show that it captures peoples' patterns of play in a number of 2-player matrix games previously studied in experimental economics. Furthermore, we give a first analysis of the sample complexity of computing these equilibria in finite-horizon Markov games when one has access to a generative model and validate our findings on a simple multi-agent reinforcement learning benchmark.
\end{abstract}

\noindent \textbf{Keywords:} behavioral economics, risk-aversion, multi-agent reinforcement learning, quantal response, bounded rationality.

\allowdisplaybreaks

\setcounter{tocdepth}{2}
% \tableofcontents

\section{Introduction} \label{sec:intro}
Machine learning algorithms are increasingly being deployed in dynamic environments in which they interact with other agents like people or other algorithms. Often, these agents have their own goals which may not be aligned with those of the algorithm---making the interactions \emph{strategic}. Such situations are naturally modeled as \emph{games} between rational agents and their prevalence in a wide range of application areas has driven a surge in research interest in learning in games~\cite{bianchi_prediction} and multi-agent reinforcement learning~\cite{zhang2021multi} in recent years. Indeed, real-world applications of these problems range from the training of large language models~\cite{munos2023nash} to the decentralized control of the smart grid \cite{mohsenian2010autonomous}, autonomous driving \cite{kannan2017fairness}, and financial trading \cite{wellman1998market} among many others.

When viewed through the lens of game-theory, many of these problems can be cast as problems of \emph{equilibrium computation} under varying information structures, where the equilibrium represents a stable outcome for rational agents. The most common equilibrium concept is that of a Nash equilibrium (NE) \cite{nash1950non}: a solution under which no rational agent has an incentive to unilaterally seek to improve their outcome. Despite its popularity as a solution concept, computing a NE outside of highly structured games is known to be computationally intractable \cite{daskalakis2013complexity} even for two-player matrix games. 

While relaxations of NE like (coarse) correlated equilibria (CCE) are known to be more tractable to compute, they also have their limitations. Indeed while CCE can be computed through the use of no-regret learning algorithms~\cite{bianchi_prediction}, the set of CCE can be large (introducing an additional problem of equilibrium selection) and may include strictly dominated strategies~\cite{no_regret_dominated} which means that they cannot necessarily be rationalized by individual agents~\cite{DEKEL1990243}. Furthermore, a refinement of CCE---stationary CCE--- are also known to be PPAD-hard to compute in dynamic general-sum games~\cite{MarkovComplex}.

Beyond these hardness results, solution concepts like NE and CCE also fail to be predictive of what strategies people play in games~\cite{mckelvey1995quantal,prediction1}, with people being observed to be imperfect optimizers~\cite{StochasticGameTheory,noisy_decisions} and risk-averse~\cite{goeree2003risk} when confronted with game theoretic scenarios. This aligns with celebrated work in behavioral economics and mathematical psychology which has repeatedly shown that dominant features of human decision-making are  a failure to perfectly optimize~\cite{Luce59} and risk-aversion~\cite{Kaheman,Tversky1992}. 

The first observation is often referred to as \emph{bounded rationality} which posits that individuals are naturally prone to making mistakes and often fail to be perfectly optimal~\cite{Luce59}. This is often captured in games through the idea of a \emph{quantal response equilibrium} (QRE) \cite{mckelvey1995quantal}. 

The second observation can be attributed to the fact that players typically face uncertainty and risk in their decisions, stemming from environmental uncertainties like unknown future events, noise, or even other players' lack of optimality. The presence of these uncertainties can lead people to prefer risk-averse strategies, i.e., strategies which give more certain outcomes at the cost of lower expected returns \cite{gollier2001economics}. Interestingly, there is evidence that neither of these properties alone can account for people's patterns of play observed in controlled experiments~\cite{goeree2003risk,GoereeAuction}, and that models of decision-making that incorporate \emph{both} of these features have the best predictive power~\cite{goeree2003risk}.

Motivated by these findings, in this paper we formulate and study games in which agents are risk-averse and have bounded rationality and study the computational tractability of the natural equilibrium concept in this class of games: a risk-averse quantal response equilibrium (RQE). At first glance, the introduction of these features of human decision-making into games can break existing game theoretic structures. However, by relying on dual formulations of risk we show how, for a large range of degrees of risk-aversion and bounded rationality, RQE are computationally tractable in \emph{all} n-player matrix games. Crucially, the class of computationally tractable RQE is independent of the original game structure and depends only on players' degrees of risk aversion and bounded rationality. We then extend the result to finite-horizon Markov games in which agents interact in a Markov decision process (MDP) and provide a finite-sample complexity guarantee on how many samples one would need to approximate a RQE in Markov games.

\begin{figure}[t!]%{\columnwidth}
		\centering
    % \vspace{-0.5cm}
	\includegraphics[width=\linewidth]{./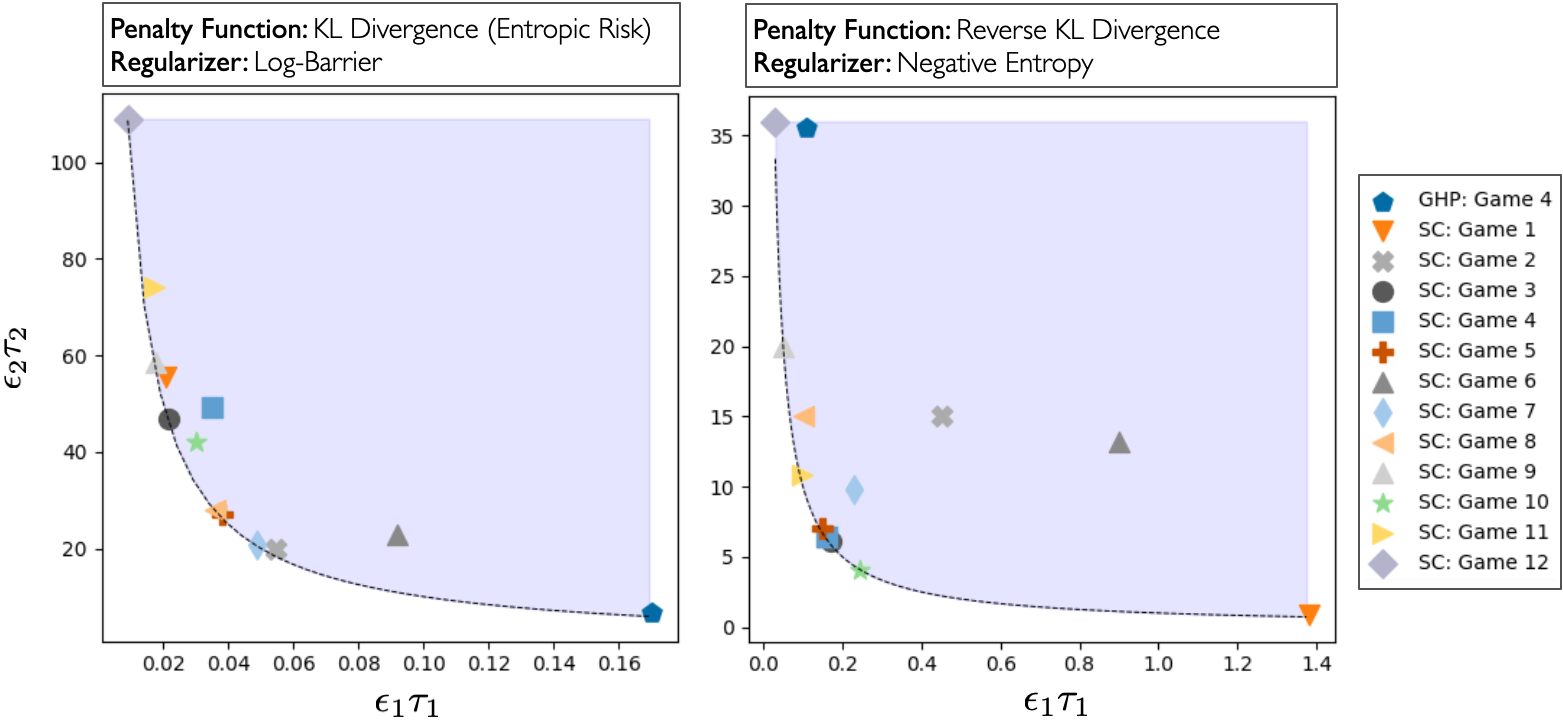}
    \vspace{-0.3cm}
	\caption{ The shaded blue region depicts the regime of risk-aversion and bounded rationality preferences that allow for computationally tractable RQE in all $2$-player games as shown in \cref{thm:2player}.
    The markers \textit{GHP: Game 4} \cite{goeree2003risk} and \textit{SC: Game 1-12} \cite{selten2008stationary} represent the necessary parameter values required to recreate the average strategy played by people in various $2$-player games in observational data up to $1\%$ accuracy.}
    % \vspace{-5mm}
	\label{fig:kl-rkl-plot}
\end{figure}

These results suggest that designing multi-agent reinforcement learning (MARL) agents with realistic features of human decision-making can allow for a computationally tractable equilibrium concept in all finite action games and finite horizon Markov games. To emphasize the practical relevance of this theoretical result, we show how that the regime for which RQE are computationally tractable captures real-world data from behavioral economics on people's observed pattern of play in $13$ different games. This is illustrated in \cref{fig:kl-rkl-plot}, where the blue region represents the set of RQE that are computationally tractable which is a function of agents'  degree of risk-aversion ($\tau_1/\tau_2$), and level of bounded rationality ($\epsilon_1/\epsilon_2$).

% based on both behaviors, we focus on computing QRE of risk-averse games
% Inspired by above two observations of behavior economics --- risk-aversion and bounded rationality, in this work, we consider risk-averse games, to search for a solution concept called quantal Nash equilibrium (RQE) --- NE among the class of Quantal response strategies. Fortunately, we find that by incorporating the risk measure and rationally restricted strategy class, the problems of computing RQE become computationally tractable under certain conditions that are independent of the original game without risk-aversion (the blue region in \cref{fig:kl-logb-plot}). Note that the computation tractability advantages independent from the original game structure cannot happen with either solely  risk-aversion or rationally restricted strategy class (will be introduced in Section~\ref{sec:quantal}), but a feature of the combination. 
% \kp{@Laixi, can you please add a citation for this claim} \lxs{the details should appear in section~\ref{sec:quantal}, I added the link here}

% our main contributions: we develop matrix game results,
\paragraph{{\bf Contributions:}} To begin, in Section~\ref{sec:Matrix} we focus on multi-player general-sum matrix games. We first introduce risk-aversion into matrix games by allowing agents to be risk-averse to the randomness introduced into the game by their opponents. To incorporate risk-aversion we make use of a general class of convex risk measures and present two forms of risk aversion: \emph{aggregate risk-aversion} and \emph{action-dependent risk aversion}---with the latter yielding less conservative solutions at the cost of a more complex formulation and analysis. Given this setup, we show that risk-averse Nash equilibria must exist in \emph{both} formulations in all games and for any convex measures of risk. 

We then introduce bounded rationality into the games by assuming that agents optimize over quantal responses instead of the entire probability simplex. In Section~\ref{sec:computation_matrix} we show that computing the QRE of both types of risk-averse matrix games---i,e., the RQE described above---can be done in polynomial time as long as the level of risk-aversion and class of quantal response functions satisfies certain conditions.  Importantly, these conditions are independent of the underlying game and only depend on the class of risk metrics and quantal responses under consideration. Thus, RQE not only incorporates features of human decision-making but are also more amenable to computation than QRE or NE in matrix games.

In Section~\ref{sec:Markov}, we extend the study to multi-player general-sum Markov games. In this setting we consider the case in which players are risk-averse not only to the randomness introduced by their opponents, but also to the randomness in the environment. We again show that RQE can be computed efficiently when the environment is known and one has full information over the game. In addition, we consider the multi-agent reinforcement learning setting in which the environment is unknown and one needs to learn by sampling. We provide a first algorithm and a finite-sample guarantee for learning an approximate RQE in such MARL problems. Finally, we conduct experiments of the dynamic games in gridworlds to illustrate the effect of risk-aversion in sequential games and validate our findings.
% In Appendix~\ref{appen:exp} we present experimental results to validate our findings and discuss related work on risk-senstive MARL.

\subsection{Related Works}

Before presenting our results, we put our work into context, and discuss related works here.

\paragraph{Computational tractability of game theoretic solution concepts.}
This work proposes a new solution concept for game theoretic settings that is computationally tractable, yet retains many of the desirable properties of classical equilibrium concepts. This general question emerged from the finding that computing a Nash equilibrium---perhaps the most natural solution concept for a game between rational self-interested agents---is PPAD-hard~\cite{daskalakis2013complexity}, even for two-player general-sum matrix games. Despite this negative result, a large amount of subsequent work has focused on understanding the classes of games in which one can compute, approximate, or learn Nash equilibria efficiently. This is often done by assuming additional structure on the players' utilities and their relationships to one another, with large classes of games being zero-sum or competitive games, zero-sum polymatrix games~\cite{eq_collapse1,eq_collapse_Markov}, monotone games~\cite{golowich2020tight}, smooth games~\cite{roughgarden2015intrinsic}, or socially concave games~\cite{even2009convergence}.

In games without such structure however, the natural targets for computation and learning became correlated~\cite{CCE1} and coarse correlated equilibria ~\cite{AUMANN197467,CE2}(CE and CCE respectively), both of which can be shown to emerge as the endpoint of no-regret learning and are thus considered to be computationally tractable targets for the design of learning algorithms. Despite this desirable property, the two concepts have significant drawbacks. Indeed both CE and CCE require some form of coordination between players to implement, introduce a highly nontrivial equilibrium selection problem~\cite{bianchi_prediction}, and may have support on dominated strategies~\cite{no_regret_dominated}. Furthermore, in the dynamic game context of Markov games, stationary  CE and CCE are also computationally intractable to compute~\cite{MarkovComplex}.

More recently, a new equilibrium concept---a smoothed Nash equilibrium--- has been proposed as an alternative to these other equilibrium concepts~\cite{daskalakis2023smooth} and motivated by similar considerations of individual and independent rationalizability and computational tractability. By applying ideas from smoothed analysis to the problem of computing Nash equilibria the authors show that one can efficiently find approximate classes of smoothed Nash equilibria---though to the best of our knowledge this cannot be done in a decentralized way. 

Our approach is orthogonal and is rooted in giving MARL agents a foundation rooted in behavioral economics by imbuing them with a realistic feature of human decision-making: risk-aversion. The question of computational tractability of risk-averse Nash equilibria has been analyzed in~\cite{fiat2010players}. The work shows that if agents are risk-averse with respect to all the randomness in the game (including their own) then a risk-averse Nash equilibrium may not even exist in mixed strategies, and even understanding if such equilibria exist can be NP-complete. Our formulation overcomes this by incorporating risk-aversion in a different way. Indeed, we show that when agents are risk-averse \emph{only to the randomness introduced to their opponents (and the environment)} then the risk-averse Nash equilibria will \emph{always} exist. We note that such formulations of risk-aversion are common in the literature on risk-sensitive control~\cite{shenControl,BorkarRisk} and risk-sensitive reinforcement learning~\cite{shen2014risk} where agents are implicitly presumed to be risk-averse only to the randomness that is outside their control (i.e., the environment). Furthermore we show that introducing bounded rationality into the game allows a class of risk-averse quantal response equilibria (RQE) to be computationally tractable in \emph{all} finite action and finite-horizon Markov games.

\paragraph{Predictive power of equilibrium concepts.} Another driving force in moving beyond the Nash and correlated equilibrium concepts stems from their lack of predictive power in experimental settings (see e.g., ~\cite{Experiments1,Experiments2,Experiment3,prediction1,mckelvey1995quantal}). To address this, a line of work originating in economics seeks to understand the natural solution concepts in game where players have behaviorally plausible restrictions to their strategy spaces, and to study whether such equilibria were better predictors of human play than Nash or (coarse) correlated equilibria~\cite{goeree2003risk,GoereeAuction,Camerer}. The most common restriction is that players have \emph{bounded rationality},---i.e., they may fail to perfectly optimize---a model with roots in mathematical psychology~\cite{Luce59}. Under this restriction, a natural equilibrium concept that emerged was that of a quantal response equilibrium (QRE) which induces bounded rationality by either assuming that the players are rational in a stochastically perturbed version of the game or equivalently that they optimize a regularized version of their utility~\cite{mckelvey1995quantal,mckelvey1998quantal,quantal2}. Beyond their use as a better model for human decision-making in games, QRE have also increasingly been adopted as a solution concept in multi-agent reinforcement learning and learning in games~\cite{quantal1,quantal2,cen2021fast,quantal3,evans2024learning,jacob2022modeling} due to their links with KL and entropy regularized reinforcement learning. Despite these developments QRE are not computable in all games. Indeed the class of QRE or equivalently the level of bounded rationality needed for computational tractability depends on the underlying game structure which may not be known a priori. In contrast we show that the addition of risk aversion allows for the \emph{same} class of quantal response equilibria to be computationally tractable to compute in all finite action games and finite-horizon Markov games. Furthermore we show that this class of risk-averse QRE is nontrivial and can capture human data better than risk-neutral QRE---a finding which is in line with findings in behaviorial economics~\cite{goeree2003risk,GoereeAuction}.

\paragraph{Risk-averse and robust multi-agent reinforcement learning.}
Our work builds on and provides an additional justification for risk-sensitive (multi-agent) reinforcement learning. This line of work has roots going back to seminal work by Jacobson on risk-sensitive control~\cite{Jacobsen}, and more recently in risk-sensitive reinforcement learning~\cite{shen2014risk}. In these works, the aim is to find a controller or policy for a system that accounts for stochasticity or uncertainty in the environment or system in a more nuanced way than risk-neutral approaches like optimal control or reinforcement learning~\cite{BorkarRisk}. Due to classic duality results (see e.g., ~\cite{zhang2024soft}) this line of work is closely related to the literature on robust control and distributionally robust reinforcement learning~\cite{iyengar2005robust,panaganti2020robust,shi2022distributionally,xu-panaganti-2023samplecomplexity} which seeks to find solutions that are robust to worst case environmental disturbances. 

Our work rigorously extends these formulations to the multi-agent regime though it is not the first to consider risk-aversion in MARL. Indeed, risk-sensitive MARL has been the focus of several recent works (e.g., \cite{yekkehkhany2020risk,wang2024learning,gao2021robust,slumbers2023game}). Several provide rigorous definitions of risk-averse equilibria and some guarantees on their computation by assuming structure on the risk-adjusted game. Oftentimes this is done by assuming that the risk-adjusted game is itself zero-sum~\cite{yekkehkhany2020risk}, monotone~\cite{wang2024learning}, or that it satisfies other strong conditions~\cite{gao2021robust}. Other works are more empirical in nature~\cite{ganesh2019reinforcement,eriksson2022risk,zhang2021mean,qiu2021rmix,shen2023riskq,slumbers2023game}, showing the promise of risk-averse algorithms for MARL. 

One last closely related line of work is the emerging literature on robust multi-agent reinforcement learning~\cite{zhang2020robust,he2023robust,shi2024sample,blanchet2024double}. Once again due to duality arguments, these works can be seen as tackling a similar problem to the risk-averse MARL problem. The focus of these previous works, however, is on robustness in the face of only environmental uncertainties (and not opponent strategies), and questions of existence and computational tractability are either assumed away or the focus is on extensions of correlated equilibrium concepts. A recent related work in this literature analyzed the  computational tractability of robust Nash equilibria in Markov games, but only provided strong guarantees on the zero-sum regime, showing that computing such equilibria in general is PPAD-hard~\cite{mcmahan2024roping}. 

To the best of our knowledge, no previous work in either of these literatures highlights the broad benefits afforded by risk-aversion in MARL in terms of computational tractability of equilibria. In our work we show that risk-aversion (and by extension distributional robustness), when combined with bounded rationality yields a computationally tractable class of individually rationalizable equilibria in \emph{all} finite-horizon $n$-player Markov games. Furthermore we show that these equilibria can be computed using no-regret learning algorithms.

\paragraph{Notations:}
We use $\Delta_{n}$ to denote probability simplex of size $n$. In addition, we denote $[N] = \{ 1,2,\cdots, N\}$ for any positive integer $N >0$. We denote $x = \big[x(s,a)\big]_{(s,a)\in\cS\times\cA}\in \mathbb{R}^{SA}$ (resp.~$x = \big[x(s)\big]_{s\in\cS}\in \mathbb{R}^{S}$) as any vector that constitutes certain values for each state-action pair (resp.~state).

\section{Matrix Games}\label{sec:Matrix}

To begin, we consider $n$-player general-sum finite-action games. In these games, each player $i$ has access to a (finite) action set $\mathcal{A}_i$ with $A_i=|\mathcal{A}_i|$ pure strategies. For each tuple of joint strategies $a=(a_1,...,a_n)\in \mathcal{A} \defn \prod_{i=1}^n\mathcal{A}_i$ each player has an associated reward or utility $R_i(a)$. As is common in the study of these games, we consider the case where players play over mixed strategies and seek to maximize their expected utility. Player $i$'s expected utility in this case can be written as a function $U_i:\mathcal{P}\rightarrow \mathbb{R}$ which can be written as:
\begin{align}
\begin{split}\label{eq:exputility}
    U_i(\pi_1,...,\pi_n)= \mathbb{E}_{a\sim \pi}[R_{i}(a)]
    \end{split}
\end{align}
where  $\pi=(\pi_1,...,\pi_n) \in \mathcal{P}=\prod_{i=1}^n \Delta_{A_i}$ is the joint strategy of all the players. For ease of exposition we often write the utility as $U_i(\pi_i,\pi_{-i})$ where $\pi_{-i}=(\pi_1,...,\pi_{i-1},\pi_{i+1},...,\pi_n)$ represents the joint strategies of all players \emph{other} than $i$.

We note that the expected utility in finite-action games has a multi-linear form and that $U_i(\pi_i,\pi_{-i})$ is convex (actually linear) in $\pi_i$ for fixed $\pi_{-i}$. To see this, one can observe that in two player games, these utility functions take the simpler, bilinear forms:
\begin{align}
\begin{split}
    U_1(\pi_1,\pi_2)= \mathbb{E}_{\substack{a_1\sim \pi_1\\ a_2\sim x_2}}[R_{1}(a_1,a_2)]= \pi_1^T R_1 \pi_2\\
    U_2(\pi_1,\pi_2)= \mathbb{E}_{\substack{a_1\sim \pi_1\\ a_2\sim \pi_2}}[R_{2}(a_1,a_2)]= \pi_2^T R_2 \pi_1,
    \end{split}
\end{align}
where $R_1 \in \mb{R}^{A_1 \times A_2} $ and $R_2 \in \mb{R}^{A_2 \times A_1}$ are payoff matrices representing the utility or reward associated to each joint pure strategy.

A natural outcome in this class of games is the notion of a (mixed) Nash equilibrium. 

\begin{definition}[Nash Equilibrium]
    A (mixed) Nash equilibrium of a $n$-player general-sum finite-action game is a joint strategy $\pi^*=(\pi_1^*,...,\pi_n^*)\in \mathcal{P}$ such that no player has any incentive to unilaterally deviate---i.e,. for all $i=1,...,n$:
    \begin{align*}
    U_i(\pi_i^*,\pi_{-i}^*)\ge U_i(\pi_i,\pi_{-i}^*) \quad \forall \pi_i \in \Delta_{A_i}
\end{align*}
\end{definition}

While such games are well known to admit at least one Nash equilibrium in mixed strategies, the Nash equilibrium may be intractable to compute. Furthermore, a preponderance of empirical evidence suggests that people do \emph{not} play their Nash strategies and in fact are risk-averse and potentially bounded rational in their decision-making--- inducing new forms of equilibria. 

In the next two sections, we introduce generalizations of the expected utility game that allow us to model agents as both risk-averse and imperfect optimizers---both of these being key attributes of human decision-making. We show that while the addition of risk can undo the existing structure in the game, the addition of bounded rationality in the form of quantal responses allow us to derive a class of computationally tractable equilibria in \emph{all} games that only depends on the degree of risk aversion and bounded rationality and not on the game structure. 

\subsection{Risk-Aversion in Matrix Games}

Given the setup of an expected utility game, we now allow agents to have risk preferences which we model through the use of a general class of convex risk metrics which have been the focus of a long line of study in mathematical finance and operations research~\cite{Risk_overview}. In this framing we move into a regime where agents seek to \emph{minimize} a measure of risk.

\begin{definition}[Convex Risk Measures]\label{def:convexmetric}
    Let $\mathcal{X}$ be the set of functions mapping from a space of outcomes $\Omega$ to $\mathbb{R}$. A convex measure of risk is a mapping $\rho:\mathcal{X}\rightarrow \mathbb{R}$ satisfying:
    \begin{enumerate}
        \item \emph{Monotonicity:} If $X\le Y$ almost surely, then $\rho(X)\ge \rho(Y)$.
        \item \emph{Translation Invariance:} If $m\in \mathbb{R}$ then $\rho(X+m)=\rho(X)-m$.
        \item \emph{Convexity:} For all $\lambda \in (0,1)$, $\rho(\lambda X+(1-\lambda)Y)\le \lambda\rho(X)+(1-\lambda)\rho(Y)$.
    \end{enumerate}
\end{definition}

Typically, and as we assume in the remainder of the paper, the set $\mathcal{X}$ is the set of measurable functions defined on a probability space $(\Omega,\mathcal{F},P)$. Under this assumption, the convex measures of risk allow us to generalize expectations to allow for agents to optimize for more conservative or worst-case outcomes. For ease of exposition, we write $\rho_\pi(X)$ to denote the distribution that the measure of risk is taken with respect to a reference distribution $\pi$.

In the case of matrix games, we consider the case where agents are risk averse with respect to the mixing or randomness introduced into the game by their opponents. In generalizing this to stochastic games in future sections we will also consider the case where agents are risk averse with respect to the underlying dynamics in the Markov game. Crucially, we assume that players are not risk averse to their own randomness.  We note that is a common approach taken in the literature on risk-sensitive and robust decision-making~\cite{shen2014risk} and it is necessary since if agents are risk-averse to their own randomness then an equilibrium may cease to exist~\cite{fiat2010players}. We refer to related works for more discussion.

We consider two generalizations of the original game that differ in how risk is incorporated into the problem:  \emph{aggregate risk aversion} and \emph{action-dependent risk aversion}. The first generalization allows for a comparatively simpler analysis but at the cost of a more conservative solution while the second is less conservative but requires a more complex analysis. Due to monotonicity and linearity of expectation both of these formulations can be seen as generalizations of risk-sensitive decision-making investigated in the literature on risk-sensitive control~\cite{BorkarRisk} and risk-sensitive RL~\cite{shen2014risk}. To see this reduction, one can simply take the other agents to be part of an unknown environment. 

\paragraph{Aggregate Risk Aversion:} To define a game with aggregate risk aversion, we transform the player's utilities in \eqref{eq:exputility} into costs $f_i$ which take the form
\begin{align}
\begin{split}\label{eq:riskmat}
    f_i(\pi_i,\pi_{-i})= \rho_{i,\pi_{-i}}(\mathbb{E}_{\pi_i}[R_{i}(a)])=\rho_{i,\pi_{-i}}\left( \sum_{a_i \in \mathcal{A}_i} \pi_i(a_i)R_i(a_i,a_{-i})\right)
    \end{split}
\end{align}
where $\rho_{i,\pi_{-i}}$ is used to denote the potentially different risk preference of agent $i$ which depends on the product distribution of opponents strategies $\pi_{-i}$. 

\paragraph{Action-dependent Risk Aversion:} To define a game with aggregate risk aversion, we transform the player's utilities in \eqref{eq:exputility} into costs $f_i$ which take the form
\begin{align}
\begin{split}\label{eq:action_dep}
    f_i(\pi_i,\pi_{-i})= \mathbb{E}_{\pi_i}\left[\rho_{i,\pi_{-i}}(R_{i}(a))\right]= \sum_{a_i \in \mathcal{A}_i}\pi_i(a_i)\rho_{i,\pi_{-i}}\left( R_i(a_i,a_{-i})\right)
    \end{split}
\end{align}
where again $\rho_{i,\pi_{-i}}$ is used to denote the potentially different risk preference of agent $i$ which depends on the product distribution of opponents strategies $\pi_{-i}$.

We remark that in both of these formulations, if $\rho_i(X)=\mathbb{E}[-X]$ for all players $i=1,...,n$ (which satisfies the requirements in Definition~\ref{def:convexmetric}) then the new formulation reduces to the original expected utility objective. Thus, both can be seen as generalizations of the classic setup described in the previous section.

While---at first glance--- the modified games look significantly more complex than the previous expected utility maximization setup (cf.~\eqref{eq:exputility}), we can rely on a particularly powerful property of convex measures of risk to simplify and expose some structure in this class of problems. In particular, we make use of the \emph{dual} representation theorem for convex risk measures. 

\begin{theorem}[Dual Representation Theorem for Convex Risk Measures~\cite{Risk_overview}]
Suppose that the set $\mathcal{X}$ is the set of  functions mapping from a finite set $\Omega$ to $\mathbb{R}$. Then a mapping $\rho:\mathcal{X}\rightarrow \mathbb{R}$ is a convex risk measure (cf.~Definition~\ref{def:convexmetric}) if and only if there exists a penalty function $D:\Delta_\Omega\rightarrow (-\infty,\infty]$ such that:
$\rho(X)=\sup_{p\in \Delta_\Omega} E_p[-X]-D(p),$
where $\Delta_\Omega$ is the set of all probability measures on $\Omega$. Furthermore, the function $D(p)$ can be taken to be convex, lower-semi-continuous, and satisfy $D(p)>-\rho(0)$ for all $p\in \Delta_{\Omega}$. 
% \lxs{Do we need to assume $\rho(0) > -\infty$, or something similar?}
\end{theorem}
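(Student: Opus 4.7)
The plan is to prove the two implications separately: the \emph{if} direction is a direct verification of the three axioms, while the \emph{only if} direction relies on Fenchel--Moreau duality, with the two axioms of translation invariance and monotonicity used to pin down the effective domain of the dual penalty.

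For the \emph{sufficiency} direction, suppose $\rho(X) = \sup_{p \in \Delta_\Omega}\{\mathbb{E}_p[-X] - D(p)\}$. Monotonicity is immediate: $X \le Y$ implies $\mathbb{E}_p[-X] \ge \mathbb{E}_p[-Y]$ pointwise in $p$, so the supremum is monotone. Translation invariance follows from $\mathbb{E}_p[-(X+m)] = \mathbb{E}_p[-X] - m$ and pulling the constant out of the supremum. Convexity of $\rho$ in $X$ holds because $\rho$ is a pointwise supremum of affine functions of $X$.

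For the \emph{necessity} direction, since $\Omega$ is finite we identify $\mathcal{X} \cong \mathbb{R}^{|\Omega|}$ and invoke finite-dimensional convex analysis. Define the penalty as the Legendre--Fenchel conjugate on all of $\mathbb{R}^{|\Omega|}$:
$$D(Z) \defn \sup_{X \in \mathcal{X}} \bigl\{\mathbb{E}_Z[-X] - \rho(X)\bigr\}, \qquad \mathbb{E}_Z[-X] \defn \sum_\omega Z(\omega)(-X(\omega)).$$
By construction $D$ is convex and lower-semicontinuous, being a supremum of continuous affine functions in $Z$. Because $\rho$ is convex and finite-valued on a finite-dimensional space, it is automatically continuous, so the Fenchel--Moreau theorem yields $\rho(X) = \sup_{Z \in \mathbb{R}^{|\Omega|}} \{\mathbb{E}_Z[-X] - D(Z)\}$, which is the desired representation provided we can restrict the supremum to $\Delta_\Omega$.

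The core step, and what I expect to be the main technical obstacle, is to show $D(Z) = +\infty$ whenever $Z \notin \Delta_\Omega$, so that the outer supremum effectively ranges only over $\Delta_\Omega$. For the normalization constraint $\sum_\omega Z(\omega) = 1$, plug in $X = c\mathbf{1}$ for $c \in \mathbb{R}$: translation invariance gives $\rho(c\mathbf{1}) = \rho(0) - c$, so
$$D(Z) \ge -c\,\textstyle\sum_\omega Z(\omega) - \rho(0) + c = c\bigl(1-\sum_\omega Z(\omega)\bigr) - \rho(0),$$
and if $\sum_\omega Z(\omega) \neq 1$ we send $c \to \pm\infty$ with appropriate sign to force $D(Z) = +\infty$. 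For the non-negativity constraint $Z \ge 0$, suppose $Z(\omega_0) < 0$ and test with $X = c\mathbf{1}_{\omega_0}$ for $c > 0$: since $X \ge 0$, monotonicity gives $\rho(X) \le \rho(0)$, hence $\mathbb{E}_Z[-X] - \rho(X) \ge -cZ(\omega_0) - \rho(0) \to +\infty$ as $c \to +\infty$. Finally, taking $X = 0$ in the definition of $D$ gives $D(p) \ge -\rho(0)$ on $\Delta_\Omega$, and strict inequality for every $p$ can be arranged, if desired, by a harmless perturbation of $D$ on points where equality would otherwise hold (a minor technical step once the representation has been established). The delicate part throughout is coupling the Fenchel--Moreau biconjugation with these axiom-driven infinite-value arguments so that the representation stays valid after restricting the sup to $\Delta_\Omega$.
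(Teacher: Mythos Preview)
The paper does not actually prove this theorem; it is quoted from the cited reference \cite{Risk_overview} and used as a black box. So there is no ``paper's own proof'' to compare against. That said, your argument is the standard one and is essentially correct: direct verification for sufficiency, and Fenchel--Moreau biconjugation for necessity, using translation invariance to force $\sum_\omega Z(\omega)=1$ and monotonicity to force $Z\ge 0$ on the effective domain of the conjugate.

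One genuine gap: your handling of the strict inequality $D(p)>-\rho(0)$ does not work. Taking $X=0$ in the representation gives $\rho(0)=\sup_{p\in\Delta_\Omega}\{-D(p)\}=-\inf_{p}D(p)$, so $\inf_p D(p)=-\rho(0)$. Since $\Delta_\Omega$ is compact (finite $\Omega$) and $D$ is lower-semicontinuous, this infimum is \emph{attained}, i.e.\ there is some $p^\star$ with $D(p^\star)=-\rho(0)$. Your proposed ``harmless perturbation'' to enforce strict inequality at such points would strictly increase $D(p^\star)$ and thereby strictly decrease the supremum at $X=0$, breaking the representation. In fact the correct statement in the standard references is $D(p)\ge -\rho(0)$ with equality for at least one $p$; the strict inequality as written in the theorem statement appears to be a typo. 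Your proof is fine once you replace the strict inequality claim with the (provable) weak inequality $D(p)\ge -\rho(0)$.
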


When the set $\mathcal{X}$ is again the set of  measurable functions defined on a probability space, one can choose the penalty function $D$ to represent a notion of of distance from the probability law or distribution of the random variable $\pi$. In such cases, the dual representation theorem takes the form:
\[\rho_\pi(X)=\sup_{p\in \Delta_\Omega} E_p[-X]-D(p,\pi),\]
where $D(p,\pi)$ is convex in $p$ for a fixed $\pi$. We also make a simplifying assumption that $D$ is continuous in both its arguments, which is satisfied by various widely-used risk measures. This general form allows us to draw connections with a large class of risk and robustness metrics that are based around $\phi$-divergences. We provide examples of common risk measures in \cref{tab:risk-measure}. %\tc{red}{TODO: EXAMPLES OF RISK METRICS}

\newcommand{\topsepremove}{\aboverulesep = 0mm \belowrulesep = 0mm} \topsepremove

\begin{table}[t!]
	\begin{center}
% \resizebox{\textwidth}{!}{
\begin{tabular}{c|c}
\hline
\toprule
\multirow{2}{*}{Risk-measure }	&\multirow{2}{*}{Penalty function $D(p,q)$ }  \tabularnewline 
	&   \tabularnewline
% \toprule
\hline
\toprule
\multirow{2}{*}{Entropic Risk  \cite{ahmadi2012entropic} } &
 \multirow{2}{*}{Kullback-Leibler (KL): {$KL(p,q)=\sum_{i}p_{i}\log\big(\frac{p_{i}}{q_{i}}\big)$  \vphantom{$\frac{1^{7}}{1^{7^{7}}}$} }  \vphantom{$\frac{1^{7}}{1^{7^{7}}}$} } \tabularnewline
  &  \tabularnewline
\hline
\multirow{2}{*}{ \cite{Risk_overview} }  & 
 \multirow{2}{*}{Reverse KL (RKL): $\sum_{i}q_{i}\log\left(\frac{q_{i}}{p_{i}}\right)$  \vphantom{$\frac{1^{7}}{1^{7^{7}}}$}}   \tabularnewline
  &   \tabularnewline
  \hline
\multirow{2}{*}{ $\phi$-Entropic Risk  \cite{ahmadi2012entropic}} & 
 \multirow{2}{*}{ $\phi$-Divergence:  $\sum_{i}p_{i}\, \phi\left(\frac{p_{i}}{q_{i}}\right)$  \vphantom{$\frac{1^{7}}{1^{7^{7}}}$}}   \tabularnewline
  & \tabularnewline
  \hline
\multirow{2}{*}{ Utility-based shortfall   \cite{Risk_overview}}  & 
 \multirow{2}{*}{Utility-based Shortfall ($u$):  $\inf_{\lambda>0} \frac{1}{\lambda}[ r + \sum_{i}p_{i} u^\star\big(\lambda\frac{q_{i}}{p_{i}}\big) ]$  \vphantom{$\frac{1^{7}}{1^{7^{7}}}$}}  \tabularnewline
&    \tabularnewline
\toprule
\end{tabular}
% }
	\end{center}
	\caption{We list several widely-used convex risk-measures with its penalty function $D(p,q)$. Here, $p,q$ are distributions of the same finite dimension, $\phi$ and $u$ are differentiable convex functions on $\mb{R}$ satisfying $\phi(1)=0$ and $\phi(t)=+\infty$ for $t<0$. The utility $u$ is equipped with risk tolerance level of $r$, and its penalty function depends on its convex conjugate $u^\star$. }
    \vspace{-0.5cm}
 \label{tab:risk-measure} 
\end{table}

Given this setup, we can now write the aggregate risk-adjusted game in the following form:
\begin{align}
\begin{split}\label{eq:riskmat2}
    f_i(\pi_i,\pi_{-i})= \sup_{p_i \in \mathcal{P}_{-i}} -\pi_i^T R_i p_i -D_i(p_i,\pi_{-i}) \\
    \end{split}
\end{align}
where $\mathcal{P}_{-i}=\mathcal{P}/\Delta_{A_i}\subset \mathbb{R}^{A_{-i}}$, $A_{-i}=\sum_{j\ne i} A_j$, and $R_i \in \mathbb{R}^{A_i\times A_{-i}}$ is player $i$'s payoff matrix. Similarly, the action-dependent risk-adjusted game takes the form:
\begin{align}
\begin{split}\label{eq:action_dep2}
    f_i(\pi_i,\pi_{-i})= \sum_{j \in \mathcal{A}_i} \pi_i(j)\left(\sup_{p_{i,j} \in \mathcal{P}_{-i}} -\langle R_{i,j}, p_{i,j}\rangle  -D_i(p_{i,j},\pi_{-i})\right),\\
    \end{split}
\end{align}
where $R_{i,j}$ corresponds to the $jth$ row of $R_i$.

We note that we differentiate the penalty functions $D_i$ to allow for the different agents to have different risk preferences. In this form, one can see that in a risk-adjusted game, the players imagine that intermediate adversaries seek to maximize their cost within a small ball in some metric of the opponents realized strategies. Thus, agents in risk-averse games introduce a certain amount of worst-case thinking into their strategies.  

In the aggregate case, from the point of view of player $i$ the game has simplified from a setting with $n-1$ opponents to a game with $1$ opponent who is penalized from taking actions that deviate too far from the \emph{product distribution} of the original $n-1$ opponents' strategies. In the action-dependent case, the game from player $i$'s point of view goes from a $n$-player game to a $A_i$-player game who are also similarly constrained.

Since the Nash equilibrium of the risk-adjusted game can be qualitatively different from that of the original game, we define a risk-averse Nash equilibrium (RNE) as follows. For ease of exposition we do not differentiate between aggregate and action-dependent RNE since similar results hold for either formulation.

\begin{definition}[Risk-adjusted Nash equilibrium] \label{def:RNE}
    A  risk-averse Nash equilibrium of is joint strategy $\pi^*\in \mathcal{P}$ such that no player has any incentive to unilaterally deviate in the risk adjusted game---i.e., for all $i=1,...,n$
    \begin{align*}
    f_i(\pi_i^*,\pi_{-i}^*)\le f_i(\pi_i ,\pi_{-i}^*) \ \ \forall \ \ \pi_i \in \Delta_{A_i}.
\end{align*}
\end{definition}
Note that since players would like to \emph{minimize} risk, the direction of the inequality has changed. The convexity and continuity of the penalty function guarantees that the risk-adjusted games admit at least one RNE. 

\begin{theorem}\label{thm:rne-existence}
    There always exists a RNE for all aggregate and action-dependent risk-averse games presented in~\eqref{eq:riskmat2} and~\eqref{eq:action_dep2} respectively.
\end{theorem}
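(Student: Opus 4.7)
The plan is to apply Kakutani's (or Glicksberg's) fixed-point theorem to the best-response correspondence of the risk-adjusted game, mirroring the classical Nash existence argument but verifying the hypotheses under the new cost functions $f_i$ obtained from the dual representation. Since each player's strategy set $\Delta_{A_i}$ is nonempty, compact, and convex, and the joint strategy space $\mathcal{P}=\prod_{i=1}^n\Delta_{A_i}$ inherits these properties, it only remains to show that the best-response correspondence
\[
\mathrm{BR}_i(\pi_{-i}) \;\defn\; \arg\min_{\pi_i\in\Delta_{A_i}} f_i(\pi_i,\pi_{-i})
\]
is nonempty, convex-valued, and upper hemicontinuous in both formulations, with $f_i$ as given in \eqref{eq:riskmat2} and \eqref{eq:action_dep2}.

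First I would verify convexity in $\pi_i$. In the aggregate case \eqref{eq:riskmat2}, $f_i$ is a pointwise supremum (over $p_i \in \mathcal{P}_{-i}$) of functions that are affine in $\pi_i$, hence convex (in fact lower semicontinuous) in $\pi_i$ for every fixed $\pi_{-i}$. In the action-dependent case \eqref{eq:action_dep2}, $f_i$ is a nonnegative linear combination of terms independent of $\pi_i$, weighted by $\pi_i(j)$, so it is linear in $\pi_i$. In either case, $\mathrm{BR}_i(\pi_{-i})$ is the minimizer of a convex continuous function over a compact convex set, and is therefore nonempty, compact, and convex.

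The key technical step is continuity of $f_i$ jointly in $(\pi_i,\pi_{-i})$. For this I would invoke Berge's maximum theorem: the constraint set $\mathcal{P}_{-i}$ is compact and does not depend on $\pi_{-i}$, and the integrand $(p,\pi_{-i})\mapsto -\pi_i^T R_i p - D_i(p,\pi_{-i})$ (resp.\ $-\langle R_{i,j},p\rangle - D_i(p,\pi_{-i})$) is continuous in all variables by the assumed joint continuity of $D_i$. Berge's theorem then yields that each supremum is continuous in $(\pi_i,\pi_{-i})$, and in the action-dependent case the outer finite sum preserves continuity. Given continuity of $f_i$ and compactness of $\Delta_{A_i}$, a standard argument using the closed-graph characterization of upper hemicontinuity shows that $\mathrm{BR}_i$ is upper hemicontinuous on $\mathcal{P}$.

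With these ingredients, the joint best-response correspondence $\mathrm{BR}(\pi)\defn \prod_{i=1}^n \mathrm{BR}_i(\pi_{-i})$ maps the nonempty compact convex set $\mathcal{P}$ into itself, is nonempty, convex, and compact-valued, and is upper hemicontinuous as a product of upper hemicontinuous correspondences. Kakutani's fixed-point theorem then yields a $\pi^\star\in\mathcal{P}$ with $\pi^\star\in\mathrm{BR}(\pi^\star)$, which by construction satisfies Definition~\ref{def:RNE} and is therefore a RNE. The main obstacle to keep clean is the continuity argument: if one wanted to allow $D_i$ to take the value $+\infty$ (as for KL-type penalties with incompatible supports) one would need to restrict the dual variable $p_i$ to the effective domain and verify that this restriction still gives a compact-valued, continuous feasible correspondence so that Berge's theorem applies; the stated joint continuity assumption on $D$ sidesteps this, so the proof goes through cleanly for both \eqref{eq:riskmat2} and \eqref{eq:action_dep2}.
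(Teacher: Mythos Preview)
Your proof is correct and ultimately rests on the same fixed-point argument as the paper's, but the packaging differs. The paper establishes convexity of $f_i$ in $\pi_i$ (via Danskin's theorem in the aggregate case, linearity in the action-dependent case) and then immediately invokes Rosen's existence theorem for convex games over compact convex strategy sets. You instead unpack that existence result: you verify convexity by the more elementary observation that a pointwise supremum of affine functions is convex, establish joint continuity of $f_i$ via Berge's maximum theorem, and then apply Kakutani directly to the best-response correspondence. Your route is more self-contained and makes explicit the continuity verification that the paper leaves implicit in its appeal to Rosen; the paper's route is terser by outsourcing the fixed-point machinery to a cited result. Neither approach requires anything the other does not already have available under the stated assumptions on $D_i$.
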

\begin{proof}
   To begin, we show that $f_i(\pi_i,\pi_{-i})$ is convex in $\pi_i$ for all $\pi_i \in \mathcal{P}_{-i}$ in~\eqref{eq:riskmat2}. This follows from the fact that $D_i(\cdot,\cdot)$ is assumed to be convex and continuous in its first argument. Invoking Danskin's theorem guarantees us that $f_i$ is convex in $\pi_i$ for all fixed values of $\pi_{-i}$. Since the probability simplex is compact and convex, the game satisfies the conditions of a convex game~\cite{Rosen} and thus a Nash equilibrium must exist. The action-dependent risk aversion regime follows from the linearity of $f_i$ in $\pi_i$ in~\eqref{eq:action_dep2} and a similar invocation of the existence of Nash in convex games~\cite{Rosen}.
\end{proof}

The fact that these Nash equilibria exist is a consequence of the fact that players are not risk averse to the randomness that they introduce into the game by playing a mixed strategy. Under the stronger assumption that they are risk-averse to all randomness, previous work~\cite{fiat2010players} shows that RNE may not exist. Thus, our specific formulation of risk is crucial in that it will always yield a RNE.

Despite the fact that risk preferences already help convexify player's objectives, the addition of risk can serve to weaken existing structures in the player's cost function. To illustrate this, we show that even in two-player zero-sum games where $R_1=R=-R_2^T$, the risk-adjusted game loses any zero-sum structure and may cease to even be strictly competitive in the sense that one player's gain is the other's loss. 
 % \lxs{Even with the same risk-measures for two agents ($D_1 =D_2$), we still have non-zero-sum game since they are all doing $-D_i(\cdot)$ that can't been cancelled out? So zero-sum does not exists anyway, like the phenomena happens in robust MARL.}
\begin{example}\label{example:risk-averse-non-strictly-competitive}
   Consider a 2-player zero-sum game where $R_1=R=-R_2^T$ where players have aggregate risk aversion in the entropic risk metric with different degrees of risk aversion $\tau_1$ and $\tau_2$. Under these conditions, the players loss functions take the following form:
   \vspace{-0.5mm}
   \begin{align*}
       f_1(\pi_1,\pi_{2})&= \sup_{p_1 \in \mathcal{A}_{2}} -\pi_1^T R p_1 -\frac{1}{\tau_1}KL(p_1,\pi_{2}) =\frac{1}{\tau_1} \log\Big( \sum_{1\leq j\leq A_2} \pi_2(j)\exp(-\tau_1 [R\pi_1]_j)\Big) \\
       f_2(\pi_1,\pi_{2})&= \sup_{p_2 \in \mathcal{A}_{1}} \pi_2^T R^T p_2 -\frac{1}{\tau_2}KL(p_2,\pi_{1}) =\frac{1}{\tau_2} \log\Big( \sum_{1\leq j \leq A_1} \pi_1(j)\exp(\tau_2 [R^T\pi_2]_j)\Big). 
   \end{align*}
   Even instantiating $R=\mathbb{I}_2$, $\tau_1=10$, and for any $\tau_2>0$, both  $f_1(\pi_1,\pi_2)>f_1(\pi_1',\pi_2)$ and $f_2(\pi_1,\pi_2)>f_2(\pi_1',\pi_2)$ holds for the regions $\pi_1,\pi_2\in\Delta_2$ satisfying $\pi_2(1) \in (0.1,0.5)$ and $0.75-\pi_1(1)> \pi'_1(1)> \pi_1(1)$, which implies that the game is not a strictly competitive game.%\textcolor{red}{Check the numbers here-- what are $\tau_1$ and $\tau_2$?}
\end{example}

   % Even instantiating $R=\mathbb{I}$, it is easy to find a regime of $\pi_1$ for which $f_1(\pi_1,\pi_2)>f_1(\pi_1',\pi_2)$ and $f_2(\pi_1,\pi_2)>f_2(\pi_1',\pi_2)$, which implies that the game is not a strictly competitive game.

Note that the previous example introduces degrees of risk aversion into the game through the parameters $\tau_1>0$ and $\tau_2>0$. As $\tau$ increases, the game becomes less reliant on the regularization term, which makes the adversary more powerful and results in more conservative game playing by the player. As $\tau$ goes to zero we recover the risk neutral regime~\cite{ahmadi2012entropic}.

Finally, note that the additional convexity induced by the introduction of risk aversion  guarantee is not enough to ensure the computational tractability of the Nash equilibrium (see e.g.,~\cite{mcmahan2024roping}). This is not surprising given the fact that computing NE in general convex games can be intractable and that further structural conditions are required for computational tractability. 

\subsection{Bounded Rationality in Matrix Games}\label{sec:bounded-rationality}

Since risk aversion on its own is not sufficient to guarantee computational tractability of NE, we also introduce a second important feature of human decision-making into the game: bounded rationality. Interestingly, in experimental settings risk aversion---by itself--- is also not enough to model human decision-making in game theoretic scenarios~\cite{goeree2003risk}.

To incorporate bounded rationality into agents, we resort to the notion of a \emph{quantal response function}. 

\begin{definition}\label{def:quantal-function}
    A \emph{quantal response function} is a 
\textit{continuous} function $\sigma:\mathbb{R}^n\rightarrow \Delta_n$ such that: If $x_i<x_j$, $\sigma_i(x)>\sigma_j(x)$, where $x_k,\sigma_k(x)$ represent the $k$-th components of $x$ and $\sigma$, respectively. 
    % \begin{enumerate}
    %     \item $\sigma$ is a continuous function.
    %     \item If $x_i<x_j$, $\sigma_i(x)>\sigma_j(x)$, where $x_i,\sigma_i(x)$ and $x_j,\sigma_j(x)$ represent the $i$-th and $j$-th components of $x$ and $\sigma$ respectively. 
    % \end{enumerate}
\end{definition}

% \begin{definition}\label{def:quantal-function}
%     A \emph{quantal response function} is a function $\sigma:\mathbb{R}^n\rightarrow \Delta_n$ such that:
%     $\sigma$ is a continuous function. If $x_i<x_j$, $\sigma_i(x)>\sigma_j(x)$, where $x_i,\sigma_i(x)$ and $x_j,\sigma_j(x)$ represent the $i$-th and $j$-th components of $x$ and $\sigma$ respectively. 
%     % \begin{enumerate}
%     %     \item $\sigma$ is a continuous function.
%     %     \item If $x_i<x_j$, $\sigma_i(x)>\sigma_j(x)$, where $x_i,\sigma_i(x)$ and $x_j,\sigma_j(x)$ represent the $i$-th and $j$-th components of $x$ and $\sigma$ respectively. 
%     % \end{enumerate}
% \end{definition}

Clearly, when players responses are constrained to be \emph{quantal} responses, they cannot be perfect maximizers since the $\arg\max$ function does not satisfy the first desiderata of a quantal response function. Common quantal response functions include the logit response function:
\begin{align}
    \sigma(x)=\frac{\exp(-\frac{1}{\epsilon}  x_i)}{\sum_{j=1}^n \exp(-\frac{1}{\epsilon} x_j)},
\end{align}
where the sign is to account for the fact that agents may be \emph{minimizing} their loss function. 

When all players achieve a Nash equilibrium in the space of their quantal responses---the class of strategies that can be represented by a fixed class of quantal response functions (cf.~Definition~\ref{def:quantal-function})---the resulting equilibrium is known as a Quantal Response Equilibrium (QRE). To incorporate quantal responses into our risk-adjusted game, we introduce regularization to the player's losses. This can be shown to be equivalent to constraining the player's responses to quantal responses (see, e.g., \cite[Proposition 7]{Risk_overview}, or \cite{quantal1,quantal2}). %\tc{red}{PROVE THIS OR CITE} \kp{can cite \cite[Prop.7]{Risk_overview}. but with extra regularization ($\nu$), we may need more work.} 
We write a regularized risk-adjusted game as:
\begin{align}
\begin{split}\label{eq:RQEgame}
    f^{\epsilon_i}_i(\pi_i,\pi_{-i})=f_i(\pi_i,\pi_{-i}) +\epsilon_i\nu_i(\pi_i) \\
    \end{split}
\end{align}
where $\nu_i$ is strictly convex over the simplex and controls the class of quantal responses available to player $i$ and $\epsilon_i$ controls the agent's degree of bounded rationality. 

\begin{example}\label{example:quantal-response-function}
   If players are constrained to \emph{logit} response functions, one can reflect this by incorporating a negative entropy regularizer $\nu(\pi)=\sum_{i}p_i\log(p_i)$. Another class of quantal response functions would be generated by making use of e.g., a log-barrier regularizer $\nu(\pi)=-\sum_{i}\log(p_i)$. Both of these regularizers give rise to quantal response functions that satisfy Definition~\ref{def:quantal-function}.
\end{example}

This game now incorporates two key properties of human decision-making: risk aversion and bounded rationality on the part of the agents. The natural outcome of this game is what we term as a risk-adjusted quantal-response equilibrium (RQE). 
\begin{definition}
    A  risk-adjusted quantal response equilibrium (RQE) of a $n$-player general-sum finite-action game is a joint strategy $\pi^* \in \mathcal{P}$ such that for each player $i=1,...,n$:
       \begin{align*}
    f^{\epsilon_i}_i(\pi_i^*,\pi_{-i}^*)\le f^{\epsilon_i}_i(\pi_i ,\pi_{-i}^*) \ \ \forall \ \ \pi_i \in \Delta_{A_i}
\end{align*}
\end{definition}

For any set of convex regularizers $\nu_i$, it is easy to observe the that game remains a convex game and thus RQE will exist.

As we will show in the following section, if player's risk preferences and regularizers (and thus their set of quantal response functions) satisfy a simple relationship, one can show that \emph{irrespective of the underlying payoffs} $R_1,...R_n$, the game admits $RQE$ that can be efficiently computed using arbitrary no-regret learning algorithms which include e.g., gradient-play and mirror descent.
% \lxs{we may need some reference here for no-regret algorithms}

\section{Conditions for Computational Tractability of RQE}\label{sec:computation_matrix}
In this section we give conditions on the bounded rationality and degree of risk aversion of the players such that computing cost of RQE in the games is tractable.  Further, an immediate consequence of our result is that players can find such equilibrium outcomes in a decentralized manner using no-regret learning algorithms. 

\begin{remark}
    For ease of exposition we focus on deriving the results for the aggregate risk-aversion formulation and defer the results for action-dependent risk-aversion to Appendix~\ref{app:action_dep}. The results for both formulations in terms of computational tractability are similar in that both allow for a class of RQE to be learned using no-regret learning. However, we remark that the analysis of the aggregate case is comparatively simpler and allows one to compute simple bounds on the set of RQE that are computationally tractable. The action-dependent case requires more work and slightly more stringent requirements on the relationship between bounded rationality and risk-aversion.
\end{remark}

To derive our results on the computational tractability of RQE under aggregate risk-aversion, we first introduce a related $2n$-player game in which we associate to each original player $i$ an adversary whose strategy we denote $p_i \in \mathcal{P}_{-i}$. Let $p=(p_1,...,p_n)\in \bar{\mathcal{P}}=\prod_{i=1}^n \mathcal{P}_{-i}$. In this new game, each original player's loss function takes the form:
\begin{align}J_i(\pi_i,\pi_{-i},p)= -\pi_i^T R_i p_i -D_i(p_i,\pi_{-i}) +\epsilon_i\nu_i(\pi_i) \label{eq:main_loss}\end{align}
For each player $p_i$ we associate them to a new loss function which we denote:
\begin{align}
    \bar J_i(\pi,p_i,p_{-i})&=\pi_i^T R_i p_i +D_i(p_i,\pi_{-i}) -\sum_{j\ne i}\xi_{i,j}\nu_j(\pi_j). \label{eq:adversarial-loss}
\end{align}
% \kp{@Eric, do we get back the risk-neutral setting for some limiting case?}
In this $2n$ player game, the $i$-th player's adversary, whose strategy is $p_i$ seeks to minimize their loss $\bar{J}_i$ which is strategically the same as maximizing $J_i$. We note that unlike the original risk-adjusted game, the introduction of the new players gives us additional structure that we can exploit. A key feature that we introduce are  the additional parameters $\xi_{i,j}>0$  which we will use to characterize the degree of risk aversion of each player $i$.

Given the definition of the $2n$-player game we show how no-regret learning algorithms can be used to compute a RQE. To do so, we show that CCEs of the $2n$-player game coincide with NEs of it and NEs of the $2n$-player game coincide with RQE of the original $n$-player matrix game. This is a phenomenon known as equilibrium collapse which is well known to happen in zero-sum games and certain generalizations of zero-sum games~\cite{eq_collapse1,eq_collapse_Markov}. To prove this, we first define coarse correlated equilibria.

\begin{definition}
    A coarse correlated equilibrium of the $2n$-player game is a probability measure $\sigma$ on $\mathcal{P}\times \bar{\mathcal{P}}$ such that for all $i=1,...,n$:
    \begin{align*}
        \mb{E}_{(\pi,p)\sim\sigma}[ J_i(\pi,p)] &\le \mb{E}_{(\pi_{-i},p)\sim\sigma}[ J_i(\pi'_i,\pi_{-i},p)] \ \ \ \forall \pi'_i \in \Delta_{A_i}\\
        \mb{E}_{(\pi,p)\sim\sigma}[ \bar J_i(\pi,p)] &\le \mb{E}_{(\pi,p_{-i})\sim\sigma}[\bar J_i(\pi,p_i',p_{-i})] \ \ \ \forall p_i' \in \mathcal{P}_{-i}.
    \end{align*}
\end{definition}
% \kp{need to define $\mathbb{E}_{\sigma}[\cdot]$.}
With this definition in hand we now present our results on computational tractability of RQE in 2-player matrix games. We also extend the results to $n$-player setting in \cref{appen:n-player}. 

To simplify our results we define $H_1(p_1,\pi_2)=D_1(p_1,\pi_2) -\xi_1 \nu_2(\pi_2)$ and $H_2(p_2,\pi_1)=D_2(p_2,\pi_1) -\xi_2 \nu_1(\pi_1)$. Let $\xi_1^*>0$ and $\xi_2^*>0$ be the smallest values of $\xi_1$ and $\xi_2$ such that $H_1(p_1,\pi_2)$ and $H_2(p_2,\pi_1)$ are concave in $\pi_2$ and $\pi_1$ for all $p_1$ and $p_2$ respectively. The parameters $\xi^*_1$ and $\xi^*_2$ capture the player's degrees of risk aversion as illustrated in Corollaries~\ref{cor:KL} and~\ref{cor:reverse-KL} in Appendix~\ref{app:more} respectively.

The following theorem gives conditions on $\xi^*_1, \xi^*_2$ and $\epsilon_1, \epsilon_2$, the bounded rationality parameters under which RQE's are computationally tractable using no-regret learning. The proof is postponed to \cref{proof:thm:2player}.

\begin{theorem}\label{thm:2player}
    Assume the penalty functions that give rise to the players' risk preferences $D_1(\cdot,\cdot)$ and $D_2(\cdot,\cdot)$ are jointly convex in both their arguments. If $\sigma$ is a CCE of the four player game with $\xi_{1,2}=\xi_1^*$ and $\xi_{2,1}=\xi_2^*$, and
    % $({\epsilon_1}/{\xi^*_1})\ge ({\xi_2^*}/{\epsilon_2})$
    \[ \frac{\epsilon_1}{\xi^*_1}\ge \frac{\xi_2^*}{\epsilon_2}, \] 
    then $\hat \pi_1=\mathbb{E}_\sigma[\pi_1]$ and $\hat \pi_2=\mathbb{E}_\sigma[\pi_2]$ constitute a RQE of the risk-adjusted game.
\end{theorem}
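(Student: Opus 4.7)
The plan is an equilibrium-collapse argument: for any CCE $\sigma$ of the four-player auxiliary game, the marginals $\hat\pi_1 = \mathbb{E}_\sigma[\pi_1]$ and $\hat\pi_2 = \mathbb{E}_\sigma[\pi_2]$ (alongside the adversary marginals $\hat p_i = \mathbb{E}_\sigma[p_i]$) satisfy the one-sided RQE inequality $f_i^{\epsilon_i}(\hat\pi_i,\hat\pi_{-i}) \le f_i^{\epsilon_i}(\pi_i',\hat\pi_{-i})$ for every $\pi_i'\in\Delta_{A_i}$ and $i \in\{1,2\}$. I will extract two coupled Jensen-type inequalities from the four CCE constraints and show that the theorem's product hypothesis forces their slacks to cancel.

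\textbf{Coupled slack inequalities.} Define the nonnegative Jensen gaps $\delta_i \defn \mathbb{E}_\sigma[\nu_i(\pi_i)] - \nu_i(\hat\pi_i)$. Applying the CCE inequality for player $\pi_1$ at $\pi_1' = \hat\pi_1$, the shared $\mathbb{E}_\sigma[D_1(p_1,\pi_2)]$ terms cancel and I get $\epsilon_1\delta_1 \le \mathbb{E}_\sigma[\pi_1^\top R_1 p_1] - \hat\pi_1^\top R_1 \hat p_1$. Next I apply the CCE inequality for adversary $p_1$ at $p_1' = \hat p_1$. Joint convexity of $D_1$ gives the forward Jensen $\mathbb{E}_\sigma[D_1(p_1,\pi_2)] \ge D_1(\hat p_1,\hat\pi_2)$, while concavity of $H_1(\hat p_1,\cdot)$ in $\pi_2$---guaranteed by the definition of $\xi_1^*$---provides the reverse Jensen $\mathbb{E}_\sigma[D_1(\hat p_1,\pi_2)] \le D_1(\hat p_1,\hat\pi_2) + \xi_1^*\delta_2$. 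Combining yields $\mathbb{E}_\sigma[\pi_1^\top R_1 p_1] - \hat\pi_1^\top R_1 \hat p_1 \le \xi_1^*\delta_2$; chaining gives $\epsilon_1\delta_1 \le \xi_1^*\delta_2$, and the symmetric argument gives $\epsilon_2\delta_2 \le \xi_2^*\delta_1$.

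\textbf{Saturation and the RQE inequality.} Multiplying yields $\epsilon_1\epsilon_2\delta_1\delta_2 \le \xi_1^*\xi_2^*\delta_1\delta_2$. Under $\epsilon_1\epsilon_2 \ge \xi_1^*\xi_2^*$, this forces either $\delta_1 = \delta_2 = 0$ or simultaneous saturation of both inequalities; in either case $\xi_1^*\delta_2 - \epsilon_1\delta_1 = 0$. To close the argument, fix any $\pi_1'\in\Delta_{A_1}$ and let $\tilde p_1$ attain the sup in $f_1^{\epsilon_1}(\hat\pi_1,\hat\pi_2)$. Applying the $p_1$-CCE with deviation $\tilde p_1$ (again invoking concavity of $H_1(\tilde p_1,\cdot)$), combining with the $\pi_1$-CCE at $\pi_1'$, using joint convexity of $D_1$ once more, and using the elementary plug-in bound $f_1^{\epsilon_1}(\pi_1',\hat\pi_2) \ge -\pi_1'^\top R_1\hat p_1 - D_1(\hat p_1,\hat\pi_2) + \epsilon_1\nu_1(\pi_1')$, I obtain after rearrangement
\[
f_1^{\epsilon_1}(\hat\pi_1,\hat\pi_2) \le f_1^{\epsilon_1}(\pi_1',\hat\pi_2) + (\xi_1^*\delta_2 - \epsilon_1\delta_1) = f_1^{\epsilon_1}(\pi_1',\hat\pi_2),
\]
with the symmetric bound closing the proof for player 2.

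\textbf{Main obstacle.} The subtlety is that Jensen applied to the nonlinear $D_1(p_1,\pi_2)$ under the joint $\sigma$ points in opposite directions on the two arguments: joint convexity is useful on $p_1$ but yields an uncontrolled error on $\pi_2$. The definition of $\xi_1^*$ is precisely what is needed to concavify $H_1(p_1,\cdot)$ in $\pi_2$, turning the bad Jensen into a controllable $\xi_1^*\delta_2$ slack---and it costs nothing in the adversary--player coupling because the $\xi_1^*\nu_2(\pi_2)$ term does not involve the adversary's own choice $p_1$. The product condition $\epsilon_1\epsilon_2 \ge \xi_1^*\xi_2^*$ is then the exact interface that forces the two players' slacks to cancel in lockstep.
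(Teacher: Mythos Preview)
Your proof is correct and the mechanics are sound: the coupled slack inequalities $\epsilon_1\delta_1 \le \xi_1^*\delta_2$ and $\epsilon_2\delta_2 \le \xi_2^*\delta_1$ follow exactly as you claim (the joint convexity of $D_1$ handles one Jensen direction, the concavity of $H_1(\hat p_1,\cdot)$ the other), and the product condition forces $\xi_1^*\delta_2 - \epsilon_1\delta_1 = 0$ in all cases. The final chain combining the $p_1$-CCE at $\tilde p_1$, the $\pi_1$-CCE at $\pi_1'$, joint convexity once more, and the plug-in lower bound on $f_1^{\epsilon_1}(\pi_1',\hat\pi_2)$ yields the RQE inequality cleanly.

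Your route differs from the paper's in two respects. First, the paper passes through an intermediate step: it shows that $(\hat\pi_1,\hat p_1,\hat\pi_2,\hat p_2)$ is a \emph{Nash equilibrium} of the four-player auxiliary game, and then invokes a separate proposition (Nash of the $2n$-player game $\Leftrightarrow$ RQE of the risk-adjusted game) to conclude. You bypass this and establish the RQE inequality directly. Second, in place of your multiplicative saturation argument on the $\delta_i$'s, the paper takes a single convex combination $\tfrac{\lambda}{2}(J_1+\bar J_1)+\tfrac{1-\lambda}{2}(J_2+\bar J_2)$ with the specific weight $\lambda = \xi_2^*/(\epsilon_1+\xi_2^*)$ chosen to kill the $\nu_1$ coefficient, leaving a single $\nu_2$ term whose sign is controlled by $\epsilon_1\epsilon_2 \ge \xi_1^*\xi_2^*$; equality of the weighted sum then forces all four CCE slacks to vanish simultaneously. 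Your argument is more symmetric and makes the role of the product condition more transparent; the paper's is more modular, since the Nash-to-RQE reduction is reusable for the $n$-player and Markov extensions.
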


The previous theorem gives a range of risk aversion and bounded rationality parameters under which an RQE can be computationally tractable using no-regret learning. Before presenting corollaries for certain instantiations of risk measures and quantal response functions in \cref{app:more}, we briefly remark that the result does not necessarily guarantee uniqueness, though by exploiting the connections between socially convex games and monotone games~\cite{Gemp2017OnlineMG} such a result would follow.  

To validate this result, we show that the class of RQE that is computationally tractable seems sufficiently rich to capture the aggregate strategy played by people in various matrix games studied in laboratory experiments in behavioral economics~\cite{selfridge1989adaptive,goeree2003risk,selten2008stationary} to showcase differing human behavior from Nash equilibrium strategies. 
In particular, we showcase in \cref{fig:kl-rkl-plot} that for some choice of parameters ($\epsilon_j$ and $\tau_j$'s) satisfying conditions in \cref{thm:2player}, we recover risk-averse and rational human behaviors in various matching pennies games mentioned in \cref{tab:payoff_matrices-GHP,tab:payoff_matrices-SC}. 
We provide more details about the games and experiments in \cref{appen:exp:matching-pennies} for reproducing our results.

\section{Extension to Markov Games}\label{sec:Markov}

In this section, we extend our results to finite-horizon Markov games, or stochastic games \cite{shapley1953stochastic} which allow us to model dynamic games which are played out over Markov decision processes.

A Markov game can be seen as a sequential matrix game with dynamics. In this paper, we consider a general-sum finite-horizon risk-aversion Markov game involving $n$ players, represented as $ \mathcal{MG}=\{H, \cS, \{\cA_i\}_{i\in[n]}, \{ R_{i,h}, P_{i,h}\}_{i\in [n], h\in [H]}\}$. Here, $H$ is the time horizon length of the Markov games; $\cS = \{1,2,\cdots, S\}$ represents the state space of the joint environment with size $S$; we reuse several definitions in matrix game (see Section~\ref{sec:Matrix}), $\{\cA_i\}_{i\in[n]}$ is the action spaces with cardinality $|\cA_i| = A_i$, the joint action space is $\cA$, the joint action profile is $a \in \cA$, and $\cP$ is the product policy space; $R_{i,h}$ is the utility function of the $i$-th player at time step $h$ for any $(i,h)\in [n] \times [H]$, i.e., $R_{i,h}(s,\ba)$ represents the immediate reward (or utility) received by the $i$-th player given state-action pair $(s,\ba)\in\cS \times \cA$. For simplicity, we assume $R_{i,h}$ is deterministic; $P_{i,h}: \cS \times \cA \mapsto \Delta_{S}$ denote the transition probability so that $P_{i,h}(s' \mymid s, \ba)$ represents the probability transitioning from current state $s$ to the next state $s'$ conditioned on the action $\ba$. 

\paragraph{Markov policies and value functions.}  In this section, we focus on a class of Markov policy,
% where the policy class over any $(h,s)\in[H] \times \cS$ state and time step is restricted to quantal responses (see Section~\ref{sec:bounded-rationality}), 
where at any state $s$ at any time step $h\in[H]$, the action selection rule depends only on the current state $s$, and is independent of previous trajectories. Specifically, for any $1\leq i\leq n$, the $i$-th agent executes actions according to a policy $ \pi_i = \{\pi_{i,h} :\cS   \mapsto \Delta_{A_i}  \}_{1\leq h\leq H}$, with $\pi_{i,h}(a \mymid s)$ the probability of selecting action $a$ in state $s$ at time step $h$.
We consider the joint policy of all agents as a product policy defined as $\pi= (\pi_1,\ldots, \pi_n): \cS \times [H] \mapsto  \cP $ and the corresponding policy space as $\Pi$, namely, the joint action profile $\bm{a}$ of all agents is chosen according to the distribution specified by $\pi_h(\cdot \mymid s) = (\pi_{1,h}, \pi_{2,h}\ldots, \pi_{n,h})(\cdot\mymid s) \in  \cP $ conditioned on state
$s$ at time step $h$.  For any given  $\pi$, we employ $\pi_{-i}: \cS \times [H] \mapsto \cP_{-i}$ to represent the policies of all agents excluding the $i$-th agent.

Then for any joint policy $\pi$, we can introduce the corresponding value function  $\widetilde{V}_{i,h}(\pi): \cS \mapsto \mathbb{R}$ (resp.~Q-function $\widetilde{Q}_{i,h}(\pi): \cS \times \ac \mapsto \mathbb{R}$) of the $i$-th agent to characterize the long-term cumulative reward by as follows: 
\begin{align}
\forall (s,a,h)\in \cA\times \cS \times [H]: \quad 	 \widetilde{V}_{i,h}(\pi ;s) &\coloneqq\mathbb{E}_{\pi}\left[\sum_{t=h}^{H} R_{i,t}\big(s_{t}, \bm{a}_{t}\big)\mid s_{h}=s\right], \nonumber \\
  \widetilde{Q}_{i,h}(\pi; s,\ba)&\coloneqq\mathbb{E}_{\pi}\left[\sum_{t=h}^{H}  R_{i,t}\big(s_{t}, \bm{a}_{t}\big)\mid s_{h}=s, \ba_h = \ba\right], \label{eq:value-function-defn}
\end{align}
where the expectation is taken over the Markovian trajectory $\{(s_t,\bm{a}_t)\}_{h\leq t\leq H}$ by executing the joint policy $\pi$ (i.e., $\bm{a}_t\sim \pi_t(\cdot \mymid s_t)$) over the Markov dynamics.

\subsection{Risk-Averse Markov games}

% With the above notation in mind, for any given joint policy $\pi$ and $\mathcal{MG}$, now we are ready to introduce the risk-aversion Markov games (RAMGs) --- considering both the uncertainty of agents' strategies and the environments (rewards and transitions). Towards this, we first introduce two important risk-measure for agents' strategies and environment respectively, which will be incorporate to form RAMGs together. Resorting to two sets of convex (convex w.r.t. the first argument with the second fixed) and lower-semi-continuous functions $\{D_{\mathsf{pol},i}(\cdot,\cdot)\}_{i\in[n]}$ and $\{D_{\mathsf{env},i}(\cdot,\cdot)\}_{i\in[n]}$, the two risk-measure is defined as:  $\forall i\in[n]$,
Given these definitions, now we are ready to introduce the risk-averse Markov games (RAMGs) --- considering both the uncertainty of agents' strategies and the environments (rewards and transitions). Towards this, we first introduce two important risk-measures for agents' strategies and environment respectively to form RAMGs together. Resorting to two sets of convex (convex w.r.t. the first argument with the second argument fixed) and lower-semi-continuous functions $\{D_{\mathsf{pol},i}(\cdot,\cdot)\}_{i\in[n]}$ and $\{D_{\mathsf{env},i}(\cdot,\cdot)\}_{i\in[n]}$, given any policy $\pi$, the two risk-measures are defined as: 
\begin{align}
  \forall i\in[n]: \quad f_{\mathsf{pol},i}^{\pi}(Q_i)&= \sup_{p_i \in \cP_i } -\pi_i^T Q_i p_i - D_{\mathsf{pol},i}(p_i,\pi_{-i}), \notag \\
  \forall i\in[n]: \quad f_{\mathsf{env},i}(R, P, V) & =  \inf_{\widetilde{P} \in \Delta_S } \left[R + \widetilde{P}V +D_{\mathsf{env},i}(\widetilde{P}, P)\right]. \label{eq:RQEgame-markov}
\end{align}
Here, $Q_i \in \mathbb{R}^{A_i \times \sum_{j\neq i} A_j }$ is the payoff matrix for the $i$-th player, $R\in\mathbb{R}^A$, $P\in\Delta_S$, $V\in\mathbb{R}^S$. 
Note that for simplicity, we choose the same $\{D_{\mathsf{pol},i}\}_{i\in[n]}$ and  $\{D_{\mathsf{env},i}\}_{i\in[n]}$ for all time steps $h\in[H]$, which can be extended to diverse ones for each time step.  

% \begin{align}
%   \forall i\in[n]: \quad   f_{\mathsf{pol},i}^{\pi,\epsilon_i}(R, P, V)&= \sup_{p_i \in \cA_{-i} } -\pi_i^T f_{\mathsf{env},i}(R, P, V) p_i -D_i(p_i,\pi_{-i}) +\epsilon_i \nu_i(\pi_i). \\
%   f_{\mathsf{env},i}(R,P,V) &=  \inf_{\widetilde{P} \in \Delta(\cS)} R + \widetilde{P}V + D_{\mathsf{env},i}(\widetilde{P}, P)
% \end{align}
%For $D=(1/\tau)KL$, $f_{\mathsf{env},i}$ further simplies to \[ f_{\mathsf{env},i}(R,P,V) =  R - (1/\tau) \log (P \exp(-\tau V))\] \lxs{may delete the above example since this section is more general still.} \kp{sure. just adding it for my reference.}

Armed with those two penalty functions $f_{\mathsf{pol},i}^{\pi}(\cdot)$ and $f_{\mathsf{env},i}(\cdot)$, in a RAMG, the $i$-th agent aim to minimize its own risk-averse long-term loss function $\{V_{i,h}(\pi)\}_{ h\in[H] }$, where $ V_{i,h}(\pi): \cS \mapsto \mathbb{R}$ for any joint strategy $\pi \in \Pi$  of the $i$-th agent is defined recursively as follows: for all $(h,s,\ba)\in [H]\times \cS \times \cA$,
\begin{align}\label{eq:RQEgame-markov-V}
\begin{split}
&V_{i,h}(\pi;s)=  f_{\mathsf{pol},i}^{\pi}\left(Q_{i,h}(\pi;s,:) \right) 
    \end{split}
\end{align}
where
\begin{align}
   &Q_{i,h}(\pi;s, \ba) = f_{\mathsf{env},i}\left( R_{i,h}(s, \ba), P_{h,s,\ba}, V_{i,h+1}(\pi) \right) \quad \text{ and } \quad P_{h,s,\ba} \defn P_h(\cdot \mymid s, \ba) \in \mathbb{R}^{1\times S}. \label{eq:RQEgame-markov-Q}
\end{align}
In words, $Q_{i,h}(\pi; s,:) \in \mathbb{R}^{A_i \times \sum_{j\neq i} A_j }$ is the payoff matrix of the $i$-th player at state $s$ and time step $h$, where the value at the $a_i$-th row and the $a_{-i}$-th column is specified by $Q_{i,h}(\pi; s, \ba)$ with $\ba = (a_i, a_{-i})$.
Therefore, the long-term loss function can be represented as a non-linear sum over the future immediate rewards as below:
% $ V^{\pi}_{i,h}(s) = f_{\mathsf{pol},i}^{\pi} \circ f_{\mathsf{env},i}\big\{
%      R_{i,h}(s,:), P_{h,s,:}, \big[ f_{\mathsf{pol},i}^{\pi} \circ f_{\mathsf{env},i}\big( R_{i,h+1} ,P_{h+1}, \big[f_{\mathsf{pol},i}^{\pi} \circ f_{\mathsf{env},i}\big( R_{i,h+2}, P_{h+2}, \cdots \big)   \big] \big)  \big] \big\}.$
\begin{align}
     &V_{i,h}(\pi; s) = f_{\mathsf{pol},i}^{\pi} \circ f_{\mathsf{env},i}\big( R_{i,h}(s,:), P_{h,s,:}, \big[ f_{\mathsf{pol},i}^{\pi} \circ f_{\mathsf{env},i}\big( R_{i,h+1} ,P_{h+1}, \big[f_{\mathsf{pol},i}^{\pi} \circ f_{\mathsf{env},i}\big( R_{i,h+2}, P_{h+2}, \cdots \big)   \big] \big)  \big] \big).
\end{align}

\paragraph{Solution concepts of RAMGs: risk-averse quantal response equilibrium (RQE).}
After introducing the RAMG problem formulation, we will focus on the solution concept that incorporate both risk-aversion and bounded rationality of human decision-making --- RQE. To specify, we first introduce the regularized value functions for any given $\{\epsilon_i\geq0\}_{i\in[n]}$:
\begin{align}
    \forall (i,h,s) \in [n]\times [H]\times \cS: \quad  V^{\epsilon_i}_{i,h}(\pi;s) &= \sup_{p_i \in \cP_i } -\pi_i^T Q_{i,h}(\pi; s,:) p_i - D_{\mathsf{pol},i}(p_i,\pi_{-i}) + \epsilon_i \nu_i(\pi_i) \notag\\
    & = V_{i,h}(\pi;s)  + \epsilon_i \nu_i(\pi_i). \label{eq:new-defined-value-epsilon}
\end{align}

In addition, for any policy $\pi$, we denote $\pi_{i,-h} \defn \{\pi_{i,t}\}_{t=1,2,3,\cdots, h-1,h+1,\cdots, H}$ as the $i$-th agent's strategy over all time steps except step $h$.

\begin{definition}[RQE of Markov games] \label{def:RQE-markov-games}
    A product policy $\pi=\pi_1\times \cdots \times \pi_n \in \Pi$ is 
said to be a risk-averse quantal response equilibrium (RQE) of RAMG if  we have
\begin{align}
   \forall (i,s,h)\in [n] \times \cS\times [H]: \quad V_{i,h}^{\epsilon_i}(\pi; s) \leq \min_{\pi_h': \cS \mapsto \Delta_{\cA_i}} V_{i,h}^{\epsilon_i}\big( (\pi_h', \pi_{i,-h})\times \pi_{-i} ;s \big),
    \label{eq:defn-Nash-E}
\end{align}
where $\nu_i(\cdot)$ is a strictly convex function over the simplex and controls the class of the $i$-th player's QRE with certain degree of bounded rationality $\epsilon_i \geq 0$. 
	
\end{definition}

\subsection{Algorithm and Theoretical Guarantees }

\paragraph{RAMGs with perfect information.}
Notice that the risk-aversion Markov games (RAMGs) are  sequential games with multiple rounds, it is natural to learn the game from the last step $h=H$ and then back forward step by step. Towards this, at each time step $h$, for all $ (i,h,s,\ba)\in  [n] \times [H] \times \cS\times \cA$, we define the (original) payoff matrices for all agents at any state $s$ at time step $h$ as 
\begin{align}
 \forall (i,s,h)\in [n]\times \cS\times [H]: \quad \widehat{Q}_{i,h}(s, \ba) = R_{i,h}(s, \ba)  + \inf_{\widetilde{P} \in \Delta_S }  \widetilde{P} \widehat{V}_{i,h+1} + D_{\mathsf{env},i}(\widetilde{P}, P_{h,s,\ba}), \label{eq:nvi-iteration}
\end{align}
which shall be used to define a risk-averse matrix game against other players' strategies later so that we can calculate a QRE for it with tractable computation cost. Here, we denote the computing routine of a RQE of a matrix game as $\mathsf{RQE}(\cdot)$. Armed with this, we summarize the entire algorithm to compute RQE for RAMGs in \cref{alg:nash-ramg}. The following theorem shows that computing RQE using \cref{alg:nash-ramg} is computationally tractable when certain conditions of risk-measure and bounded rationality are satisfied, which can be directly verified based on Theorem~\ref{thm:nplayer}. 

%\kp{does $D_{\mathsf{env},i}$ not have any effect on the  following theorem?} \lxs{yes, it only change the original payoff matrix}

\begin{theorem}\label{thm:nplayer-markov}
    For any RAMG $\mathcal{MG}$, assume the penalty functions that give rise to the players' risk preferences $\{D_{\mathsf{pol},i}(\cdot,\cdot)\}$ are jointly convex in both of their arguments. If for any $(s,h)\in\cS \times [H]$, we can construct a $2n$-player game with the other $n$-additional players' payoff as \eqref{eq:adversarial-loss} with $\xi_{i,j}=\xi_{i,j}^*$ for all $(i,j) \in [n]\times [n]$. And for all $i=1,...,n$ we have
    $ \epsilon_i\ge \sum_{j\ne i} \xi_{j,i}^*$, then the output policy $\widehat{\pi}$ from Algorithm~\ref{alg:nash-ramg} is a RQE of $\mathcal{MG}$. 
\end{theorem}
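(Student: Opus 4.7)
The plan is to proceed by backward induction on the time step $h$ from $h=H$ down to $h=1$. The inductive hypothesis at step $h+1$ will be twofold: (i) the partial policy $\widehat{\pi}_{h+1:H}$ already produced by Algorithm~\ref{alg:nash-ramg} satisfies the one-step-deviation condition of \cref{def:RQE-markov-games} at every state and every $t\geq h+1$, and (ii) the computed values satisfy $\widehat{V}_{i,h+1}(s) = V_{i,h+1}(\widehat{\pi};s)$ for all $(i,s)$, where $\widehat{\pi}$ extends $\widehat{\pi}_{h+1:H}$ arbitrarily on earlier steps (the value is independent of that extension).

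At step $h$, the algorithm forms $\widehat{Q}_{i,h}(s,\ba)$ via~\eqref{eq:nvi-iteration}, which by the definition of $f_{\mathsf{env},i}$ in~\eqref{eq:RQEgame-markov} and the inductive equality $\widehat{V}_{i,h+1}=V_{i,h+1}(\widehat{\pi})$ coincides with $Q_{i,h}(\widehat{\pi};s,\ba)$. The subroutine $\mathsf{RQE}(\cdot)$ is invoked on the risk-averse matrix game with payoffs $\{\widehat{Q}_{i,h}(s,:)\}_{i=1}^{n}$, regularizers $\nu_i$, and bounded-rationality levels $\epsilon_i$. Under the hypothesis $\epsilon_i \geq \sum_{j\neq i}\xi_{j,i}^*$, Theorem~\ref{thm:nplayer} ensures that this matrix-game RQE is computationally tractable through no-regret learning, and returns $\widehat{\pi}_h(\cdot\mymid s)$ satisfying, for every $i$ and every $\pi_i'(\cdot\mymid s)\in\Delta_{A_i}$,
\[
f_{\mathsf{pol},i}^{\widehat{\pi}_h(\cdot\mymid s)}\!\bigl(\widehat{Q}_{i,h}(s,:)\bigr)+\epsilon_i\nu_i\bigl(\widehat{\pi}_{i,h}(\cdot\mymid s)\bigr) \;\leq\; f_{\mathsf{pol},i}^{(\pi_i'(\cdot\mymid s),\widehat{\pi}_{-i,h}(\cdot\mymid s))}\!\bigl(\widehat{Q}_{i,h}(s,:)\bigr)+\epsilon_i\nu_i\bigl(\pi_i'(\cdot\mymid s)\bigr).
\]

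The central step is lifting this local matrix-game property to the global condition of \cref{def:RQE-markov-games}, which is accomplished by a one-shot deviation argument. Deviating only at step $h$ (keeping $\widehat{\pi}_{i,-h}$ and $\widehat{\pi}_{-i}$ fixed) does not change $V_{i,h+1}$ at any state, because the value at step $h+1$ depends only on policies at steps $h+1,\ldots,H$. Therefore the $Q$-matrix seen at $(s,h)$ under any such deviation remains $\widehat{Q}_{i,h}(s,:)$, and the $\epsilon_i$-regularized value $V^{\epsilon_i}_{i,h}$ in~\eqref{eq:new-defined-value-epsilon} evaluated under $(\pi_h',\widehat{\pi}_{i,-h})\times\widehat{\pi}_{-i}$ coincides with the right-hand side of the matrix-game inequality. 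The matrix-game RQE inequality thus yields $V^{\epsilon_i}_{i,h}(\widehat{\pi};s) \leq V^{\epsilon_i}_{i,h}((\pi_h',\widehat{\pi}_{i,-h})\times\widehat{\pi}_{-i};s)$. For deviations at any later step $h'>h$, the inductive hypothesis already delivers the required inequality, and those inequalities are unaffected by the newly determined $\widehat{\pi}_h$ since $V^{\epsilon_i}_{i,h'}$ does not depend on $\widehat{\pi}_h$. Finally, setting $\widehat{V}_{i,h}(s)=V_{i,h}(\widehat{\pi};s)=f^{\widehat{\pi}_h(\cdot\mymid s)}_{\mathsf{pol},i}\bigl(\widehat{Q}_{i,h}(s,:)\bigr)$ restores invariant (ii) at step $h$, closing the induction.

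The main obstacle I anticipate is the careful decoupling of past and future in the value recursion: one must verify that the environmental risk operator $f_{\mathsf{env},i}$ commutes cleanly through backward sweeps (i.e., that substituting $\widehat{V}_{i,h+1}$ for $V_{i,h+1}(\widehat{\pi})$ inside~\eqref{eq:nvi-iteration} is exact, not merely approximate) and, correlatively, that the penalty functions $D_{\mathsf{env},i}$ and $D_{\mathsf{pol},i}$ interact appropriately with the scalar additivity of the reward-plus-value form used in~\eqref{eq:nvi-iteration}; this is guaranteed by translation invariance of convex risk measures (\cref{def:convexmetric}). A secondary (but milder) subtlety is that \cref{def:RQE-markov-games} quantifies only over single-step deviations $\pi_h':\cS\to\Delta_{A_i}$, so one does not need a global best-response or contraction argument—pure backward induction coupled with \cref{thm:nplayer} at each $(s,h)$ suffices.
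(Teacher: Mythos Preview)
Your proposal is correct and matches the paper's approach: the paper states only that the theorem ``can be directly verified based on Theorem~\ref{thm:nplayer},'' and your backward-induction argument---invoking Theorem~\ref{thm:nplayer} at each $(s,h)$ to certify the one-step-deviation inequality of Definition~\ref{def:RQE-markov-games}, while propagating the invariant $\widehat{V}_{i,h+1}=V_{i,h+1}(\widehat{\pi})$---is exactly the verification the paper leaves implicit. One minor point to watch: in Line~9 of Algorithm~\ref{alg:nash-ramg} the paper writes $f_{\mathsf{pol},i}^{\pi_h,\epsilon_i}$, so you should confirm whether your invariant (ii) should read $\widehat{V}_{i,h}=V_{i,h}(\widehat{\pi})$ or $\widehat{V}_{i,h}=V^{\epsilon_i}_{i,h}(\widehat{\pi})$ to be consistent with the recursion in~\eqref{eq:nvi-iteration}; either convention can be made to work, but the bookkeeping differs.
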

% \begin{remark}
    RAMGs can be seen as a sequential of matrix games conditioned on states. As a corollary of the results in matrix games (\cref{thm:nplayer}), the above theorem shows that the proposed solution concept --- RQE in RAMGs enables computational tractability with any well-defined risk-measures $\{D_{\mathsf{env},i} \}_{i\in[n]}$. 
% \end{remark}

\paragraph{RAMGs with imperfect environmental information.}
In the previous paragraph, we focus on the cases when the dynamics of the environment of RAMG --- reward and transition kernel are perfectly known. While in practice, especially in MARL tasks, the environments can be complicated and unknown that need to be learned by interacting with (sampling from) the environments. In this subsection, we focus on such scenarios and provide the theoretical guarantees of Algorithm~\ref{alg:nash-ramg}.

\begin{algorithm}[t]
% \DontPrintSemicolon
\begin{algorithmic}[1]
	\STATE \textbf{Input:} reward function $\{R_{i,h}\}_{i\in[n] \times h\in[H]}$, transition kernel $\{P_{h}\}_{h\in[H]}$. %\kp{$\epsilon_{i,h}$ should also be an input I think. they are hyperparameters to fit to some risk-averse solution.} \\ 
	\STATE  \textbf{Initialization:} $\widehat{Q}_{i,h}(s,a)= 0$, $\widehat{V}_{i,h}(s)=0$ for all $(s,\ba,h) \in \cS\times \cA  \times [H+1]$. \\
   \FOR{$h = H, H-1, \cdots,1$}
		
		\FOR{$i= 1,2,\cdots,n$ and $s\in \cS, \ba\in \cA $}
			
		\STATE	Set $\widehat{Q}_{i,h}(s, \ba)$ according to \eqref{eq:nvi-iteration}. \label{eq:robust-q-estimate}
		% \STATE $\qquad = r_{i,h}(s, \ba) + \inf_{ \cP \in \unb^{\ror_i}(\widehat{P}^{\no}_{h,s,\ba})} \cP \widehat{V}_{i,h+1}$. \label{eq:robust-q-estimate}
			\ENDFOR

		\FOR{$s\in \cS$}
		\STATE Get $\pi_{h}(s)=\{\pi_{i,h}(s)\}_{1 \leq i \leq n} \leftarrow  \mathsf{RQE} \Big( \big\{  f_{\mathsf{pol},i}^{\pi,\epsilon_i}(\widehat{Q}_{i,h}(s,:)  \big\}_{1 \leq i \leq n} \Big)$. \label{eq:nash-subroutine}
		 
		\STATE	Set $\widehat{V}_{i,h}(s) = f_{\mathsf{pol},i}^{\pi_h,\epsilon_i}\big( \widehat{Q}_{i,h}(s,:) \big) $.
		\ENDFOR
\ENDFOR
	
	\STATE \textbf{output:} $\widehat{\pi} = \{\pi_h\}_{1\leq h \leq H}$. %\kp{do we need these outputs: $\{\widehat{Q}_{i,h}\}$, $\{\widehat{V}_{i,h}\}$}
 
\end{algorithmic}
\caption{ Computation method of RQE for risk-averse Markov games (RAMGs).}
 \label{alg:nash-ramg}
\end{algorithm}

We focus on a fundamental sampling mechanism --- assuming access to a generative model or a simulator \cite{kearns1998finite,agarwal2020model}, which allows us to collect $N$ independent samples for each state-action pair generated based on the true environment $\{R_{i,h},P_h\}_{i\in[n], h\in[H]}$:
\begin{align}
    \forall (s,\ba,j)\in \cS\times\cA \times [n]:\quad  s_{i, s, \ba,h} \overset{i.i.d}{\sim} P_h(\cdot \mymid s,\ba),\quad r^i_{j,s,\ba,h} = R_{j,h}(s,\ba), \quad i \in [N].
\end{align}
The total sample size is, therefore, $N_{\mathsf{all}} 
 \defn  NS\prod_{i\in[n]} A_i$.

In such imperfect information cases, we propose a model-based approach, which first constructs an empirical reward function and nominal transition kernel based on the collected samples and then applies Algorithm~\ref{alg:nash-ramg} to learn a QNE. First, the empirical reward function and transition kernel $\widehat{P} \in \mathbb{R}^{S\prod_{i=1}^n A_i\times S}$ can be constructed on the basis of the empirical frequency of state transitions, 
\begin{align}
   \forall (s, \ba, j, h)\in \cS\times \cA  \times [n] \times [H]: \quad   \widehat{R}_{j,h}(s,\ba) = r^1_{j,s,\ba,h} \quad \text{and} \quad \widehat{P}_{h}(s'\mymid s,\ba) \defn  \frac{1}{N} \sum\limits_{i=1}^N \mathds{1} \big\{ s_{i,s,\ba} = s' \big\} .\label{eq:empirical-P-infinite}
\end{align}

Then with such empirical reward and transition, we can apply the oracle  summarized in Algorithm~\ref{alg:nash-ramg}.
The following theorem provides the finite-sample guarantees of this model-based approach as below:
\begin{theorem}\label{thm:imperfect-mg}
    For any RAMG $\mathcal{MG}$, we consider penalty functions $\{ D_{\mathsf{env},i}(\cdot,\cdot)\}_{i\in[n]}$ are $L$-Lipschitz w.r.t the $\ell_1$ norm of the second argument with any fixed first argument. Applying Algorithm~\ref{alg:nash-ramg} with the estimated reward $\{\widehat{R}_{j,h}\}$ and transition kernels $\{\widehat{P}_{h}\}_{h\in[H]}$ as input, the output solution $\widehat{\pi}$ is an $\delta$-RQE of $\mathcal{MG}$. Namely, we have
    \begin{align}
    \max_{(i,s,h)\in [n] \times \cS\times [H] } \Big\{V_{i,h}^{\epsilon_i}(\widehat{\pi}; s) - \min_{\pi_h': \cS \mapsto \Delta_{\cA_i}} V_{i,h}^{\epsilon_i}((\pi_h', \widehat{\pi}_{i,-h})\times \widehat{\pi}_{-i} ; s) \Big\} \leq \delta
    \end{align}
    as long as the total number of samples satisfying $$
    N_{\mathsf{all}} = NS\prod_{i\in[n]} A_i \geq 8S\prod_{i\in[n]} A_i HL \sqrt{ \frac{S}{N}\log\big(2SH\prod_{i\in[n]} A_i/\delta\big)}.$$
\end{theorem}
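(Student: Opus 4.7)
The plan is to split the suboptimality of $\widehat{\pi}$ under the true model into an \emph{algorithm error} part and a \emph{model error} part. On the algorithm side, \cref{thm:nplayer-markov} applied to the empirical game $\widehat{\mathcal{MG}}=\{H,\cS,\{\cA_i\},\widehat{R},\widehat{P}\}$ (where $\widehat{R}=R$ since rewards are deterministic) guarantees that \cref{alg:nash-ramg} returns an \emph{exact} RQE of $\widehat{\mathcal{MG}}$; that is, for every $i\in[n]$, $h\in[H]$, $s\in\cS$ and every deviation $\pi_h'$ one has $\widehat{V}^{\epsilon_i}_{i,h}(\widehat{\pi};s)\le \widehat{V}^{\epsilon_i}_{i,h}\big((\pi_h',\widehat{\pi}_{i,-h})\times\widehat{\pi}_{-i};s\big)$, where $\widehat{V}^{\epsilon_i}$ is the regularized value computed under $\widehat{P}$. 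For a true-model best response $\pi_\star'$, inserting $\pm\widehat{V}^{\epsilon_i}_{i,h}\big((\pi_\star',\widehat{\pi}_{i,-h})\times\widehat{\pi}_{-i};s\big)$ and using this empirical optimality yields
\begin{align*}
&V^{\epsilon_i}_{i,h}(\widehat{\pi};s) - V^{\epsilon_i}_{i,h}\big((\pi_\star',\widehat{\pi}_{i,-h})\times\widehat{\pi}_{-i};s\big) \\
&\quad \le \big[V^{\epsilon_i}_{i,h}(\widehat{\pi};s) - \widehat{V}^{\epsilon_i}_{i,h}(\widehat{\pi};s)\big] + \big[\widehat{V}^{\epsilon_i}_{i,h}\big((\pi_\star',\widehat{\pi}_{i,-h})\times\widehat{\pi}_{-i};s\big) - V^{\epsilon_i}_{i,h}\big((\pi_\star',\widehat{\pi}_{i,-h})\times\widehat{\pi}_{-i};s\big)\big].
\end{align*}
Because the regularizer $\epsilon_i\nu_i(\pi_i)$ does not involve the transition kernel, the two brackets are \emph{unregularized} model-error differences $V_{i,h}(\pi)-\widehat{V}_{i,h}(\pi)$, so it suffices to bound $e_h\defn \sup_{\pi\in\Pi}\max_{i,s}\big|V_{i,h}(\pi;s)-\widehat{V}_{i,h}(\pi;s)\big|$ uniformly.

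To control $e_h$ I use two Lipschitz facts. First, for any fixed $\pi$, the map $Q\mapsto f_{\mathsf{pol},i}^{\pi}(Q)$ is $1$-Lipschitz with respect to $\|\cdot\|_\infty$, since $|\pi_i^\top (Q_1-Q_2)p_i|\le \|Q_1-Q_2\|_\infty$ for probability vectors $\pi_i$ and $p_i$ (inspect the difference of suprema in the definition of $f^{\pi}_{\mathsf{pol},i}$). Second, $f_{\mathsf{env},i}(R,P,V)$ is $L$-Lipschitz in $P$ with respect to $\|\cdot\|_1$ by the hypothesis on $D_{\mathsf{env},i}$ (the inner minimizer of one argument shifts the value by at most $L\|P_1-P_2\|_1$), and $1$-Lipschitz in $V$ with respect to $\|\cdot\|_\infty$ because $\widetilde{P}V$ is non-expansive when $\widetilde{P}\in\Delta_S$. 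Composing through the recursion \eqref{eq:RQEgame-markov-V}--\eqref{eq:RQEgame-markov-Q} gives
\[
e_h \le L\max_{s,\ba}\big\|P_h(\cdot\mymid s,\ba)-\widehat{P}_h(\cdot\mymid s,\ba)\big\|_1 + e_{h+1},\qquad e_{H+1}=0,
\]
which telescopes to $e_1\le HL\cdot\max_{h,s,\ba}\|P_h(\cdot\mymid s,\ba)-\widehat{P}_h(\cdot\mymid s,\ba)\|_1$.

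Finally I invoke a standard $\ell_1$ concentration bound for empirical distributions (for instance Weissman's inequality) together with a union bound across the $SH\prod_{i\in[n]}A_i$ state--action--step triples to obtain that, with probability at least $1-\delta$,
\[
\max_{h,s,\ba}\big\|P_h(\cdot\mymid s,\ba)-\widehat{P}_h(\cdot\mymid s,\ba)\big\|_1 \le \sqrt{\tfrac{S}{N}\log\!\big(2SH{\textstyle\prod_{i\in[n]}A_i}/\delta\big)}.
\]
Combining with the telescoping bound on $e_1$ and plugging back into the suboptimality decomposition yields an error of at most $2HL\sqrt{(S/N)\log(\cdots)}$; rewriting in terms of $N_{\mathsf{all}}=NS\prod_{i\in[n]}A_i$ gives the stated sample-complexity condition. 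The main obstacle I anticipate is verifying that the composed envelope $f_{\mathsf{pol},i}^{\pi}\circ f_{\mathsf{env},i}$ is genuinely non-expansive in the right pairing of norms, so that per-step transition error accumulates only linearly in $H$ rather than compounding geometrically; this hinges on $\widetilde{P}\in\Delta_S$ being a probability vector inside $f_{\mathsf{env},i}$, and on the clean cancellation of the regularizer $\epsilon_i\nu_i$ in the suboptimality decomposition (otherwise quantal-response regularization would interact nontrivially with the model error and the telescoping would require more delicate bookkeeping).
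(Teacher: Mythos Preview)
Your proposal is correct and follows essentially the same route as the paper's proof: decompose the suboptimality via the empirical-RQE guarantee of \cref{thm:nplayer-markov}, use the $1$-Lipschitzness of the sup/inf envelopes together with the assumed $L$-Lipschitzness of $D_{\mathsf{env},i}$ to obtain a per-step recursion that telescopes linearly in $H$, and close with an $\ell_1$ concentration bound (the paper cites \cite[Lemma~17]{auer2008near}, which is the same content as the Weissman-type inequality you invoke). Your explicit observation that the regularizer $\epsilon_i\nu_i(\pi_i)$ cancels in the model-error terms is exactly what the paper uses implicitly when it identifies $\|V^{\epsilon_i}_{i,h}(\pi)-\widehat V^{\epsilon_i}_{i,h}(\pi)\|_\infty$ with $\|V_{i,h+1}(\pi)-\widehat V_{i,h+1}(\pi)\|_\infty + Ll_h$.
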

The proof is postponed to Appendix~\ref{proof:thm:imperfect-mg}.
We remark that $L$ can be some constant for various penalty function, such as $L=1$ when $\{D_{\mathsf{env},i}\}_{i\in[n]}$ are defined as any $\ell_p$ norm including total variation (TV). 
We also remark that similar sample complexity guarantees can be shown when $\{D_{\mathsf{env},i}\}_{i\in[n]}$ are defined as $\phi$-divergence and allude to analyses ideas in \cite{panaganti22a,xu-panaganti-2023samplecomplexity,shi2022distributionally,shi2023curious}.
Note that computing exact RQE can be computationally expensive or may not be necessary in practice. This result provides the first finite-sample guarantees of computing RQE in RAMGs. We show that even when the environment is not perfectly known but needs to be learned by samples, we can achieve approximate-RQEs of RAMGs using sufficient finite samples of size $N_{\mathsf{all}}$.

\subsection{Experiments and Evaluation} \label{appen:exp:grid-world}

\begin{figure}[ht]
	\centering
    % \vspace{-0.5cm}
	\includegraphics[width=1.0\linewidth]{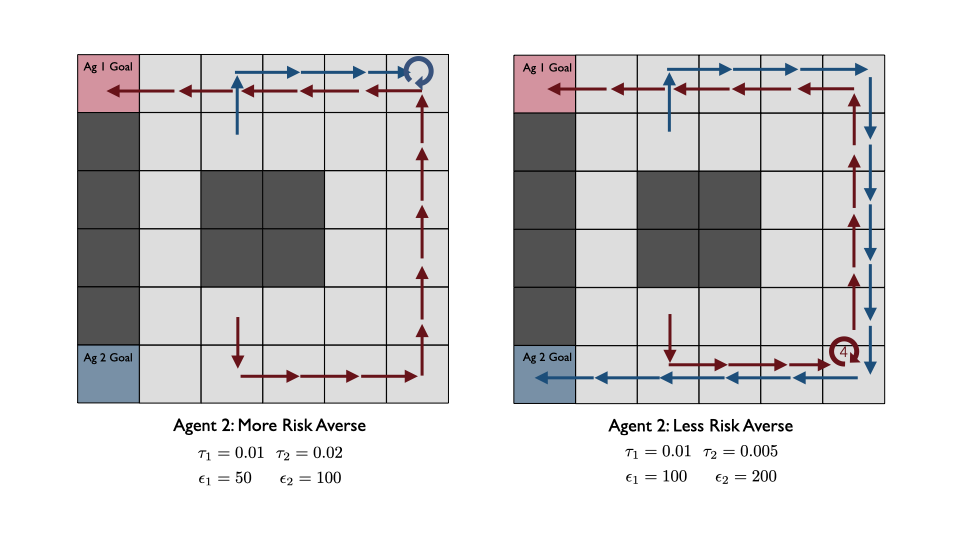}
    \vspace{-0.7cm}
	\caption{ \textit{Cliff Walk Description and Results}: The \textit{Cliff Walk} grid-world is depicted here  with color codes of grids: black is the cliff, blue and pink are agents/players 1 and 2's goals respectively, grey are grids for traversing.
    % In the left figure, the corresponding information of player/agent 1 and 2's rewards are given by $R_1$ and $R_2$ respectively.
    In both figures, red (blue) arrow depicts the actions taken by the learned policies for agent 1 (agent 2) reaching the states of the arrowhead. The agent 2 policy in the left figure showcases more risk-aversion (avoid each others' path to reduce the cliff risk) and less bounded rationality (not goal reaching).  The agent 2 policy in the right figure showcases less risk-aversion (more chances of cliff risk by sharing the path) and more bounded rationality (goal reaching). We also mention values of the risk-averse and bounded rational parameters $\epsilon_1,\epsilon_2$ and $\tau_1,\tau_2$ used in our experiments satisfying our theoretical conditions.
 }
	\label{fig:cliff}
\end{figure}

We consider a grid-world problem to test our algorithm (Algorithm~\ref{alg:nash-ramg}) for showcasing risk-averse and bounded rational behaviors. We note that we use our matrix game method (details described in \cref{appen:exp:matching-pennies}) for the $\mathsf{RQE}$ solver at Line 8 in Algorithm~\ref{alg:nash-ramg}. 

\paragraph{Cliff Walk Environment description:} A grid of size $6\times 6$ consists of some tiles that lead to the agents falling into the cliff.  
The environment setup is as follows: 
The cliff is the black grid with rewards $-2$. 
Agents/players are rewarded $0$ for taking each step and $1$ for when they reach the goal.
The agents move in the direction of action with probability $p_{\mathrm{d}}=0.9$ and takes uniform step in the unintended direction. To emphasize the risky choices of agents traversing along the cliff or traversing in each others' paths, we reduce $p_{\mathrm{d}}$ to $0.5$ when the agents are both within a grid apart. The episode horizon $H$ is $200$.
% \kp{todo: describe dynamics --> forcing risk-averse solutions --> describe/interpret the right two figures' results}

\paragraph{Algorithm details:}
We use the penalty function KL with the  log-barrier regularizer,
%\lxs{It will be better if we specify the form of log barrier here?}
as mentioned in \cref{tab:risk-measure,example:quantal-response-function}, for our experiments. We also use KL divergence for the environmental uncertainty metric $D_{\mathsf{env},i}$ with the same bounded rational parameter $\tau_i^{-1}$ used by $\mathsf{RQE}$.
We use the full information of this grid-world to evaluate the Q-values described in Algorithm~\ref{alg:nash-ramg}. Additionally, as we use the matrix games solver for $\mathsf{RQE}$, we have similar statistical deviations for different training runs.
In \cref{appen:cliff-l1}, we also showcase results with $\ell_1$ metric for the environmental uncertainty metric $D_{\mathsf{env},i}$.
% We also use KL and total variation probability distribution metrics for the environmental uncertainty metric $D_{\mathsf{env},i}$ with the same bounded rational parameter $\tau_i^{-1}$ used by $\mathsf{RQE}$.

\paragraph{Results Discussion:}
We present two results in \cref{fig:cliff}. For both results, agent 1 starts at the 5th row and 3rd column of the cliff-walking grid-world, and agent 2 starts at the 2nd row and 3rd column grid position.
In the left figure of \cref{fig:cliff}, 
agent 1 is risk-averse to the environmental uncertainty induced by the cliff grids by traversing the longer route shown by the red arrow.
Interestingly, complementing our risk-adjusted notion, agent 2 is risk-averse w.r.t both the environmental uncertainty and evading the risk of coming into agent 1's path (which leads to more chances of falling into the cliff). It makes a less rational and more risk-averse choice by reaching the end state shown by the blue arrow.
In the case of the right figure of \cref{fig:cliff}. They both make a more rational choice of reaching goal states but take a riskier path, increasing the chances of falling into the cliff. Interestingly, agent 1 waits at the corner grid (at the bottom right) until agent 2 passes it, since being at this grid is a more risk-averse agnostic choice against other players.
These empirical evaluations support our theoretical result (\cref{thm:nplayer-markov}) w.r.t the relative choice of risk-averse and bounded rationality parameters ($\epsilon_j$ and $\tau_j$'s) for providing different level of risk-averse solutions.

%We again provide our code with all the other details for reproducibility.

% We remark on the practicality of the $L$-Lipschitz assumption on $D_{\mathsf{env},i}$. We have $L=1$ considering $D_{\mathsf{env},i}=\|.\|_1$ due to the reverse triangle inequality.

\section{Conclusion}

By incorporating risk aversion and bounded rationality into agents' decision-making processes, we introduced a computationally tractable class of risk-averse quantal response equilibria (RQE) for matrix games and finite-horizon Markov games. This approach not only makes equilibrium computation tractable but also aligns well with observed human behavior in behavioral economics. 
We provided rigorous theoretical results for tractable computation of equilibrium concepts and also provided sample complexity results in the generative modeling setting for multi-agent reinforcement learning.
We also validated our findings on small-scale multi-agent game benchmarks: \textit{matching pennies games} and \textit{cliff walking environment}.

\section*{Acknowledgment}

The work of LS is supported in part by the Resnick Institute and Computing, Data, and Society Postdoctoral Fellowship at California Institute of Technology. 
KP acknowledges support from the `PIMCO Postdoctoral Fellow in Data Science' fellowship at the California Institute of Technology. EM acknowledges support from NSF Award 2240110.

%%%%%%%%%%%%%%%%%%%%%%%%%%%%%%%%%%%%%%%%%%%%%%%%%%%%%%%%%%%%

% \bibliographystyle{abbrvnat}
% \bibliographystyle{apalike} 
% \bibliographystyle{unsrtnat}
\bibliographystyle{plain}
\bibliography{references}

%%%%%%%%%%%%%%%%%%%%%%%%%%%%%%%%%%%%%%%%%%%%%%%%%%%%%%%%%%%%

\newpage

%%%%%%%%%%%%%%%%%%%%%%%%%%%%%%%%%%%%%%%%%%%%%%%%%%%%%%%%%%%%

\appendix

\section{Further results on computational tractability of RQE in aggregate risk-averse matrix games}\label{app:more}

In this section we provide further results on the aggregate risk aversion formulation of the risk-averse problem. This section provides a proof of the main result of Theorem~\ref{thm:2player} presented in the main body of the paper as well as the generalization of the results on computationaly tractability to $n$-players. A key step in the proof is relating the Nash equilibrium of the $2n$-player convex game presented where players have utilites of the form \eqref{eq:adversarial-loss} and \eqref{eq:main_loss} and RQE of the desired game~\eqref{eq:riskmat}. We then use this result to prove Theorem~\ref{thm:2player}.

\subsection{Relating Nash of the $2n$-player game and RQE}\label{app:nash}
We first note that this $2n$ player game is a convex game played over compact convex action sets and thus a Nash equilibrium must exist~\cite{Rosen}. To relate outcomes in this new game to that of the original game, we show that the strategies played by the original players in Nash equilibria of the $2n$-player game coincide with the RQE. %We provide its proof in \cref{proof:prop:equiv}. \lxs{there is a ???}

\begin{proposition}\label{prop:equiv}
    If $(\pi^*,p^*)$ is a Nash equilibrium of the $2n$-player game, then $\pi^*$ is a RQE of Game~\eqref{eq:RQEgame}. Furthermore, if $\pi^*$ is a RQE of Game~\eqref{eq:RQEgame}, then  $(\pi^*,p^*)$ is a Nash equilibrium of the $2n$-player game, where:
    \begin{align}
        p_i^*=\arg\max_{p_i \in \mathcal{P}_{-i}} -{\pi^*_i}^T R_i p_i -D_i(p_i,\pi^*_{-i}).
    \end{align}
\end{proposition}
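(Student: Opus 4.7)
The approach is to exploit the decoupled structure of the $2n$-player game: each adversarial loss $\bar J_i(\pi, p_i, p_{-i})$ depends on $p$ only through $p_i$ (the coupling across adversaries is purely through $\pi$), so the Nash condition for adversary $i$ at a profile $(\pi^*, p^*)$ reduces to
\begin{align*}
p_i^* \in \arg\min_{p_i \in \mathcal{P}_{-i}} \bar J_i(\pi^*, p_i, p_{-i}^*) = \arg\max_{p_i \in \mathcal{P}_{-i}} \bigl\{-(\pi_i^*)^T R_i p_i - D_i(p_i, \pi_{-i}^*)\bigr\},
\end{align*}
because the $-\sum_{j\ne i}\xi_{i,j}\nu_j(\pi_j^*)$ term is constant in $p_i$. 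This matches the definition of $p_i^*$ in the proposition and yields the key identity $J_i(\pi_i^*, \pi_{-i}^*, p^*) = f_i^{\epsilon_i}(\pi_i^*, \pi_{-i}^*)$, since $J_i$ evaluated at the adversary's argmax recovers the inner supremum defining $f_i$ plus the regularizer $\epsilon_i\nu_i(\pi_i^*)$.

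\textbf{Forward direction (Nash $\Rightarrow$ RQE).} Given a Nash $(\pi^*, p^*)$ of the $2n$-player game, the identity above combined with the original-player Nash inequality and the trivial bound $J_i(\pi_i, \pi_{-i}^*, p^*) \le f_i^{\epsilon_i}(\pi_i, \pi_{-i}^*)$ (here $p_i^*$ is merely a feasible point in the sup defining the envelope) produces the chain
\begin{align*}
f_i^{\epsilon_i}(\pi_i^*, \pi_{-i}^*) = J_i(\pi_i^*, \pi_{-i}^*, p^*) \le J_i(\pi_i, \pi_{-i}^*, p^*) \le f_i^{\epsilon_i}(\pi_i, \pi_{-i}^*)
\end{align*}
for every $\pi_i \in \Delta_{A_i}$, which is exactly the RQE condition for player $i$.

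\textbf{Reverse direction (RQE $\Rightarrow$ Nash).} Suppose $\pi^*$ is a RQE and define $p^*$ as in the statement; the adversary conditions are then immediate by construction. The substantive step is to show that each $\pi_i^*$ minimizes the \emph{fixed-$p^*$} loss $\pi_i \mapsto J_i(\pi_i, \pi_{-i}^*, p^*)$, whereas the RQE property only asserts minimality of the \emph{envelope} $\pi_i \mapsto f_i^{\epsilon_i}(\pi_i, \pi_{-i}^*) = \sup_{p_i} g_i(\pi_i, p_i)$, where I set $g_i(\pi_i, p_i) \defn -\pi_i^T R_i p_i - D_i(p_i, \pi_{-i}^*) + \epsilon_i \nu_i(\pi_i)$. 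To close this gap I invoke Sion's minimax theorem: $g_i$ is continuous, convex in $\pi_i$, and concave in $p_i$ over the compact convex sets $\Delta_{A_i}$ and $\mathcal{P}_{-i}$, so the minimax and maximin values agree and the set of saddle points factors as a product $X_i^* \times Y_i^*$. Since $\pi_i^* \in X_i^*$, it pairs with any $\hat p_i \in Y_i^*$ to form a saddle point, and the saddle-point property forces $\hat p_i \in \arg\max_{p_i} g_i(\pi_i^*, p_i)$, which is precisely the defining condition for $p_i^*$. Uniqueness of this argmax (guaranteed when $D_i(\cdot, \pi_{-i}^*)$ is strictly convex) pins down $\hat p_i = p_i^*$; otherwise one may simply select $p_i^*$ as the saddle-paired maximizer. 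Hence $(\pi_i^*, p_i^*)$ is a saddle point, so $\pi_i^* \in \arg\min_{\pi_i} g_i(\pi_i, p_i^*) = \arg\min_{\pi_i} J_i(\pi_i, \pi_{-i}^*, p^*)$, establishing the remaining original-player Nash condition.

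\textbf{Main obstacle.} The forward implication is essentially an inequality chain; the substantive work lives in the reverse direction, where one must upgrade inner-supremum optimality of a RQE to fixed-adversary optimality in the expanded game. Sion's minimax (enabled by the convex-concave structure that the hypotheses on $D_i$ and $\nu_i$ provide) is the essential tool, and the only subtlety is the potential non-uniqueness of the argmax defining $p_i^*$, which is either sidestepped by a strict-convexity assumption on $D_i$ or handled by explicitly selecting the saddle-paired representative inside $Y_i^*$.
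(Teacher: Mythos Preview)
Your proof is correct and follows essentially the same approach as the paper's own argument: both establish the forward direction via the identity $J_i(\pi^*,p^*)=f_i^{\epsilon_i}(\pi^*)$ (coming from the adversary's optimality) together with the trivial bound $J_i\le f_i^{\epsilon_i}$, and both rely on the convex--concave saddle structure for the reverse direction. Your treatment of the reverse direction is in fact more careful than the paper's, which simply writes $\min_{\pi_i}\max_{p_i} J = \min_{\pi_i} J(\cdot,p_i^*,\cdot)$ without justification; your explicit invocation of Sion's theorem and your discussion of the possible non-uniqueness of the argmax defining $p_i^*$ fill exactly that gap.
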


\begin{proof}
    
To prove this result, we rely on the definitions of Nash and RQE. We begin by proving the first claim of the proposition. Recall that a Nash equilibrium is a joint strategy $(\pi^*,p^*) \in \mathcal{P}\times \bar{ \mathcal{P}} $ such that, for all $i=1,...,n$:
\begin{align}
    J_i(\pi^*_i,\pi_{-i}^*,p^*)&\le J_i(\pi'_i,\pi_{-i}^*,p^*) \ \ \forall \ \ \pi_i' \in \Delta_{A_i}\label{eq:Ji}\\
    \bar J_i(\pi^*,p_i^*,p_{-i}^*)&\le \bar J_i(\pi^*,p_i',p_{-i}^*) \ \ \forall \ \ p_i' \in \mathcal{P}_{-i} \nonumber
\end{align} 
Noting that each $J_i(\pi^*_i,\pi^*_{-i}, p^*_i, p^*_{-i})$ does not depend on $p^*_{-i}$, to show the forward direction, we start by taking the supremum of the right hand side of ~\eqref{eq:Ji} over $p_i\in \mathcal{P}_{-i}$. Thus, we find that, for all $i$:
\begin{align*}
    J_i(\pi^*_i,\pi_{-i}^*,p^*)&\le \sup_{p_i \in \mathcal{P}_{-i}} J_i(\pi'_i,\pi_{-i}^*, p_i, p^*_{-i})\ \ \forall \ \ \pi_i' \in \Delta_{A_i}\\
    &=f_i^{\epsilon_i}(\pi'_i,\pi_{-i}^*) \ \ \forall \ \ \pi_i' \in \Delta_{A_i},
\end{align*}
where in the second line we used the fact that for any $ \pi,p_{-i}$, by definition, \[\sup_{p_i \in \mathcal{P}_{-i}} J_i(\pi,p_i,p_{-i})=f_i^{\epsilon_i}(\pi).\]

It remains to show that for any $i=1,...,n$, $J_i(\pi^*_i,\pi^*_{-i},p^*)=f_i^{\epsilon_i}(\pi^*_i,\pi_{-i}^*)$. This follows from the fact that the simplex is compact. Indeed for any fixed $\pi$, the function $J_i(\pi,p_i,p_{-i})$ is concave in $p_i$. Thus, the supremum is attained at $p_i^*$ since \[p_i^*=\arg\min_{p_i \in \mathcal{P}_{-i}}  \bar J_i(\pi^*,p_i,p_{-i}^*)=\arg\max_{p_i \in \mathcal{P}_{-i}} J_i(\pi^*,p_i,p_{-i}^*).\]

Since the same argument holds for all $i$, we have shown that if $(\pi^*,p^*)$ is a Nash equilibrium of the $2n$-player game, then:
\begin{align*}
    f^{\epsilon_i}_i(\pi_i^*,\pi_{-i}^*)&\le f^{\epsilon_i}_i(\pi_i ,\pi_{-i}^*) \ \ \forall \ \ \pi_i\in \Delta_{A_i}\\
\end{align*}
which is the definition of a RQE. 

To prove the second claim, suppose that $\pi^*$ is a RQE. By definition, we have that if \[p_i^*=\arg\max_{p_i \in \mathcal{P}_{-i}} -{\pi^*_i}^T R_i p_i -D_i(p_i,\pi^*_{-i}).\]
then $p_i^*$ by construction satisfies the condition of a Nash equilibrium of the $2n$-player game:
\[\bar J_i(\pi^*,p_i^*,p_{-i}^*)\le \bar J_i(\pi^*,p_i',p_{-i}^*) \ \ \forall \ \ p_i' \in \mathcal{P}_{-i}.\]
It remains to show that $\pi^*$ also satisfies the necessary conditions on $J_i$. To see that this must hold, by definition, we must have that $(\pi_i^*,p_i^*)$ satisfies:
\[J_i(\pi_i^*,p_i^*,\pi_{-i}^*,p_{-i}^*)=f_i^{\epsilon_i}(\pi^*_i,\pi_{-i}^*)=\min_{\pi_i\in \Delta_{A_i}}\max_{p_i \in \mathcal{P}_{-i}} J(\pi_i,p_i,\pi_{-i}^*,p_{-i}^*)\]
Further manipulations allow us to show that:
\begin{align*}
     J_i(\pi^*_i,\pi_{-i}^*,p^*)&= f_i^\epsilon(\pi^*_i,\pi_{-i}^*)\\
     &=\min_{\pi_i\in \Delta_{A_i}}\max_{p_i \in \mathcal{P}_{-i}} J(\pi_1,p_i,\pi_{-i}^*,p_{-i}^*)\\
     &=\min_{\pi_i\in \Delta_{A_i}}J(\pi_1,p_i^*,\pi_{-i}^*,p_{-i}^*)\\
     &\le J(\pi'_1,p_i^*,\pi_{-i}^*,p_{-i}^*) \ \ \forall \ \ \pi'_i\in \Delta_{A_i}
\end{align*}
Since this holds true for all $i$, this completes the proof.  
\end{proof}

\subsection{Proof of Theorem~\ref{thm:2player} and immediate corollaries}\label{proof:thm:2player}

Given the results showing that finding a Nash equilibrium of the $2n$-player convex game allows us to compute RQE, we present the proof of Theorem~\ref{thm:2player}. We also provide two corollaries that specialize the results to different risk-metrics and quantal response functions.

\begin{theorem}[Restatement of Theorem~\ref{thm:2player}]
 Assume the penalty functions that give rise to the players' risk preferences $D_1(\cdot,\cdot)$ and $D_2(\cdot,\cdot)$ are jointly convex in both their arguments. If $\sigma$ is a CCE of the four player game with $\xi_1=\xi_1^*$ and $\xi_2=\xi_2^*$, and
    % $({\epsilon_1}/{\xi^*_1})\ge ({\xi_2^*}/{\epsilon_2})$
    \[ \frac{\epsilon_1}{\xi^*_1}\ge \frac{\xi_2^*}{\epsilon_2}, \] 
    then $\hat \pi_1=\mathbb{E}_\sigma[\pi_1]$ and $\hat \pi_2=\mathbb{E}_\sigma[\pi_2]$ constitute a RQE of the risk-adjusted game.
\end{theorem}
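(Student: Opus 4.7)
The strategy is equilibrium collapse. I would establish that, with the prescribed adversary weights $\xi_{1,2}=\xi_1^*$ and $\xi_{2,1}=\xi_2^*$, the auxiliary $4$-player game is ``socially convex'': a suitable positive-weight combination of the four losses is jointly convex in $(\pi,p)$, while each individual loss is jointly concave in the other three players' actions. A Jensen-based averaging argument then reduces any CCE of the $4$-player game to a Nash equilibrium at the averaged profile, after which Proposition~\ref{prop:equiv} immediately identifies $\hat{\pi}$ as a RQE of the original risk-adjusted game.

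First I would verify the structural conditions. Joint convexity of $D_i$ in both arguments gives joint concavity of $J_i$ in $(\pi_{-i},p_1,p_2)$: the bilinear term $-\pi_i^T R_i p_i$ is linear in $p_i$, the term $-D_i(p_i,\pi_{-i})$ is jointly concave in $(p_i,\pi_{-i})$, and there is no $p_{-i}$ dependence. The extremal choice $\xi_{i,-i}=\xi_i^*$ plays the analogous role for $\bar{J}_i$: by definition of $\xi_i^*$ it makes $H_i(p_i,\pi_{-i})=D_i(p_i,\pi_{-i})-\xi_i^*\nu_{-i}(\pi_{-i})$ concave in $\pi_{-i}$, and combined with the $\pi_i$-linearity of $\pi_i^T R_i p_i$ and the absence of $p_{-i}$, this yields joint concavity of $\bar{J}_i$ in $(\pi_i,\pi_{-i},p_{-i})$. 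Choosing weights $\mu_1=\lambda_1,\ \mu_2=\lambda_2>0$ and summing, the bilinear and $D_i$ terms cancel in $J_i+\bar{J}_i=\epsilon_i\nu_i(\pi_i)-\xi_i^*\nu_{-i}(\pi_{-i})$, leaving
\[
\Phi \;:=\; \lambda_1(J_1+\bar{J}_1)+\lambda_2(J_2+\bar{J}_2) \;=\; (\lambda_1\epsilon_1-\lambda_2\xi_2^*)\,\nu_1(\pi_1)+(\lambda_2\epsilon_2-\lambda_1\xi_1^*)\,\nu_2(\pi_2),
\]
which is convex in $(\pi,p)$ precisely when $\xi_2^*/\epsilon_1 \le \lambda_1/\lambda_2 \le \epsilon_2/\xi_1^*$. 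Nonemptiness of this interval is exactly the hypothesis $\epsilon_1\epsilon_2\ge\xi_1^*\xi_2^*$, i.e.\ $\epsilon_1/\xi_1^*\ge\xi_2^*/\epsilon_2$, and I would fix any admissible pair.

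Second, I would collapse the CCE to a Nash equilibrium at $(\hat{\pi},\hat{p})$, where $\hat{\pi}_i:=\mathbb{E}_\sigma[\pi_i]$ and $\hat{p}_i:=\mathbb{E}_\sigma[p_i]$. Multiplying the four CCE inequalities by the weights $(\lambda_1,\lambda_2,\lambda_1,\lambda_2)$ and summing, then applying Jensen's inequality on the LHS (convexity of $\Phi$) and on each RHS term (joint concavity of $J_i$ or $\bar{J}_i$ in the three non-deviating coordinates), yields for every deviation tuple $(\pi_1',\pi_2',p_1',p_2')$,
\[
\Phi(\hat{\pi},\hat{p})\;\le\;\lambda_1 J_1(\pi_1',\hat{\pi}_2,\hat{p})+\lambda_2 J_2(\hat{\pi}_1,\pi_2',\hat{p})+\lambda_1\bar{J}_1(\hat{\pi},p_1',\hat{p}_2)+\lambda_2\bar{J}_2(\hat{\pi},\hat{p}_1,p_2').
\]
Setting three of the four deviations equal to the corresponding averaged strategy isolates a single best-response inequality on the RHS; since all weights are strictly positive, one obtains, for each $i\in\{1,2\}$, both $J_i(\hat{\pi},\hat{p})\le J_i(\pi_i',\hat{\pi}_{-i},\hat{p})$ and $\bar{J}_i(\hat{\pi},\hat{p})\le\bar{J}_i(\hat{\pi},p_i',\hat{p}_{-i})$ for all deviations. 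Thus $(\hat{\pi},\hat{p})$ is a Nash equilibrium of the $4$-player game, and Proposition~\ref{prop:equiv} concludes that $\hat{\pi}$ is a RQE of the risk-adjusted game.

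The main obstacle is securing the joint concavity of $\bar{J}_i$ in the three coordinates it does not control: concavity of $H_i$ in $\pi_{-i}$ alone is insufficient, but combined with the $\pi_i$-linearity of the bilinear term and the vacuous $p_{-i}$-dependence the Hessian decouples into block-diagonal form, which is exactly what the extremal choice $\xi_{i,-i}=\xi_i^*$ furnishes. Once this concavity and the convexity of $\Phi$ are in place, the weighted-sum cancellation pins down $\epsilon_1\epsilon_2\ge\xi_1^*\xi_2^*$ as the sharp threshold, and the rest of the argument is routine Jensen.
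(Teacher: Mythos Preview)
Your proposal is correct and follows essentially the same route as the paper: verify joint concavity of each loss in the non-controlled variables (using joint convexity of $D_i$ for $J_i$ and the extremal $\xi_i^*$ for $\bar J_i$), observe that $J_i+\bar J_i$ telescopes to $\epsilon_i\nu_i(\pi_i)-\xi_i^*\nu_{-i}(\pi_{-i})$, choose weights making the combined potential convex, apply Jensen in both directions, and finish with Proposition~\ref{prop:equiv}. The only cosmetic difference is that the paper picks the specific weight $\lambda=\xi_2^*/(\epsilon_1+\xi_2^*)$ to zero out one coefficient and then runs a sandwich/squeeze argument to force termwise equality, whereas you keep both coefficients nonnegative and specialize three of the four deviations to $(\hat\pi,\hat p)$ directly; both extract the same Nash inequalities.
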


The proof follows by showing that CCE of the $2n$-player convex game coincide with Nash equilibria of the $2n$-player convex game, and then invoking results from the previous section. 

\begin{proof}
To prove this, we show that $(\hat \pi_1,\hat p_1,\hat \pi_2,\hat p_2)$ is a Nash equilibrium of the four player game and then invoke Proposition~\ref{prop:equiv} (where $\hat \pi_1=\mathbb{E}_\sigma[\pi_1]$, $\hat p_2=\mathbb{E}_\sigma[p_2]$,$\hat \pi_2=\mathbb{E}_\sigma[\pi_2]$, and $\hat p_2=\mathbb{E}_\sigma[p_2]$).

To begin, we focus on $J_1$ and $\bar J_1$. By symmetry, the same arguments hold for $J_2$ and $\bar J_2$. We first note that $J_1$ is (strictly) convex in $\pi_1$ for all fixed $p_1,\pi_2,p_2$ and jointly concave in $p_1,\pi_2,p_2$ for all fixed $\pi_1$. Now, starting with the definition of CCE for $\sigma$ and via Jensen's inequality, we have that:

% \[ \mb{E}_\sigma[ J_1(\pi_1,p_1,\pi_2,p_2)] \le  J_1( \pi'_1,\hat p_1,\hat \pi_2,\hat p_2) \ \ \ \forall \pi'_1 \in \Delta_{A_1} \]

\begin{align}
  \mb{E}_\sigma[ J_1(\pi_1,p_1,\pi_2,p_2)] &\leq  \mb{E}_{(p_1,\pi_2,p_2)\sim\sigma}[ J_1(\pi_1',p_1,\pi_2,p_2)] \notag \\
  & = \mb{E}_{p_1 \sim \hat{p}_1}\left[ \mb{E}_{ \pi_2 \sim \sigma \mymid p_1} \left[   \mb{E}_{p_2 \sim \sigma \mymid p_1, \pi_2} \left[J_1(\pi_1',p_1,\pi_2,p_2)  \right] \right] \right] \notag \\
  & \leq \mb{E}_{p_1 \sim \hat{p}_1}\left[ \mb{E}_{\pi_2 \sim \sigma \mymid p_1} \left[ J_1(\pi_1',p_1,\pi_2, \mb{E}_{\sigma \mymid p_1, \pi_2}[p_2])  \right] \right] \notag \\
  & \leq \mb{E}_{p_1 \sim \hat{p}_1}\left[  J_1(\pi_1',p_1, \mb{E}_{\sigma \mymid p_1}[\pi_2], \mb{E}_{\sigma \mymid p_1 }[p_2])  \right] \notag \\
  & \leq J_1(\pi_1', \hat{p}_1, \hat{\pi}_2, \hat{p}_2)\quad\forall \pi'_1 \in \Delta_{A_1}.\label{eq:cce-jensen-fact}
\end{align}

% \lxs{I write the process out for explanation
% \begin{align}
%     \mb{E}_\sigma[ J_1(\pi_1,p_1,\pi_2,p_2)] \leq \mb{E}_{(p_1,\pi_2,p_2)\sim\sigma}[ J_1(\pi_1',p_1,\pi_2,p_2)], \quad \forall \pi'_1 \in \Delta_{A_1}.
% \end{align}
% which holds by $\sigma$ is the CCE of the four players' game. To continue, the properties of conditional expectation yields
% \begin{align}
%   & \mb{E}_{(p_1,\pi_2,p_2)\sim\sigma}[ J_1(\pi_1',p_1,\pi_2,p_2)] \notag \\
%   & = \mb{E}_{p_1 \sim \hat{p}_1}\left[ \mb{E}_{ \pi_2 \sim \sigma \mymid p_1} \left[   \mb{E}_{p_2 \sim \sigma \mymid p_1, \pi_2} \left[J_1(\pi_1',p_1,\pi_2,p_2)  \right] \right] \right] \notag \\
%   & \leq \mb{E}_{p_1 \sim \hat{p}_1}\left[ \mb{E}_{\pi_2 \sim \sigma \mymid p_1} \left[ J_1(\pi_1',p_1,\pi_2, \mb{E}_{\sigma \mymid p_1, \pi_2}[p_2])  \right] \right] \notag \\
%   & \leq \mb{E}_{p_1 \sim \hat{p}_1}\left[  J_1(\pi_1',p_1, \mb{E}_{\sigma \mymid p_1}[\pi_2], \mb{E}_{\sigma \mymid p_1 }[p_2])  \right] \notag \\
%   & \leq J_1(\pi_1', \hat{p}_1, \hat{\pi}_2, \hat{p}_2),
% \end{align}
% where all the inequalities holds by that $J_1$ is concave with respect to $p_1, \pi_2, p_2$ with others fixed and the Jensen's inequlaity.
% } 
Similarly, we note that for $\xi_1=\xi_1^*$, $\bar J_1$ is jointly concave in $\pi_1,\pi_2,p_2$, such that:

\[ \mb{E}_\sigma[ \bar J_1(\pi_1,p_1,\pi_2,p_2)] \le  \bar J_1(\hat\pi_1, p_1',\hat \pi_2,\hat p_2)\ \ \ \forall p'_1 \in \Delta_{A_2}. \]

Letting $\hat z=( \hat \pi_1,\hat p_1,\hat \pi_2,\hat p_2)$ and $z=( \pi_1,p_1,\pi_2,p_2)$ to simplify notation, we can now take a weighted sum of the four utility functions with $\lambda \in (0,1)$ to find that:
\begin{align*}
    &\frac{\lambda}{2} \left(J_1(\hat z)+\bar J_1( \hat z)\right)+\frac{1-\lambda}{2}\left( J_2(\hat z)+\bar J_2( \hat z)\right)\\
    &\qquad \ge \mb{E}_\sigma\left[ \frac{\lambda}{2} \left(J_1(z)+\bar J_1(  z)\right)+\frac{1-\lambda}{2}\left( J_2( z)+\bar J_2( z)\right)\right] \\
    &\qquad = \frac{1}{2}\left(\lambda\epsilon_1-(1-\lambda)\xi^*_2\right)\mb{E}_\sigma\left[ \nu_1(\pi_1) \right] +\frac{1}{2}\left((1-\lambda)\epsilon_2-\lambda\xi^*_1\right)\mb{E}_\sigma\left[ \nu_2(\pi_2) \right].
    % &\qquad \ge \frac{1}{2}\left((1-\lambda)\epsilon_1-\lambda\xi^*_1\right)\mb{E}_\sigma\left[ \nu_1(\pi_1) \right] +\frac{1}{2}\left(\lambda\epsilon_2-(1-\lambda)\xi^*_2\right)\mb{E}_\sigma\left[ \nu_2(\pi_2) \right]
\end{align*}
Choosing $\lambda={\xi^*_2}/{(\epsilon_1 +\xi^*_2)}$, we can further simplify to find that:
% \begin{align*}
%     &\frac{\lambda}{2} \left(J_1(\hat z)+\bar J_1( \hat z)\right)+\frac{1-\lambda}{2}\left( J_2(\hat z)+\bar J_2( \hat z)\right)\\
%     &\qquad \ge \frac{1}{2}\left(\frac{\epsilon_1\epsilon_2}{\epsilon_1+\xi_1}-\frac{\xi^*_1\xi^*_2}{\epsilon_1+\xi_1}\right)\mb{E}_\sigma\left[ \nu_2(\pi_2) \right]\\
%     &\qquad \ge \frac{1}{2}\left(\frac{\epsilon_1\epsilon_2}{\epsilon_1+\xi_1}-\frac{\xi^*_1\xi^*_2}{\epsilon_1+\xi_1}\right)\nu_2(\hat \pi_2)\\
%     &\qquad =\frac{\lambda}{2} \left(J_1(\hat z)+\bar J_1( \hat z)\right)+\frac{1-\lambda}{2}\left( J_2(\hat z)+\bar J_2( \hat z)\right)
% \end{align*}
\begin{align*}
    &\frac{\lambda}{2} \left(J_1(\hat z)+\bar J_1( \hat z)\right)+\frac{1-\lambda}{2}\left( J_2(\hat z)+\bar J_2( \hat z)\right)\\
    &\qquad \ge \frac{1}{2}\left(\frac{\epsilon_1\epsilon_2}{\epsilon_1+\xi^*_2}-\frac{\xi^*_1\xi^*_2}{\epsilon_1+\xi^*_2}\right)\mb{E}_\sigma\left[ \nu_2(\pi_2) \right]\\
    &\qquad \ge \frac{1}{2}\left(\frac{\epsilon_1\epsilon_2}{\epsilon_1+\xi^*_2}-\frac{\xi^*_1\xi^*_2}{\epsilon_1+\xi^*_2}\right)\nu_2(\hat \pi_2)\\
    &\qquad =\frac{\lambda}{2} \left(J_1(\hat z)+\bar J_1( \hat z)\right)+\frac{1-\lambda}{2}\left( J_2(\hat z)+\bar J_2( \hat z)\right),
\end{align*}
where we used the fact that $\frac{\epsilon_1}{\xi^*_1}\ge \frac{\xi_2^*}{\epsilon_2} $ by assumption and invoke Jensen's inequality for $\nu_2$ at the second inequality.
% to guarantee convexity of the function of $\pi_2$ in the second line, allowing us to use Jensen's inequality to derive the third line.
Thus we have shown that:
\begin{align} 
&\frac{\lambda}{2} \left(J_1(\hat z)+\bar J_1( \hat z)\right)+\frac{1-\lambda}{2}\left( J_2(\hat z)+\bar J_2( \hat z)\right)\\
&\qquad = \frac{\lambda}{2} \left(\mb{E}_\sigma\left[J_1(z)\right]+\mb{E}_\sigma\left[\bar J_1(  z)\right]\right)+\frac{1-\lambda}{2}\left( \mb{E}_\sigma\left[J_2(z)\right]+\mb{E}_\sigma\left[\bar J_2(  z)\right]\right). \label{eq:bounded-seq-1} 
\end{align}
By \cref{eq:cce-jensen-fact}, we observe: \begin{align}\mb{E}_\sigma\left[J_1(z)\right]\leq J_1(\hat z), 
    \mb{E}_\sigma\left[\bar J_1(z)\right]\leq\bar J_1(\hat z), \mb{E}_\sigma\left[J_2(z)\right]\leq J_2(\hat z),
    \mb{E}_\sigma\left[\bar J_2(z)\right]\leq\bar J_2(\hat z). \label{eq:bounded-seq-2} \end{align}
% \lxs{Question: why the above can imply the following? By taking different $\lambda$? But it seems $J_1(\hat z),\bar J_1( \hat z)$ can't be decoupled?}
% \kp{good catch! firstly, this equality only holds for $\lambda=\frac{\xi^*_2}{\epsilon_1 +\xi^*_2}$. from earlier, we have \begin{align*}
%     \mb{E}_\sigma\left[J_1(z)\right]&\leq J_1(\hat z), 
%     \mb{E}_\sigma\left[\bar J_1(z)\right]\leq\bar J_1(\hat z) \\
%     \mb{E}_\sigma\left[J_2(z)\right]&\leq J_2(\hat z),
%     \mb{E}_\sigma\left[\bar J_2(z)\right]\leq\bar J_2(\hat z)
% \end{align*} so the following conclusion holds. quick wolfram:} \url{https://www.wolframalpha.com/input?i=e*a%2B%281-e%29b%3De*c%2B%281-e%29d+and+a%3C%3Dc+and+b%3C%3Dd+and+%280%3Ce%3C1%29} \kp{maybe we can include a little analysis fact here} \lxs{I think this works}
We note a fact: $\lambda a+(1-\lambda)b=\lambda c+(1-\lambda)d$ and $a\le c, b\le d$ implies $a= c, b= d$ for any $a,b,c,d\in\mb R$. Using this fact for \cref{eq:bounded-seq-1,eq:bounded-seq-2}, we have:
\begin{align*}
    \mb{E}_\sigma\left[J_1(z)\right]&=J_1(\hat z)\le J_1( \pi'_1,\hat p_1,\hat \pi_2,\hat p_2) \ \ \ \forall \pi'_1 \in \Delta_{A_1}\\
    \mb{E}_\sigma\left[\bar J_1(z)\right]&=\bar J_1(\hat z) \le \bar J_1(\hat\pi_1, p_1',\hat \pi_2,\hat p_2)\ \ \ \forall p'_1 \in \Delta_{A_2}\\
    \mb{E}_\sigma\left[J_2(z)\right]&=J_2(\hat z)\le J_2( \hat \pi_1,\hat p_1,\pi_2',\hat p_2) \ \ \ \forall \pi'_2 \in \Delta_{A_2}\\
    \mb{E}_\sigma\left[\bar J_2(z)\right]&=\bar J_2(\hat z)\le \bar J_2(\hat\pi_1, \hat p_1,\hat \pi_2,p_2')\ \ \ \forall p'_2 \in \Delta_{A_1}
\end{align*}
Thus we have shown that  $\hat z=( \hat \pi_1,\hat p_1,\hat \pi_2,\hat p_2)$ is a Nash equilibrium for the 4-player game. By invoking Propostion~\ref{prop:equiv} we derive our result that $(\hat \pi_1,\hat \pi_2)$ must be a RQE for the original risk-adjusted game.
\end{proof}

% To prove the result of Corollary~\ref{cor:2player}, we show that we can construct a CCE of the four player game from a CCE of the two player game. Suppose $\sigma \in \Delta_{A_1\times A_2}$ is a CCE of the two player game. Define the distribution $\sigma' \in \Delta_{A_1\times A_2\times A_1\times A_2}$ as the distribution of 
% $(\pi_1,p_1^*(\pi_1,\pi_2),\pi_2,p_2^*(\pi_1,\pi_2))$ where $(\pi_1,\pi_2)\sim \sigma$ where \[p_i^*=\arg\max_{p_i \in \mathcal{P}_{-i}} -{\pi^*_i}^T R_i p_i -D_i(p_i,\pi^*_{-i}).\]
% By definition this new distribution must be a CCE of the 4 player game
% 

 We now present two corollaries that specialize the results to specific risk metrics and quantal responses. In the first we look at the case where players make use of the entropic risk and log-barrier reguarlizers.

\begin{corollary}\label{cor:KL}
    Suppose the players are risk-averse in the entropic risk metric with parameters $\tau_1$ and $\tau_2$ respectively, meaning that their risk-adjusted losses are given by:
       \begin{align*}
       f_1(\pi_1,\pi_{2})= \sup_{p_1 \in \mathcal{A}_{2}} -\pi_1^T R_1 p_1 -\frac{1}{\tau_1}KL(p_1,\pi_{2})  \qquad f_2(\pi_1,\pi_{2})= \sup_{p_2 \in \mathcal{A}_{1}} \pi_2^T R_2 p_2 -\frac{1}{\tau_2}KL(p_2,\pi_{1}) 
   \end{align*}
    If they respond in the space of quantal responses generated by the log-barrier regularizers with parameters $\epsilon_1$ and $\epsilon_2$ respectively, and if $\epsilon_1 \tau_1 \ge \frac{1}{\epsilon_2\tau_2}$ then, for any $R_1,R_2$ the players can compute a RQE by using no-regret learning.
\end{corollary}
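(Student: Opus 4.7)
The plan is to derive this as a direct instance of Theorem~\ref{thm:2player}. Three ingredients need to be verified: joint convexity of the penalty functions $D_i(p_i,\pi_{-i}) = \tfrac{1}{\tau_i} KL(p_i,\pi_{-i})$; explicit computation of the critical risk-aversion parameters $\xi_1^*$ and $\xi_2^*$ associated with the log-barrier regularizer; and translation of the abstract condition $\epsilon_1/\xi_1^* \ge \xi_2^*/\epsilon_2$ into the stated $\epsilon_1 \tau_1 \ge 1/(\epsilon_2 \tau_2)$.

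Joint convexity of $KL$ on $\Delta \times \Delta$ is classical (via the log-sum inequality), so the first ingredient is immediate. The substantive step is computing $\xi_1^*$, defined as the smallest $\xi_1 \ge 0$ for which $H_1(p_1,\pi_2) = \tfrac{1}{\tau_1}KL(p_1,\pi_2) - \xi_1 \nu_2(\pi_2)$ is concave in $\pi_2$ simultaneously for every $p_1 \in \Delta_{A_2}$. Substituting the log-barrier $\nu_2(\pi_2) = -\sum_j \log \pi_2(j)$ and discarding terms that do not depend on $\pi_2$, what remains is
\[
-\frac{1}{\tau_1} \sum_j p_1(j) \log \pi_2(j) + \xi_1 \sum_j \log \pi_2(j),
\]
whose Hessian in $\pi_2$ is diagonal with $j$-th entry $\bigl(p_1(j)/\tau_1 - \xi_1\bigr)/\pi_2(j)^2$. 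Concavity in $\pi_2$ uniformly over $p_1 \in \Delta_{A_2}$ is therefore equivalent to $\xi_1 \ge \max_j p_1(j)/\tau_1$ for every $p_1 \in \Delta_{A_2}$, and taking the supremum over $p_1$ (attained as $p_1$ approaches a vertex of the simplex) gives $\xi_1^* = 1/\tau_1$. The symmetric computation for $H_2$ yields $\xi_2^* = 1/\tau_2$.

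Plugging these values into the criterion of Theorem~\ref{thm:2player} turns $\epsilon_1/\xi_1^* \ge \xi_2^*/\epsilon_2$ into $\epsilon_1 \tau_1 \ge 1/(\epsilon_2 \tau_2)$, which is precisely the hypothesis of the corollary. Theorem~\ref{thm:2player} then guarantees that any CCE of the associated four-player game—obtainable by any no-regret learning algorithm, since it is a convex game over compact simplex-product action sets—marginalizes in its $\pi$-coordinates to a RQE of the original risk-adjusted game, completing the argument. There is no real obstacle in this proof; the only mild care needed is in handling the uniform-in-$p_1$ quantification when reading off $\xi_1^*$ from the pointwise Hessian condition, which is what pins down the constants $1/\tau_i$ rather than a smaller value depending on the specific $p_i$.
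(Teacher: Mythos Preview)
Your proposal is correct and follows essentially the same route as the paper: establish that the critical risk-aversion parameters are $\xi_i^* = 1/\tau_i$ and then invoke Theorem~\ref{thm:2player}. The paper's own proof simply asserts this value of $\xi^*$ without derivation, whereas you carry out the Hessian computation explicitly and also note the joint convexity of $KL$ needed as a hypothesis of Theorem~\ref{thm:2player}; these are welcome additions but not a different argument.
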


In the second corollary we look at the case when players make use of the reverse KL penalty function and logit quantal responses.
\begin{corollary}\label{cor:reverse-KL}
    Suppose the players are risk-averse and make use of the reverse-KL as a penalty function to give rise to their risk metric with parameters $\tau_1$ and $\tau_2$ respectively. Their risk-adjusted losses are given by:
       \begin{align*}
       f_1(\pi_1,\pi_{2})= \sup_{p_1 \in \mathcal{A}_{2}} -\pi_1^T R_1 p_1 -\frac{1}{\tau_1}RKL(p_1,\pi_{2})  \ \ f_2(\pi_1,\pi_{2})= \sup_{p_2 \in \mathcal{A}_{1}} \pi_2^T R_2 p_2 -\frac{1}{\tau_2}RKL(p_2,\pi_{1}) 
   \end{align*}
    If they respond in the space of quantal responses generated by the negative entropy regularizor with parameters $\epsilon_1$ and $\epsilon_2$ respectively, and if $\epsilon_1 \tau_1 \ge \frac{1}{\epsilon_2\tau_2}$ then, for any $R_1,R_2$ the players can compute a RQE by using no-regret learning.
\end{corollary}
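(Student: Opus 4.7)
The plan is to derive the corollary directly from Theorem~\ref{thm:2player} by verifying that the reverse-KL penalty together with the negative-entropy regularizer satisfies the hypotheses, and by computing the critical parameters $\xi_1^\star, \xi_2^\star$ in closed form.

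First I would verify the joint convexity hypothesis on $D_1(p_1,\pi_2) = \frac{1}{\tau_1} \mathrm{RKL}(p_1,\pi_2)$ and $D_2(p_2,\pi_1) = \frac{1}{\tau_2} \mathrm{RKL}(p_2,\pi_1)$. Joint convexity of the KL-divergence $(p,q) \mapsto \sum_i p_i \log(p_i/q_i)$ in both arguments is classical (it follows e.g.\ from the log-sum inequality). Swapping the roles of the two arguments preserves joint convexity, so $\mathrm{RKL}(p,\pi) = \sum_i \pi_i \log(\pi_i/p_i)$ is likewise jointly convex, and multiplication by the positive constant $1/\tau_i$ preserves this.

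Next I would compute $\xi_1^\star$ and $\xi_2^\star$ explicitly. For the negative-entropy regularizer $\nu_i(\pi) = \sum_j \pi_i(j)\log \pi_i(j)$, I write
\begin{align*}
H_1(p_1,\pi_2) &= \tfrac{1}{\tau_1}\mathrm{RKL}(p_1,\pi_2) - \xi_1 \nu_2(\pi_2) \\
&= \bigl(\tfrac{1}{\tau_1}-\xi_1\bigr)\sum_j \pi_2(j)\log \pi_2(j) \;-\; \tfrac{1}{\tau_1}\sum_j \pi_2(j)\log p_1(j).
\end{align*}
The second term is linear in $\pi_2$ for every fixed $p_1$, and the first term is the scalar multiple $(\frac{1}{\tau_1}-\xi_1)$ of the (strictly convex) negative-entropy functional of $\pi_2$. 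Hence $H_1(\cdot,\pi_2)$ is concave in $\pi_2$ for every $p_1$ if and only if $\frac{1}{\tau_1}-\xi_1 \le 0$, so $\xi_1^\star = \frac{1}{\tau_1}$. By symmetry $\xi_2^\star = \frac{1}{\tau_2}$.

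With these values in hand, the hypothesis $\frac{\epsilon_1}{\xi_1^\star} \ge \frac{\xi_2^\star}{\epsilon_2}$ of Theorem~\ref{thm:2player} becomes $\epsilon_1\tau_1 \ge \frac{1}{\epsilon_2\tau_2}$, which is exactly the assumption of the corollary. Therefore Theorem~\ref{thm:2player} applies: any CCE of the associated $4$-player game (with $\xi_1,\xi_2$ set to $\xi_1^\star,\xi_2^\star$) yields, via its $\pi$-marginals, a RQE of the original risk-adjusted game. Since no-regret learning algorithms in a convex game with bounded, continuous payoffs produce empirical joint distributions converging to the set of CCEs, the players can compute a RQE through no-regret learning regardless of $R_1,R_2$.

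The only substantive step is the $\xi_i^\star$ computation, and the mild subtlety there is the asymmetric role of the two arguments in $\mathrm{RKL}$: it is the linearity in $p_1$ of $-\log p_1$ under the weighting by $\pi_2$ that forces all of the $\pi_2$-curvature to come from the entropy piece, matching exactly what the log-barrier-free negative-entropy regularizer can absorb. If one had instead paired $\mathrm{RKL}$ with the log-barrier regularizer (as in Corollary~\ref{cor:KL}), no finite $\xi_i$ would convexify the required direction, which is why the pairing of penalty and regularizer here is natural and essential.
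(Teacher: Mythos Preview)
Your argument is correct and follows exactly the paper's route: compute $\xi_i^\star=1/\tau_i$ by decomposing $H_i$ into a linear part and a scalar multiple of the negative entropy, then invoke Theorem~\ref{thm:2player}. Your write-up is in fact more explicit than the paper's, since you also spell out the joint convexity of the reverse-KL penalty, which Theorem~\ref{thm:2player} requires but the paper's proof does not re-verify.

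One inessential remark: your closing aside that pairing $\mathrm{RKL}$ with the log-barrier regularizer would admit ``no finite $\xi_i$'' is not quite right. A Hessian computation shows that with $\nu(\pi)=-\sum_j\log\pi_j$ one has $\partial^2_{\pi_j}H = \tfrac{1}{\tau\pi_j}-\tfrac{\xi}{\pi_j^2}$, which is nonpositive on the simplex once $\xi\ge 1/\tau$, so $\xi^\star=1/\tau$ again suffices. This does not affect your proof of the corollary, only the motivating commentary.
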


\begin{proof}
    The proof of these corollaries result comes from the fact that for $\xi^*=\frac{1}{\tau}$, the function $H(p,\pi)=\frac{1}{\tau}RKL(p,\pi)-\xi \nu(\pi)$ is concave in $\pi$ for all $p$ if $\xi\ge\frac{1}{\tau}$. Thus, choosing $\xi^*=\frac{1}{\tau}$ and invoking Theorem~\ref{thm:2player} completes the proof.
\end{proof}

Note that the proof of this corollary is the same as that of Corollary~\ref{cor:KL} and so we only provide one proof for both. 

\subsection{Computing RQE in n-player General-Sum Games} \label{appen:n-player}

We now extend our result to the computation of RQE in $n$-player games. This requires stronger assumptions on the players' risk preferences and bounded rationality parameters. Nevertheless, we once again show that a large class of RQE is computationally tractable in this class of games.

To do so we now define $H_i(p_i,\pi_{-i})=D_i(p_i,\pi_{-i}) -\sum_{j\ne i} \xi_{i,j} \nu_j(\pi_j)$. For all $i,j \in \{1,...,n\}$ let $\xi_{i,j}^*>0$ be the smallest values of $\xi_{i,j}$  such that $H_i(p_i,\pi_{-i})$ is concave in $\pi_{j}$. Again, the parameters $\xi^*_{i,j}$ capture the player's degrees of risk aversion. The following theorem gives a general condition under which an RQE is computable using no-regret learning. %The proof is in \cref{proof:thm:nplayer} which is similar to that of Theorem~\ref{thm:2player}. 

\begin{theorem}\label{thm:nplayer}
    Assume the penalty functions that give rise to the players' risk preferences $D_i(\cdot,\cdot)$ are jointly convex in both of their arguments. If $\sigma$ is a CCE of the $2n$-player game with $\xi_{i,j}=\xi_{i,j}^*$ for all $i,j \in \{1,...,n\}$, and for all $i=1,...,n$ we have
    $ \epsilon_i\ge \sum_{j\ne i} \xi_{j,i}^*$, then $\hat \pi=\mb{E}_\sigma [\pi]$ is a RQE of the risk-adjusted $n$-player game.
\end{theorem}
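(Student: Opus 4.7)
The plan is to extend the Jensen-plus-cancellation argument of Theorem~\ref{thm:2player} to all $n$ primal-adversarial pairs simultaneously. By Proposition~\ref{prop:equiv} it suffices to show that $\hat{z} \defn (\hat{\pi}, \hat{p})$, with $\hat{p}_i \defn \mathbb{E}_\sigma[p_i]$, is a Nash equilibrium of the $2n$-player convex game; the original claim follows immediately. For each $i$, joint convexity of $D_i$ makes $J_i$ jointly concave in $(p_i, \pi_{-i})$ for fixed $\pi_i$ (noting that $J_i$ does not depend on $p_{-i}$), and under $\xi_{i,j} = \xi_{i,j}^*$ the function $\bar{J}_i$ is concave in $\pi$ for fixed $p_i$ (and also does not depend on $p_{-i}$). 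Repeating the iterated Jensen argument of~\eqref{eq:cce-jensen-fact} on the CCE conditions then yields, for every $\pi_i' \in \Delta_{A_i}$ and $p_i' \in \mathcal{P}_{-i}$,
\[
\mathbb{E}_\sigma[J_i(z)] \leq J_i(\pi_i', \hat{\pi}_{-i}, \hat{p}), \qquad \mathbb{E}_\sigma[\bar{J}_i(z)] \leq \bar{J}_i(\hat{\pi}, p_i', \hat{p}_{-i}).
\]

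The key algebraic observation is that $J_i + \bar{J}_i = \epsilon_i \nu_i(\pi_i) - \sum_{j \neq i} \xi_{i,j}^* \nu_j(\pi_j)$: the bilinear payoff and the $D_i$ penalty cancel, leaving only regularizers. Summing over $i$ and reindexing produces
\[
\sum_{i=1}^n \bigl[J_i(z) + \bar{J}_i(z)\bigr] = \sum_{j=1}^n \Bigl(\epsilon_j - \sum_{i \neq j} \xi_{i,j}^*\Bigr) \nu_j(\pi_j),
\]
whose coefficients are nonnegative precisely by the hypothesis $\epsilon_j \geq \sum_{i \neq j} \xi_{i,j}^*$. Jensen's inequality applied to each convex $\nu_j$ with this nonnegative weight then gives $\sum_i [J_i(\hat{z}) + \bar{J}_i(\hat{z})] \leq \mathbb{E}_\sigma \sum_i [J_i(z) + \bar{J}_i(z)]$. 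Conversely, instantiating the per-player bounds at the deviations $\pi_i' = \hat{\pi}_i$ and $p_i' = \hat{p}_i$ and summing gives the opposite inequality, forcing equality throughout. Since equality in a finite sum of dominated terms forces equality term-by-term, reading the bounds above back with arbitrary deviations $\pi_i'$ and $p_i'$ yields the $2n$ Nash conditions at $\hat{z}$; a final appeal to Proposition~\ref{prop:equiv} identifies $\hat{\pi}$ as an RQE.

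The main obstacle I foresee lies in the Jensen step: verifying joint concavity of $\bar{J}_i$ in the whole vector $\pi$ from the hypothesis on $\xi_{i,j}^*$, which only asserts concavity of $H_i$ in each $\pi_j$ separately. For the risk measures in Table~\ref{tab:risk-measure}, $D_i(p_i, \pi_{-i})$ decomposes across the factors of the product policy $\pi_{-i}$, so coordinatewise concavity implies joint concavity and the argument goes through verbatim; in the general case one can either strengthen the definition of $\xi_{i,j}^*$ to demand joint concavity of $H_i$ in $\pi_{-i}$, or apply Jensen coordinate-by-coordinate using the marginals of $\sigma$ on each $\pi_j$. All remaining steps are structural analogues of the two-player proof, with the stronger pointwise condition $\epsilon_i \geq \sum_{j \neq i}\xi_{j,i}^*$ replacing the product condition $\epsilon_1 \epsilon_2 \geq \xi_1^* \xi_2^*$ precisely because equal weighting, rather than the $\lambda$-weighting used in the two-player case, already makes every regularizer coefficient nonnegative.
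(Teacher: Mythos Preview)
Your proposal is correct and follows essentially the same argument as the paper's own proof: pass from CCE bounds to pointwise inequalities via Jensen, exploit the telescoping identity $J_i+\bar J_i=\epsilon_i\nu_i(\pi_i)-\sum_{j\ne i}\xi_{i,j}^*\nu_j(\pi_j)$, sum, apply Jensen again on the nonnegatively-weighted regularizers, and conclude term-by-term equality before invoking Proposition~\ref{prop:equiv}. Your flagged concern about joint concavity of $\bar J_i$ in $\pi$ (versus coordinatewise concavity in each $\pi_j$) is well taken---the paper's proof also asserts joint concavity without further comment, so this is not a gap in your argument relative to the paper.
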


\begin{proof}

% \section{Proof of Theorem~\ref{thm:nplayer}}\label{proof:thm:nplayer}

To prove this, we show that $(\hat \pi,\hat p)$ is a Nash equilibrium of the $2n$-player game and then invoke Proposition~\ref{prop:equiv} (where $\hat \pi=\mathbb{E}_\sigma[\pi]$, $\hat p=\mathbb{E}_\sigma[p]$).

We focus on $J_i$ and $\bar J_i$.  We first note that $J_i$ is (strictly) convex in $\pi_i$ for all fixed $p,\pi_{-i}$ and jointly concave in $p,\pi_{-i}$ for all fixed $\pi_1$ by assumption. Thus, via Jensen's inequality, we have that:

\[ \mb{E}_\sigma[ J_i(\pi,p)] \le  J_i( \pi'_i,\hat \pi_{-i},\hat p) \ \ \ \forall \pi'_i \in \Delta_{A_i} \]

Similarly, we note that for $\xi_{i,j}=\xi_{i,j}^*$, $\bar J_i$ is jointly concave in $\pi,p_{-i}$, such that:

\[ \mb{E}_\sigma[ \bar J_i(\pi,p)] \le  \bar J_i(\hat\pi, p_i',\hat p_{-i})\ \ \ \forall p'_i \in \mathcal{P}_{-i}. \]

We can now take a sum of the $2n$ utility functions to find that:
\begin{align*}
    \sum_{i=1}^n J_i(\hat \pi,\hat p)+\bar J_i(\hat \pi,\hat p) &\ge \sum_{i=1}^n \mb{E}_\sigma[J_i(\pi,p)]+ \mb{E}_\sigma[\bar J_i(\pi,p)]\\
    &\ge \sum_{i=1}^n\left(\epsilon_i-\sum_{j\ne i} \xi^*_{j,i}\right)\mathbb{E}_\sigma[\nu_i(\pi_i)]\\
    &\ge \sum_{i=1}^n\left(\epsilon_i-\sum_{j\ne i} \xi^*_{j,i}\right)\nu_i(\hat \pi_i)\\
    &=\sum_{i=1}^n J_i(\hat \pi,\hat p)+\bar J_i(\hat \pi,\hat p)
\end{align*}
where we used the assumed condition on $\epsilon_i$ to guarantee convexity of the functions of $\pi_i$ in the second line, allowing us to use Jensen's inequality to derive the third line. 

Thus we have shown that:
\begin{align*} 
 \sum_{i=1}^n J_i(\hat \pi,\hat p)+\bar J_i(\hat \pi,\hat p)=\sum_{i=1}^n \mb{E}_\sigma[J_i(\pi,p)]+ \mb{E}_\sigma[\bar J_i(\pi,p)]
\end{align*}
This implies that:
\begin{align*}
    J_i(\hat \pi,\hat p)&=\mb{E}_\sigma[J_i(\pi,p)]\le  J_i( \pi'_i,\hat \pi_{-i},\hat p) \ \ \ \forall \pi_i' \in \Delta_{A_i}\\
    \bar J_i(\hat \pi,\hat p)&=\mb{E}_\sigma[\bar J_i(\pi,p)]\le  \bar J_i(\hat\pi, p_i',\hat p_{-i})\ \ \ \forall p'_i \in \mathcal{P}_{-i}.
\end{align*}
Thus we have shown that  $(\hat \pi,\hat p)$ is a Nash equilibrium for the $2n$-player game. By invoking Proposition~\ref{prop:equiv} we can observe that  $\hat \pi$ must be a RQE for the risk-adjusted game.
% \hfill $\square$
\end{proof}

\section{Computational tractability of RQE in action-dependent risk-averse matrix games}\label{app:action_dep}

As in the case of aggregate risk-aversion, to prove our results we again introduce an auxiliary game that we relate to our risk-adjusted game of interest. In this case, the loss of the original players is given by:
\begin{align}J_i(\pi_i,\pi_{-i},p)= \sum_{j\in \mathcal{A}_i}\pi_i(j)\left(-\langle R_{i,j} ,p_{i,j}\rangle -D_i(p_{i,j},\pi_{-i}) \right)+\epsilon_i\nu_i(\pi_i). \label{eq:action_dep_loss}\end{align}
We now associate each player $i$ to its intermediate adversary $p_i$. For each player $p_{i}$  their loss function is given by:
\begin{align}
    \bar J_{i}(\pi,p_{i})&=\sum_{j \in \mc{A}_i}\pi_i(j)\left(\langle R_{i,j},p_{i,j}\rangle +D_i(p_{i,j},\pi_{-i})\right)-\sum_{k}\xi_{i,k}\nu_k(\pi_k),
\end{align}
where $p_i =\{ p_{i,j}\}_{j\in \mathcal{A}_i}$ where $p_{i,j}\in \mathcal{P}_{-i}$.
This is once again a convex game since each player's loss is convex in its own argument. Define $\xi_{i,k}^*\ge0$ as the minimum value of $\xi_{i,k}$ needed for $\bar J_{i}(\pi,p_i)$ to be concave in $\pi$ for all values of $p_{i}$ for all $i$. Note that due to the structure of $\bar J$ the values of $\xi_{i,k}^*$ only depend on properties of the risk metrics under consideration which are captured in $D_i$, and \emph{not} on the payoff structure $R_i$.

The following proposition relates the Nash equilibrium of the $2n$-player convex game to the RQE of the action-dependent risk averse game.
\begin{proposition}\label{prop:action_equiv}
    If $(\pi^*,p^*)$ is a Nash equilibrium of the $2n$-player game, then $\pi^*$ is a RQE of Game~\eqref{eq:RQEgame}. Furthermore, if $\pi^*$ is a RQE of the action-dependent risk averse game, then  $(\pi^*,p^*)$ is a Nash equilibrium of the $2n$-player game, where:
    \begin{align}
        p_{i,j}^*=\arg\min_{p_{i,j} \in \mathcal{P}_{-i}} \bar J_{i,j}(\pi^*,p_{i,j})
    \end{align}
\end{proposition}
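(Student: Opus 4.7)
The plan is to mirror the proof of Proposition~\ref{prop:equiv} in Appendix~\ref{app:nash}, adapted to the fact that now each adversary $p_i$ is a collection $\{p_{i,j}\}_{j\in\mathcal{A}_i}$ of distributions in $\mathcal{P}_{-i}$, indexed by the pure actions of player $i$. The central structural observation is two-fold. First, the $p_i$-dependent portion of $\bar J_i(\pi,p_i)$ is exactly the negation of the $p_i$-dependent portion of $J_i(\pi_i,\pi_{-i},p)$, so minimizing $\bar J_i$ over $p_i$ with $\pi$ fixed is equivalent to maximizing $J_i$ over $p_i$. Second, because $\pi_i(j)\geq 0$ and the contributions from different $j$'s in $J_i$ are separable in $p_{i,j}$, the supremum over $p_i$ factors:
\begin{align*}
\sup_{p_i} J_i(\pi_i,\pi_{-i},p_i,p_{-i}) \;=\; \sum_{j\in\mathcal{A}_i}\pi_i(j)\sup_{p_{i,j}\in\mathcal{P}_{-i}}\bigl(-\langle R_{i,j},p_{i,j}\rangle - D_i(p_{i,j},\pi_{-i})\bigr) + \epsilon_i\nu_i(\pi_i),
\end{align*}
which is precisely $f_i^{\epsilon_i}(\pi_i,\pi_{-i})$ as defined in \eqref{eq:action_dep2} plus the regularizer. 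Continuity of $D_i$ together with compactness of $\mathcal{P}_{-i}$ guarantees the suprema are attained.

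For the forward direction, suppose $(\pi^*,p^*)$ is a Nash equilibrium of the $2n$-player game. The defining inequality for player $i$ on $J_i$ reads $J_i(\pi^*_i,\pi^*_{-i},p^*)\leq J_i(\pi'_i,\pi^*_{-i},p^*)$ for all $\pi'_i\in\Delta_{A_i}$; since $J_i$ only depends on $p$ through $p_i$, I will upper bound the right-hand side by $\sup_{p_i}J_i(\pi'_i,\pi^*_{-i},p_i,p^*_{-i})=f_i^{\epsilon_i}(\pi'_i,\pi^*_{-i})$. It remains to show equality on the left, i.e., that $p^*_i$ already attains the supremum. This follows because the $\bar J_i$-part of the NE condition forces $p^*_{i,j}\in\arg\min_{p_{i,j}}\bar J_i(\pi^*,p_i)$ coordinatewise (again using separability in $j$), and the equivalence between minimizing $\bar J_i$ and maximizing $J_i$ in $p_i$ established above then yields $J_i(\pi^*_i,\pi^*_{-i},p^*)=f_i^{\epsilon_i}(\pi^*_i,\pi^*_{-i})$.

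For the backward direction, assume $\pi^*$ is an RQE and define $p^*_{i,j}\in\arg\min_{p_{i,j}\in\mathcal{P}_{-i}}\bar J_{i,j}(\pi^*,p_{i,j})$, which exists by continuity of $D_i$ and compactness of $\mathcal{P}_{-i}$. By construction, the $\bar J_i$-inequalities of the $2n$-player NE are satisfied. For the $J_i$-inequalities, the coordinate-wise argmax property gives $J_i(\pi^*_i,\pi^*_{-i},p^*)=\sup_{p_i}J_i(\pi^*_i,\pi^*_{-i},p_i,p^*_{-i})=f_i^{\epsilon_i}(\pi^*_i,\pi^*_{-i})$, and the RQE condition gives $f_i^{\epsilon_i}(\pi^*_i,\pi^*_{-i})\leq f_i^{\epsilon_i}(\pi'_i,\pi^*_{-i})$ for all $\pi'_i$. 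Finally, $f_i^{\epsilon_i}(\pi'_i,\pi^*_{-i})=\sup_{p_i}J_i(\pi'_i,\pi^*_{-i},p_i,p^*_{-i})\geq J_i(\pi'_i,\pi^*_{-i},p^*)$, which chains together to produce the desired inequality.

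The main obstacle — really a notational rather than conceptual one — is keeping straight the product structure of the adversary's strategy $p_i=\{p_{i,j}\}_j$ and using separability so that arg-min over $p_i$ of $\bar J_i$ coincides with the arg-max over $p_i$ of $J_i$. Once this bookkeeping is in place, the algebra is a direct transcription of the aggregate-risk proof, with the key replacement $-\pi_i^\top R_i p_i - D_i(p_i,\pi_{-i}) \;\leadsto\; \sum_j \pi_i(j)\bigl(-\langle R_{i,j},p_{i,j}\rangle - D_i(p_{i,j},\pi_{-i})\bigr)$, and no new analytic ingredient beyond the continuity/compactness already used in Proposition~\ref{prop:equiv}.
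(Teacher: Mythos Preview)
Your overall approach mirrors the paper's (the paper omits the proof, saying it follows exactly as in Proposition~\ref{prop:equiv}), and your forward direction is correct. However, your backward-direction chain is broken as written. You assemble
\[
J_i(\pi_i^*,\pi_{-i}^*,p^*) \;=\; f_i^{\epsilon_i}(\pi_i^*,\pi_{-i}^*) \;\le\; f_i^{\epsilon_i}(\pi_i',\pi_{-i}^*) \;\ge\; J_i(\pi_i',\pi_{-i}^*,p^*),
\]
and claim this ``chains together'' to the Nash condition $J_i(\pi_i^*,\pi_{-i}^*,p^*) \le J_i(\pi_i',\pi_{-i}^*,p^*)$. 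But $A\le C$ together with $C\ge D$ does not yield $A\le D$; the final inequality points the wrong way.

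The fix uses precisely the action-dependent structure you already isolated. Since $p_{i,j}^* \in \arg\max_{p_{i,j}\in\mathcal{P}_{-i}}\bigl(-\langle R_{i,j},p_{i,j}\rangle - D_i(p_{i,j},\pi_{-i}^*)\bigr)$ depends only on $R_{i,j}$ and $\pi_{-i}^*$ --- not on $\pi_i$ --- the same $p^*$ attains the coordinate-wise supremum for \emph{every} $\pi_i'$, so in fact $f_i^{\epsilon_i}(\pi_i',\pi_{-i}^*) = J_i(\pi_i',\pi_{-i}^*,p^*)$ with equality, and the RQE inequality then transfers directly to the $J_i$ inequality. You had this separability observation in your opening paragraph; the missing step is to invoke it once more here to upgrade the last $\ge$ to $=$.
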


The proof of this result follows by exactly the same arguments as the proof of Proposition~\ref{prop:equiv} and is therefore omitted for brevity. 

Given these results we now prove the equivalent of Theorem~\ref{thm:nplayer} for the action-dependent formulation. We note that by a more careful accounting of terms, a stronger guarantee is possible in the $2$-player regime similar to Theorem~\ref{thm:2player}.

\begin{theorem}
 Assume the penalty functions that give rise to the players' risk preferences $D_i(\cdot,\cdot)$ are jointly convex in both their arguments. If $\sigma$ is a CCE of the $n$-player game and for each $i$: \[\epsilon_i\ge \sum_{j} \xi_{i,j}^*\]
    then the marginal strategies $\hat \pi_i=\mathbb{E}_\sigma[\pi_i]$ constitute a RQE of the action-dependent risk-averse game.
\end{theorem}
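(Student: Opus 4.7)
The plan is to mirror the proof of \cref{thm:nplayer}, but using \cref{prop:action_equiv} instead of \cref{prop:equiv} to connect CCE of the $2n$-player auxiliary game with RQE of the action-dependent risk-averse game. Set $\hat\pi = \mathbb{E}_\sigma[\pi]$ and $\hat p = \mathbb{E}_\sigma[p]$, where $p = (p_1,\ldots,p_n)$ with $p_i = \{p_{i,j}\}_{j\in\cA_i}$. The goal is to show that $(\hat\pi,\hat p)$ is a Nash equilibrium of the $2n$-player game with losses \eqref{eq:action_dep_loss} and the $\bar J_i$ above, after which the theorem follows immediately from \cref{prop:action_equiv}.

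First, I would verify the convexity/concavity structure of the auxiliary losses. For $J_i$, convexity in $\pi_i$ (for fixed $\pi_{-i},p$) is immediate from the linearity of the first sum in $\pi_i$ plus the strict convexity of $\nu_i$. Joint concavity of $J_i$ in $(\pi_{-i},p)$ for fixed $\pi_i\ge 0$ follows from the fact that $-\pi_i(j)\langle R_{i,j},p_{i,j}\rangle$ is linear, and $-\pi_i(j) D_i(p_{i,j},\pi_{-i})$ is a nonnegative multiple of a jointly concave function (by assumed joint convexity of $D_i$). For $\bar J_i$, the definition of $\xi_{i,k}^\star$ guarantees that, at $\xi_{i,k} = \xi_{i,k}^\star$, $\bar J_i(\pi,p_i)$ is concave in $\pi$ for every $p_i$, while linearity in each $p_{i,j}$ gives joint concavity in $(\pi,p_i)$.

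With the structural facts in hand, the CCE condition combined with Jensen's inequality yields, for every $i$ and every unilateral deviation,
\begin{align*}
\mathbb{E}_\sigma[J_i(\pi,p)] &\le \mathbb{E}_\sigma[J_i(\pi_i',\pi_{-i},p)] \le J_i(\pi_i',\hat\pi_{-i},\hat p) \quad \forall \pi_i'\in \Delta_{A_i}, \\
\mathbb{E}_\sigma[\bar J_i(\pi,p_i)] &\le \mathbb{E}_\sigma[\bar J_i(\pi,p_i')] \le \bar J_i(\hat\pi,p_i') \quad \forall p_i'.
\end{align*}
Specializing to $\pi_i' = \hat\pi_i$ and $p_i' = \hat p_i$ gives $\mathbb{E}_\sigma[J_i(\pi,p)] \le J_i(\hat\pi,\hat p)$ and $\mathbb{E}_\sigma[\bar J_i(\pi,p)] \le \bar J_i(\hat\pi,\hat p)$ for every $i$. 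Summing all $2n$ inequalities, the reward terms $\pm \pi_i(j)\langle R_{i,j},p_{i,j}\rangle$ and the penalty terms $\pm \pi_i(j)D_i(p_{i,j},\pi_{-i})$ cancel pointwise, and only the regularizer contributions remain:
\begin{align*}
\sum_i\bigl[J_i(\pi,p) + \bar J_i(\pi,p)\bigr] \;=\; \sum_i \Bigl(\epsilon_i - \textstyle\sum_j \xi_{i,j}^\star\Bigr)\nu_i(\pi_i),
\end{align*}
after reindexing the double sum so that each player's regularizer is collected. Under the standing hypothesis $\epsilon_i \ge \sum_j \xi_{i,j}^\star$, each coefficient is nonnegative, so Jensen's inequality applied to the convex $\nu_i$ gives the reverse inequality $\sum_i \mathbb{E}_\sigma[J_i+\bar J_i] \ge \sum_i [J_i(\hat\pi,\hat p)+\bar J_i(\hat\pi,\hat p)]$. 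Combined with the CCE-Jensen upper bounds, equality holds termwise, upgrading the two displayed inequalities to Nash best-response conditions at $(\hat\pi,\hat p)$. Invoking \cref{prop:action_equiv} then certifies that $\hat\pi$ is an RQE of the action-dependent risk-averse game.

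The main obstacle I anticipate is the bookkeeping in the cancellation step: unlike the aggregate case, the penalty $\pi_i(j)D_i(p_{i,j},\pi_{-i})$ is a product of two variable factors, and one must be careful that the two signs in $J_i$ and $\bar J_i$ align exactly to cancel under the sum. The second delicate point is ensuring the coefficient condition matches the convexity threshold $\xi_{i,j}^\star$ as defined; if the stated condition reads $\epsilon_i \ge \sum_j \xi_{i,j}^\star$ while the double sum naturally produces $\sum_j \xi_{j,i}^\star$, the proof would require either a symmetry of the $D_i$'s in their role across players or a straightforward reindexing convention which should be recorded explicitly.
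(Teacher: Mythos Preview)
There is a genuine gap: you have misread the hypothesis. In this theorem $\sigma$ is a CCE of the \emph{original $n$-player} game played on the losses $f_i^{\epsilon_i}$, not of the $2n$-player auxiliary game. The paper explicitly flags this as a point of departure from \cref{thm:nplayer}. Consequently $\sigma$ is a distribution over $\pi$ only, and your definition $\hat p=\mathbb{E}_\sigma[p]$ is meaningless as stated---there is no $p$ in the support of $\sigma$ to average. Your CCE inequalities for $J_i$ and $\bar J_i$ in the auxiliary game are therefore unavailable at the outset.

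The paper supplies the missing step you need: from $\sigma$ over $\pi$, build a new distribution $\sigma'$ over $(\pi,p)$ by setting $p_i=p_i^\ast(\pi):=\arg\min_{p_i}\bar J_i(\pi,p_i)$, and then verify directly that $\sigma'$ is a CCE of the $2n$-player game. For the original players this uses that $p_i^\ast(\pi)$ depends only on $\pi_{-i}$, so $\mathbb{E}_{\sigma'}[J_i(\pi,p)]=\mathbb{E}_\sigma[f_i^{\epsilon_i}(\pi)]\le \mathbb{E}_\sigma[f_i^{\epsilon_i}(\pi_i',\pi_{-i})]=\mathbb{E}_{\sigma'}[J_i(\pi_i',\pi_{-i},p)]$; for the adversaries it is immediate from the $\arg\min$ construction. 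Once $\sigma'$ is in hand, your Jensen/cancellation argument goes through essentially as you wrote it and matches the paper. The indexing concern you raise at the end is real: the sum naturally collects as $\epsilon_i-\sum_j\xi_{j,i}^\ast$, and the paper appears to write $\xi_{i,j}^\ast$ in this theorem while using $\xi_{j,i}^\ast$ in \cref{thm:nplayer}; record the convention explicitly.
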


 Note that one point of departure for this result from the previous results is that $\sigma$ is now the CCE of the original 2-player convex game defined on the objective functions $f_1^{\epsilon_1}$ and $f_2^{\epsilon_2}$.

\begin{proof}
To begin, let $\sigma$ be a CCE of the convex game played on $f_i^{\epsilon_i}$, where:
 \[f^{\epsilon_i}_i(\pi_i,\pi_{-i})= \sum_{j \in \mathcal{A}_i} \pi_i(j)\left(\sup_{p_{i,j} \in \mathcal{P}_{-i}} -\langle R_{i,j}, p_{i,j}\rangle  -D_i(p_{i,j},\pi_{-i})\right) +\epsilon_i\nu(\pi_i).\]\\
 By definition of a CCE, we must have that:
 \[ \mathbb{E}_\sigma[f^{\epsilon_i}_i(\pi_i,\pi_{-i})]\le \mathbb{E}_\sigma[f^{\epsilon_i}_i(\pi_i',\pi_{-i})] \ \ \forall \ \pi_i'\in \Delta_{A_i},\]
for all $i=1,...,n$. Given $\sigma$, we can define a new distribution $\sigma'$ as $(\pi,p^*(\pi))$ where 
\begin{align*}
        p_{i}^*(\pi)=\arg\min_{p_{i} \in \mathcal{P}_{-i}} \bar J_{i}(\pi,p_{i}),
    \end{align*}
with $\xi_{i,j}=\xi_{i,j}^*$. We now claim that, by construction, $\sigma'$ is a CCE of the $2n$-player game. To see this, we observe that
\begin{align*}
    \mathbb{E}_{\sigma'}[J_i(\pi_i,\pi_{-i},p)]=\mathbb{E}_\sigma[f^{\epsilon_i}_i(\pi_i,\pi_{-i})]&\le \mathbb{E}_\sigma[f^{\epsilon_i}_i(\pi_i',\pi_{-i})]=\mathbb{E}_{\sigma'}[J_i(\pi_i',\pi_{-i},p)] \ \ \forall \ \pi_i \in \Delta_{A_i},
\end{align*}
where the first equality follows by construction of $\sigma'$, the second from the definition of a CCE, and the third from the fact that the value of $p^*(\pi)$ only depends on $\pi_{-i}$ and not $\pi_i'$. 

Similarly, we can show the same result for all the players $p_{i}$. By simply applying Jensen's inequality, we can see that
\begin{align*}
    \mathbb{E}_{\sigma'}[\bar J_{i}(\pi,p_{i})]=\mathbb{E}_{\pi\sim \sigma}\left[\min_{p_{i,j}} \bar J_{i}(\pi,p_{i,j})\right]\le \mathbb{E}_{\sigma'}[\bar J_{i}(\pi,p'_{i})] \ \ \forall p'_{i} \in \mathcal{P}_{-i}.
\end{align*}
Thus we can observe that $\sigma'$ is a CCE of the $2n$-player game. To show that the marginals of $\pi$ in this CCE are Nash equilibria of the convex game (and thus RQE), we proceed as before. Using the fact that $D(p_{i,j},\pi_{-i})$ is jointly convex in each of its arguments, we can apply Jensen's inequality to find that:
  \[\mb{E}_{\sigma'}[J_i(\pi_i,\pi_{-i},p)]\le J_i(\pi_i',\hat \pi_{-i},\hat p) \ \ \ \forall \pi_i' \in \Delta_{A_i},\] 
  where $\hat p_{i,j}=\mb{E}_{\sigma'}[p_{i,j}]$ and $\hat \pi_{i}=\mb{E}_{\sigma'}[\pi_{i}]$. Similarly, by our choice of $\xi_{i,j}=\xi_{i,j}^*$ we can find that:
  \[\mathbb{E}_{\sigma'}[\bar J_{i}(\pi,p_{i})]\le \bar J_{i}(\hat \pi,p_{i}') \ \ \ \forall p_{i}' \in \mc{P}_{-i}.\] 
  Now, we can observe that:
  \begin{align*}
      \sum_{i}J_i(\hat \pi_i,\hat \pi_{-i},\hat p) +\bar J_{i}(\hat \pi,\hat p_{i}) &\ge \mathbb{E}_{\sigma'}\left[ \sum_{i}J_i( \pi_i,\pi_{-i},p) +\bar J_{i}( \pi,p_{i})\right]\\
      &=\sum_i \mathbb{E}_{\sigma'}\left[\left(\epsilon_i-\sum_{j}\xi_{i,j}^*\right)\nu_i(\pi_i)\right]\\
      &\ge \sum_i\left(\epsilon_i-\sum_{j}\xi_{i,j}^*\right)\nu_i(\hat\pi_i)\\
      &=\sum_{i}J_i(\hat \pi_i,\hat \pi_{-i},\hat p) +\bar J_{i}(\hat \pi,\hat p_{i}),
  \end{align*} where the third inequality follows by the Jensen's inequality.
  By the same rationale as in the proof of Theorem~\ref{thm:nplayer} we can now conclude that $\hat \pi$ is a Nash equilibrium-joint strategy profile-of the $2n$ player game and thus (due to Proposition~\ref{prop:action_equiv}) an RQE of the action-dependent risk averse game.
\end{proof}

Similar to the aggregate risk regime, one can instantiate the previous theorem with specific risk-metrics and quantal response functions to illustrate the class of action-dependent RQE that are guaranteed to be computationally tractable. Importantly, this class once again only depends on the levels of risk averse and quantal response but \emph{not} on the underlying structure of the game. However, note that the requirements are more stringent since the requirements on the $\xi^*$'s are stronger since they must ensure joint convexity of $\bar J_i$. Nevertheless further algebraic manipulations would allow one to recover analogues of Corollaries~\ref{cor:KL} and~\ref{cor:reverse-KL} for the action-dependent risk case as well. For brevity we leave these as exercises to the reader.

% DRO formulation is also a class of valuation function since we satisfies these two properties.

% \paragraph{Risk-sensitive single-agent RL.}

% The first formulation is adding the valuation function in each step: Risk-sensitive RL \cite{shen2014risk}. It introduce the framework of risk-sensitive RL by defining 'valuation function' that satisfies two important properties: monotonicity and translation invaraince.

% The second formulation only have one-step risk on the cumulative rewards, such as a lot of works considering "Exponential utility": proposed by a classical one \cite{howard1972risk} and widely considered later \cite{fei2020risk}:
% \begin{equation}
%     \rho(X,\mu) = \frac{1}{\lambda} \log \left[ \mathbb{E}_\mu [e^{\lambda X}]\right].
% \end{equation}

% \paragraph{Risk-sensitive MARL/Risk-averse games.}
% \cite{goeree2003risk}.

% \paragraph{Games with computationally tractable quantal response equilibrium.}

% \paragraph{Other related works.}
% Robust MDP \cite{ho2022robust} with some useful metrics

% Then we have different possible valuation functions:
% \begin{itemize}
%     \item DRO formulation is also a class of valuation function since we satisfies these two properties.
%     \item  Exponential utility: proposed by a classical one \cite{howard1972risk} and widely considered later \cite{fei2020risk}:
% \begin{equation}
%     \rho(X,\mu) = \frac{1}{\lambda} \log \left[ \mathbb{E}_\mu [e^{\lambda X}]\right].
% \end{equation}

% \end{itemize}

\section{Proof of Theorem~\ref{thm:imperfect-mg}}\label{proof:thm:imperfect-mg}

Armed with the estimated reward and transition kernel in \eqref{eq:empirical-P-infinite}, we can construct an empirical MG $ \widehat{\mathcal{MG}}=\{H, \cS, \{\cA_i\}_{i\in[n]}, \{ \widehat{R}_{i,h}, \widehat{P}_{i,h}\}_{i\in [n], h\in [H]}\}$. Analogously, for any joint policy $\pi$, we denote the corresponding risk-averse loss functions or the payoff matrices as $\{\widehat{V}_{i,h}^{\epsilon_i}(\pi)\}$ and $\{\widehat{Q}_{i,h}^{\epsilon_i}(\pi)\}$, respectively. 

% \subsection{Proof of Theorem~\ref{thm:imperfect-mg}}

To begin with, recall the goal is to show that
\begin{align}
   \max_{(i,s,h)\in [n] \times \cS\times [H] } \Big\{V_{i,h}^{\epsilon_i}(\widehat{\pi}; s) - \min_{\pi_h': \cS \mapsto \Delta_{\cA_i}} V_{i,h}^{\epsilon_i}((\pi_h', \widehat{\pi}_{i,-h})\times \widehat{\pi}_{-i} ; s) \Big\} \leq \delta.
\end{align}

For convenience, we denote
\begin{align}
  \forall  (i,h,s) \in [n] \times [H] \times \cS: \quad V_{i,h}^{\star, \epsilon_i}(\widehat{\pi}_{-i}; s)  = \min_{\pi_{i,h}':\cS  \mapsto \Delta_{A_i}} V_{i,h}^{\epsilon_i}((\pi_{i,h}', \widehat{\pi}_{i,-h}) \times \widehat{\pi}_{-i} ; s)
\end{align}
and define the best-response policy $\pi^\star_i \defn \{\pi^\star_{i,h}: \cS \mapsto \Delta_{A_i}\}_{h\in[H]}$ so that
\begin{align}
    \forall  (i,h,s) \in [n] \times [H] \times \cS: \quad \pi^\star_{i,h}(s) \defn \mathrm{argmin}_{\pi_i':\cS  \mapsto \Delta_{A_i}} V_{i,h}^{\epsilon_i}((\pi_{i,h}', \widehat{\pi}_{i,-h}) \times \widehat{\pi}_{-i} ; s).
\end{align}
Then, for any $i\in[n]$, the gap $V_{i,1}^{\epsilon_i}(\widehat{\pi}) - V_{i,1}^{\star,\epsilon_i}( \widehat{\pi}_{-i})$ can be decomposed as follows:
\begin{align}
    &V_{i,h}^{\epsilon_i}(\widehat{\pi}) - V_{i,h}^{\star, \epsilon_i}(\widehat{\pi}_{-i})\notag \\
    & = V_{i,h}^{\epsilon_i}(\widehat{\pi}) - \widehat{V}_{i,h}^{\epsilon_i}(\widehat{\pi})  + \left( \widehat{V}_{i,h}^{\epsilon_i}(\widehat{\pi}) - \widehat{V}_{i,h}^{\epsilon_i}((\pi_{i,h}^\star, \widehat{\pi}_{i,-h}) \times \widehat{\pi}_{-i}) \right) \notag \\
    &\quad + \left( \widehat{V}_{i,h}^{\epsilon_i}((\pi_{i,h}^\star, \widehat{\pi}_{i,-h}) \times \widehat{\pi}_{-i}) - V_{i,h}^{\epsilon_i}((\pi_{i,h}^\star, \widehat{\pi}_{i,-h}) \times \widehat{\pi}_{-i}) \right) \notag \\
    & \leq V_{i,h}^{\epsilon_i}(\widehat{\pi}) - \widehat{V}_{i,h}^{\epsilon_i}(\widehat{\pi})  + \left( \widehat{V}_{i,h}^{\epsilon_i}((\pi_{i,h}^\star, \widehat{\pi}_{i,-h}) \times \widehat{\pi}_{-i}) - V_{i,h}^{\epsilon_i}((\pi_{i,h}^\star, \widehat{\pi}_{i,-h}) \times \widehat{\pi}_{-i}) \right) \notag \\
    &\leq \|V_{i,h}^{\epsilon_i}(\widehat{\pi}) - \widehat{V}_{i,h}^{\epsilon_i}(\widehat{\pi})\|_\infty \bm{1}  + \left\|\widehat{V}_{i,h}^{\epsilon_i}((\pi_{i,h}^\star, \widehat{\pi}_{i,-h}) \times \widehat{\pi}_{-i}) - V_{i,h}^{\epsilon_i}((\pi_{i,h}^\star, \widehat{\pi}_{i,-h}) \times \widehat{\pi}_{-i})\right\|_\infty \bm{1} \label{eq:decomposition}
\end{align}
where the first inequality holds by applying Theorem~\ref{thm:nplayer-markov} with the estimated RAMG $\widehat{\mathcal{MG}}$ so that $\widehat{\pi}$ is a RQE of $\widehat{\mathcal{MG}}$, i.e.,
\begin{align}
     \widehat{V}_{i,h}^{\epsilon_i}(\widehat{\pi})  \leq  \min_{\pi_{i,h}':\cS  \mapsto \Delta_{A_i}} \widehat{V}_{i,h}^{\epsilon_i}((\pi_{i,h}', \widehat{\pi}_{i,-h}) \times \widehat{\pi}_{-i})  \leq \widehat{V}_{i,h}^{\epsilon_i}((\pi_{i,h}^\star, \widehat{\pi}_{i,-h}) \times \widehat{\pi}_{-i}). 
\end{align}

To continue, we divide the proof into several key steps.

\paragraph{Step 1: developing the recursion.}
To control the two terms in \eqref{eq:decomposition}, we consider that for any joint policy $\pi$ and time step $(s,h) \in\cS \times [H]$,
\begin{align}
    &V_{i,h}^{\epsilon_i}(\pi;s) - \widehat{V}_{i,h}^{\epsilon_i}(\pi;s) \notag \\
    & \overset{\mathrm{(i)}}{=}g_{\mathsf{pol},i}^{\pi}\left(Q_{i,h}(\pi; s,:) \right) - g_{\mathsf{pol},i}^{\pi}\left(\widehat{Q}_{i,h}(\pi;s,:) \right) \notag \\
    & \overset{\mathrm{(ii)}}{=} \sup_{p_i \in \cP_i} -\pi_i(s)^T g_{\mathsf{env},i}\left( R_{i,h}(s,:), P_{h,s,:}, V_{i,h+1}(\pi)\right) p_i -D_{\mathsf{pol},i}(p_i,\pi_{-i}(s))  \notag \\
    & \quad - \left[\sup_{p_i \in \cP_i }  -\pi_i(s)^T g_{\mathsf{env},i}\left( R_{i,h}(s,:), \widehat{P}_{h,s,:}, \widehat{V}_{i,h+1}(\pi)\right) p_i -D_{\mathsf{pol},i}(p_i,\pi_{-i}(s)) \right] \notag \\
    & \leq \sup_{p_i \in \cP_i} \left|-\pi_i(s)^T \left[g_{\mathsf{env},i}\left( R_{i,h}(s,:), P_{h,s,:}, V_{i,h+1}(\pi)\right)  - g_{\mathsf{env},i}\left( R_{i,h}(s,:), \widehat{P}_{h,s,:}, \widehat{V}_{i,h+1}(\pi)\right) \right] p_i  \right| \notag \\
    & \leq \left\|g_{\mathsf{env},i}\left( R_{i,h}(s,:), P_{h,s,:}, V_{i,h+1}(\pi)\right) - g_{\mathsf{env},i}\left( R_{i,h}(s,:), \widehat{P}_{h,s,:}, \widehat{V}_{i,h+1}(\pi)\right) \right\|_\infty, \label{eq:V-each-s}
\end{align}
where (i) holds by the definition in \eqref{eq:new-defined-value-epsilon} and \eqref{eq:RQEgame-markov-V}, and (ii) follows from the definition of $g_{\mathsf{pol},i}^{\pi}(\cdot)$ in \eqref{eq:RQEgame-markov}.

To continue, we know that for any $(s,\ba)\in\cS\times \cA$:
\begin{align}
&\left|g_{\mathsf{env},i}\left( R_{i,h}(s,\ba), P_{h,s,\ba}, V_{i,h+1}(\pi)\right) - g_{\mathsf{env},i}\left( R_{i,h}(s,:), \widehat{P}_{h,s,\ba}, \widehat{V}_{i,h+1}(\pi)\right) \right|\notag \\
& =  \Big| R_{i,h}(s, \ba)  - \sup_{\widetilde{P} \in \Delta_S } - \widetilde{P} V_{i,h+1}(\pi) - D_{\mathsf{env},i}(\widetilde{P}, P_{h,s,\ba}) \notag \\
& \quad - \left( R_{i,h}(s, \ba)  - \sup_{\widetilde{P} \in \Delta_S } - \widetilde{P} \widehat{V}_{i,h+1}(\pi) - D_{\mathsf{env},i}(\widetilde{P}, \widehat{P}_{h,s,\ba})\right) \Big| \notag \\
& =  \Big|   - \sup_{\widetilde{P} \in \Delta_S } \left(- \widetilde{P} V_{i,h+1}(\pi) - D_{\mathsf{env},i}(\widetilde{P}, P_{h,s,\ba}) \right) + \sup_{\widetilde{P} \in \Delta_S } \left(- \widetilde{P} \widehat{V}_{i,h+1}(\pi) - D_{\mathsf{env},i}(\widetilde{P}, \widehat{P}_{h,s,\ba}) \right) \Big| \notag \\
& \overset{\mathrm{(i)}}{\leq} \sup_{\widetilde{P} \in \Delta_S }  \Big|   - \left(- \widetilde{P} V_{i,h+1}(\pi) - D_{\mathsf{env},i}(\widetilde{P}, P_{h,s,\ba}) \right) + \left(- \widetilde{P} \widehat{V}_{i,h+1}(\pi) - D_{\mathsf{env},i}(\widetilde{P}, \widehat{P}_{h,s,\ba}) \right) \Big| \notag \\
& \overset{\mathrm{(ii)}}{=} \sup_{\widetilde{P} \in \Delta_S }  \Big|   \widetilde{P} \left( V_{i,h+1}(\pi) -\widehat{V}_{i,h+1}(\pi) \right) \Big | + \sup_{\widetilde{P} \in \Delta_S } \Big| D_{\mathsf{env},i}(\widetilde{P}, P_{h,s,\ba}) - D_{\mathsf{env},i}(\widetilde{P}, \widehat{P}_{h,s,\ba})  \Big| \notag \\
& \leq \left\| V_{i,h+1}(\pi) -  \widehat{V}_{i,h+1}(\pi) \right\|_\infty + L \left\| P_{h,s,\ba} -  \widehat{P}_{h,s,\ba} \right\|_1. \label{eq:recursive-sa}
\end{align}
where the first equality holds by  \eqref{eq:nvi-iteration}, (i) holds by the supreme operator is $1$-Lipschitz, and the last inequality holds by the assumption that $D_{\mathsf{env},i}(\cdot)$ is $L$-Lipschitz with respect to the $\ell_1$ norm for the second argument, with any fixed first argument.
We mention an important note here. As remarked earlier, similar sample complexity guarantees can be shown when $\{D_{\mathsf{env},i}\}_{i\in[n]}$ are defined as $\phi$-divergence and allude to analyses ideas in \cite{panaganti22a,xu-panaganti-2023samplecomplexity,shi2023curious} to modify this step.

Plugging in \eqref{eq:recursive-sa} back to \eqref{eq:V-each-s} and applying the results for all $s\in\cS$ yields %\kp{why does $L$ appear below? it doesn't appear in Eq.(27)-(28)!} \lxs{You can see it happens in \eqref{eq:recursive-sa} now, but it won't leads to exponential things, which is fine.}
\begin{align}
    \left\|V_{i,h}^{\epsilon_i}(\pi) - \widehat{V}_{i,h}^{ \epsilon_i}(\pi) \right\|_\infty &\leq \left\| V_{i,h+1}(\pi) -  \widehat{V}_{i,h+1}(\pi) \right\|_\infty + L \underbrace{\max_{(s,\ba)\in\cS \times \cA} \left\| P_{h,s,\ba} -  \widehat{P}_{h,s,\ba} \right\|_1 }_{=:l_{h}} \notag \\
    & = \left\| V^{\epsilon_i}_{i,h+1}(\pi) -  \widehat{V}^{\epsilon_i}_{i,h+1}(\pi) \right\|_\infty + L l_h.
\end{align}

Applying above fact recursively for $h,h+1,\cdots, H$, we arrive at
\begin{align}
    \left\|V_{i,h}^{\epsilon_i}(\pi) - \widehat{V}_{i,h}^{ \epsilon_i}(\pi) \right\|_\infty &\leq \left\| V_{i,h+2}^{\epsilon_i}(\pi) - \widehat{V}_{i,h+2}^{ \epsilon_i}(\pi) \right\|_\infty + Ll_h + Ll_{h+1} \notag \\
    & \leq \cdots \leq \left\| V^{\epsilon_i}_{i,H+1}(\pi) -  \widehat{V}^{\epsilon_i}_{i,H+1}(\pi) \right\|_\infty + L \sum_{t=h}^H  l_t \notag \\
    & \leq L \sum_{t=h}^H l_t, \label{eq:recursive-results}
\end{align}
where the last inequality holds since 
$V^{\epsilon_i}_{i,H+1}(\pi) = \widehat{V}^{\epsilon_i}_{i,H+1}(\pi) = V_{i,h+1}(\pi) = \widehat{V}_{i,h+1}(\pi) = 0$ for any policy $\pi$.

\paragraph{Step 2: controlling the errors $\{l_t\}$.} The remainder of the proof will focus on controlling \eqref{eq:recursive-results}.
Applying \cite[Lemma 17]{auer2008near} over all $(h,s,\ba)\in [H]\times \cS\times \cA$, we achieve the union bound that with probability at least $1-\delta$,
\begin{align}
\max_{(h,s,\ba)\in [H]\times \cS\times \cA} \left\| P_{h,s,\ba} -  \widehat{P}_{h,s,\ba} \right\|_1
& \leq \sqrt{ \frac{14S}{N}\log\left(\frac{2S\prod_{i\in[n]} A_iH}{\delta}\right)}. \label{eq:concentration-one}
\end{align}
Applying \eqref{eq:concentration-one}, we arrive at with probability at least $1-\delta$,
\begin{align}
   \forall h\in[H],\, l_h = \max_{(s,\ba)\in\cS \times \cA} \left\| P_{h,s,\ba} -  \widehat{P}_{h,s,\ba} \right\|_1 \leq \sqrt{ \frac{14S}{N}\log\left(\frac{2S\prod_{i\in[n]} A_iH}{\delta}\right)}. \label{eq:union-bound}
\end{align}
% holds for $l_h$ for all $h\in[H]$ simultaneously.

Inserting \eqref{eq:union-bound} into \eqref{eq:recursive-results} gives
\begin{align}
    \left\|V_{i,h}^{\epsilon_i}(\pi) - \widehat{V}_{i,h}^{ \epsilon_i}(\pi)  \right\|_\infty &\leq L\sum_{t=h}^H l_t \leq HL\sqrt{ \frac{14S}{N}\log\left(\frac{2S\prod_{i\in[n]} A_iH}{\delta}\right)}. \label{eq:one-term}
\end{align}

Finally, recalling \eqref{eq:decomposition} yields

\begin{align}
    V_{i,1}^{ \epsilon_i}(\widehat{\pi}) - V_{i,1}^{\star, \epsilon_i}(\widehat{\pi}_{-i})
    &\leq\|V_{i,h}^{\epsilon_i}(\widehat{\pi}) - \widehat{V}_{i,h}^{\epsilon_i}(\widehat{\pi})\|_\infty \bm{1}  + \left\|\widehat{V}_{i,h}^{\epsilon_i}((\pi_{i,h}^\star, \widehat{\pi}_{i,-h}) \times \widehat{\pi}_{-i}) - V_{i,h}^{\epsilon_i}((\pi_{i,h}^\star, \widehat{\pi}_{i,-h}) \times \widehat{\pi}_{-i})\right\|_\infty \bm{1} \notag \\
    & \leq 8HL\sqrt{ \frac{S}{N}\log\left(\frac{2S\prod_{i\in[n]} A_iH}{\delta}\right)} \bm{1},
\end{align}
where the last inequality holds by applying \eqref{eq:one-term} with $\pi = \widehat{\pi}$ or $\pi = (\pi_{i,h}^\star, \widehat{\pi}_{i,-h}) \times \widehat{\pi}_{-i}$.
\hfill$ \square$

\section{Experiment Details} \label{appen:exp}

We provide more results and details of our experiments in this section.
%We provide our code on an \textbf{anonymous Github link} \url{https://github.com/FortunaAversa/RQE} containing instructions to reproduce all results in this paper.

\subsection{Matrix Games} \label{appen:exp:matching-pennies}

% \begin{figure}[htbp]
% 	%	\centering
%     % \vspace{-0.5cm}
% 	\includegraphics[width=\linewidth]{./kl-reversekl-pennies.png}
%     % \vspace{-0.5cm}
% 	\caption{ The shaded blue region here depicts risk-aversion and bounded rationality preferences as alluded by \cref{thm:2player}.
%     The markers (Gradient Descent with corresponding parameters) \textit{GHP: Game 4} \cite{goeree2003risk} and \textit{SC: Game 1-12} \cite{selten2008stationary} are matched to the observed human preferred policies.
%  }
% 	\label{fig:kl-reversekl-pennies}
% \end{figure}

\begin{table}[ht]
\centering
\begin{tabular}{|c|c|c|}
\hline
\textbf{Game 1} & \textbf{Game 2} & \textbf{Game 3} \\ \hline
\begin{tabular}{c|c|c}
 & L & R \\ \hline
U & $\begin{matrix}10 \\ 10\end{matrix}$ & $\begin{matrix}0 \\ 18\end{matrix}$ \\ \hline
D & $\begin{matrix}9 \\ 9\end{matrix}$ & $\begin{matrix}10 \\ 8\end{matrix}$ \\ 
\end{tabular} &
\begin{tabular}{c|c|c}
 & L & R \\ \hline
U & $\begin{matrix}9 \\ 4\end{matrix}$ & $\begin{matrix}0 \\ 13\end{matrix}$ \\ \hline
D & $\begin{matrix}6 \\ 7\end{matrix}$ & $\begin{matrix}8 \\ 5\end{matrix}$ \\ 
\end{tabular} &
\begin{tabular}{c|c|c}
 & L & R \\ \hline
U & $\begin{matrix}8 \\ 6\end{matrix}$ & $\begin{matrix}0 \\ 14\end{matrix}$ \\ \hline
D & $\begin{matrix}7 \\ 7\end{matrix}$ & $\begin{matrix}10 \\ 4\end{matrix}$ \\ 
\end{tabular} \\ \hline
\textbf{Game 4} & \textbf{Game 5} & \textbf{Game 6} \\ \hline
\begin{tabular}{c|c|c}
 & L & R \\ \hline
U & $\begin{matrix}7 \\ 4\end{matrix}$ & $\begin{matrix}0 \\ 11\end{matrix}$ \\ \hline
D & $\begin{matrix}5 \\ 6\end{matrix}$ & $\begin{matrix}9 \\ 2\end{matrix}$ \\ 
\end{tabular} &
\begin{tabular}{c|c|c}
 & L & R \\ \hline
U & $\begin{matrix}7 \\ 2\end{matrix}$ & $\begin{matrix}0 \\ 9\end{matrix}$ \\ \hline
D & $\begin{matrix}4 \\ 5\end{matrix}$ & $\begin{matrix}8 \\ 1\end{matrix}$ \\ 
\end{tabular} &
\begin{tabular}{c|c|c}
 & L & R \\ \hline
U & $\begin{matrix}7 \\ 1\end{matrix}$ & $\begin{matrix}1 \\ 7\end{matrix}$ \\ \hline
D & $\begin{matrix}3 \\ 5\end{matrix}$ & $\begin{matrix}8 \\ 0\end{matrix}$ \\ 
\end{tabular} \\ \hline
\textbf{Game 7} & \textbf{Game 8} & \textbf{Game 9} \\ \hline
\begin{tabular}{c|c|c}
 & L & R \\ \hline
U & $\begin{matrix}10 \\ 12\end{matrix}$ & $\begin{matrix}4 \\ 22\end{matrix}$ \\ \hline
D & $\begin{matrix}9 \\ 9\end{matrix}$ & $\begin{matrix}14 \\ 8\end{matrix}$ \\ 
\end{tabular} &
\begin{tabular}{c|c|c}
 & L & R \\ \hline
U & $\begin{matrix}9 \\ 7\end{matrix}$ & $\begin{matrix}3 \\ 16\end{matrix}$ \\ \hline
D & $\begin{matrix}6 \\ 7\end{matrix}$ & $\begin{matrix}11 \\ 5\end{matrix}$ \\ 
\end{tabular} &
\begin{tabular}{c|c|c}
 & L & R \\ \hline
U & $\begin{matrix}8 \\ 9\end{matrix}$ & $\begin{matrix}3 \\ 17\end{matrix}$ \\ \hline
D & $\begin{matrix}7 \\ 7\end{matrix}$ & $\begin{matrix}13 \\ 4\end{matrix}$ \\ 
\end{tabular} \\ \hline
\textbf{Game 10} & \textbf{Game 11} & \textbf{Game 12} \\ \hline
\begin{tabular}{c|c|c}
 & L & R \\ \hline
U & $\begin{matrix}7 \\ 6\end{matrix}$ & $\begin{matrix}2 \\ 13\end{matrix}$ \\ \hline
D & $\begin{matrix}5 \\ 6\end{matrix}$ & $\begin{matrix}11 \\ 2\end{matrix}$ \\ 
\end{tabular} &
\begin{tabular}{c|c|c}
 & L & R \\ \hline
U & $\begin{matrix}7 \\ 4\end{matrix}$ & $\begin{matrix}2 \\ 11\end{matrix}$ \\ \hline
D & $\begin{matrix}4 \\ 5\end{matrix}$ & $\begin{matrix}10 \\ 1\end{matrix}$ \\ 
\end{tabular} &
\begin{tabular}{c|c|c}
 & L & R \\ \hline
U & $\begin{matrix}7 \\ 3\end{matrix}$ & $\begin{matrix}3 \\ 9\end{matrix}$ \\ \hline
D & $\begin{matrix}3 \\ 5\end{matrix}$ & $\begin{matrix}10 \\ 0\end{matrix}$ \\ 
\end{tabular} \\ \hline
\end{tabular}
\caption{Payoff Matrices from \cite{selten2008stationary}. In each column, the number above (below) is the payoff for player 1 (player 2).}
\label{tab:payoff_matrices-SC}
\end{table}

\begin{table}[h!]
    \centering
    \begin{tabular}{c|c|c}
 & L & R \\ \hline
U & $\begin{matrix}200 \\ 160\end{matrix}$ & $\begin{matrix}160 \\ 10\end{matrix}$ \\ \hline
D & $\begin{matrix}370 \\ 200\end{matrix}$ & $\begin{matrix}10 \\ 370\end{matrix}$ \\ 
\end{tabular}
    \caption{Game 4 Payoff Matrix from \cite{goeree2003risk}. In each column, the number above (below) is the payoff for player 1 (player 2).}
    \label{tab:payoff_matrices-GHP}
\end{table}

\paragraph{Matching pennies matrix games:} Two players in matching pennies simultaneously choose heads or tails, and a player wins (the other player loses) a payoff if their choices \textit{match}. The detailed payoff matrices are provided in \cref{tab:payoff_matrices-GHP,tab:payoff_matrices-SC}. These games from \cite{goeree2003risk,selten2008stationary} are focused on showcasing risk-averse solutions by developing payoffs strategically such that either player deviating from their choice of plays will cause hefty damage in terms of payoff to the other player. So, both players prefer a `safer' choice of plays, thus highlighting both risk-averse and bounded rational preferences. 

\vspace{-0.5cm}
\paragraph{Algorithm and Result details:}
We use the penalty functions KL and reverse KL with the regularizers log barrier and negative entropy, as mentioned in \cref{tab:risk-measure}, for our experiments. To be precise, we consider the following two experimental setups: 
(a) For $j=1,2$, we let $\nu_j(p)=-\sum_i\log(p_i)$ and $D_j(p,q)=(1/\tau_j)\mathrm{KL}(p,q)$. 
(b) For $j=1,2$, we let $\nu_j(p)=\sum_i p_i\log(p_i)$ and $D_j(p,q)=(1/\tau_j)\mathrm{KL}(q,p)$. We note that the parameters $\tau_j^{-1}$ play the same role as $\xi_j$'s (formalized in \cref{cor:KL,cor:reverse-KL}), i.e., the players take more risk neutral decisions as $\tau_j\to0$. 
With the perfect information, we use the vanilla projected gradient descent with constant stepsizes \cite{beck2017first} as the no-regret algorithm to arrive at the CCE $\sigma$ of the four player game. 
We use consistent stepsizes between $10^{-4}$ and $10^{-3}$ for all our runs iterating through $10^4$ steps of gradient descent. We notice $2\%$ deviations in sup-norm for the algorithm policies in about 20 runs.
%In \cref{fig:kl-reversekl-pennies}, we showcase that for some choice of parameters ($\epsilon_j$ and $\tau_j$'s) satisfying conditions in \cref{thm:2player}, we recover risk-averse and rational human behaviors in various matching pennies games mentioned in \cref{tab:payoff_matrices-GHP,tab:payoff_matrices-SC}. 
%We provide our code with all the other details for reproducibility.

\subsection{Markov Games} \label{appen:cliff-l1}

\begin{figure}[ht]
	\centering
    \vspace{-0.5cm}
	\includegraphics[width=1.0\linewidth]{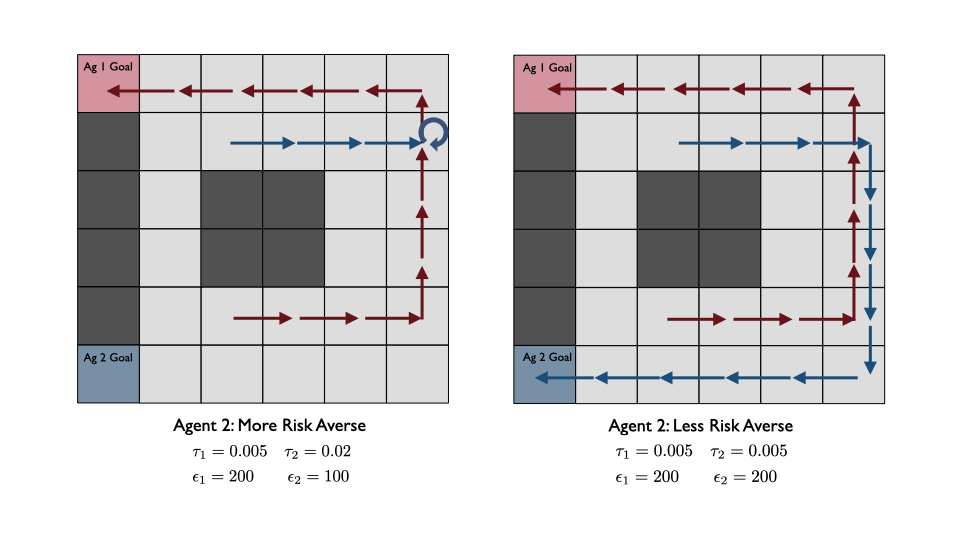}
    \vspace{-1cm}
	\caption{ Cliff-Walk results for the $\ell_1$ environmental uncertainty metric.
 }
	\label{fig:cliff-l1}
\end{figure}

%We again consider a grid-world problem to test our algorithm (Algorithm~\ref{alg:nash-ramg}) for showcasing risk-averse and bounded rational behaviors. In particular, 
Here, we showcase results corresponding to $\ell_1$ metric for the environmental uncertainty metric $D_{\mathsf{env},i}$ for the grid-world problem.
We consider the same \emph{Cliff-Walk} environment described in \cref{appen:exp:grid-world} under some minor modifications discussed here. The rewards for falling into the cliff is now $-100$ and for reaching the goal is $20$.
Agents/players are rewarded a small negative reward of $-0.1$ for taking each step.
The episode horizon $H$ is $100$.
The algorithm details is the same as described in \cref{appen:exp:grid-world}.

We present two results in \cref{fig:cliff-l1} that has similar implications as in \cref{fig:cliff}. We make one important observation about these agents equipped with $\ell_1$ environmental uncertainty metric. These agents are less risk-averse towards the cliff in horizontal grid axis compared to the results \cref{fig:cliff} corresponding to the KL environmental uncertainty metric. We do not investigate the effects of different environmental uncertainty metrics in this work and postpone to address in future research directions.

\end{document}